\newtheorem{theorem}{Theorem}[section]
\newtheorem{lemma}[theorem]{Lemma}
\newtheorem{corollary}[theorem]{Corollary}
\newtheorem{observation}[theorem]{Observation}
\newtheorem{definition}{Definition}
\newcommand{\eps}{\varepsilon}
\newcommand{\T}{\mathcal{T}}
\newcommand{\E}{\mathbb{E}}
\newcommand{\Eterm}{\mathcal{E}_{\text{term}}}
\newcommand{\good}{$\pi'$-aligned\xspace}
\newcommand{\Good}{$\pi'$-Aligned\xspace}
\newcommand{\EFNM}{unbounded probabilistic Erase--Flip noise model\xspace}
\author{
  Eden Fargion\\
  \texttt{eden.fargion@gmail.com}\\
  Bar-Ilan University
  \and
  Ran Gelles\\
  \texttt{ran.gelles@biu.ac.il}\\
  Bar-Ilan University
  \and
  Meghal Gupta\\
  \texttt{meghal@berkeley.edu}\\
  UC Berkeley
}
\title{\vspace{-2em}\textbf{Interactive Coding with Unbounded Noise}}
\date{}
\begin{document}

\maketitle

\begin{abstract}
Interactive coding allows two parties to conduct a distributed computation despite noise corrupting a certain fraction of their communication. Dani et al.\@ (Inf.\@ and Comp., 2018) suggested a novel setting in which the amount of noise is unbounded and can significantly exceed the length of the (noise-free) computation. 
While no solution is possible in the worst case, under the restriction of \emph{oblivious} noise, Dani et al.\@ designed a coding scheme that succeeds with a polynomially small failure probability.

We revisit the question of conducting computations under this harsh type of noise and devise a computationally-efficient coding scheme that guarantees the success of the computation, except with an \emph{exponentially} small probability. This higher degree of correctness matches the case of coding schemes with a bounded fraction of noise.

Our simulation of an $N$-bit noise-free computation in the presence of $T$ corruptions, communicates an optimal number of $O(N+T)$ bits and succeeds with probability $1-2^{-\Omega(N)}$.
We design this coding scheme by introducing an intermediary noise model, where an oblivious adversary can choose the locations of corruptions in a worst-case manner, but the effect of each corruption is random: the noise either flips the transmission with some probability or otherwise erases it. This randomized abstraction turns out to be instrumental in achieving an optimal coding scheme.
\end{abstract}

\newpage
\interfootnotelinepenalty=9000
\widowpenalty=8000
\clubpenalty=8000

\section{Introduction}
In many distributed systems nowadays, communication channels suffer various types of noise and interference that may corrupt information exchanged between devices. \emph{Interactive coding}, initiated by the seminal work of Schulman~\cite{schulman92,schulman96} (see also~\cite{gelles17}), allows two or more devices to correctly complete their computation despite channel noise, by adding only a moderate amount of redundancy to the computation. 
The capability of an interactive coding scheme usually depends on the specific type of noise it is designed to withstand. For instance, when the noise can flip bits, interactive coding schemes  withstand up to a fraction of~1/6 of flipped bits~\cite{EGH16,GZ22} (a fraction of~$1/4$ can be withstood over channels with larger alphabets~\cite{BR14}); when the noise erases bits (i.e., replaces a bit with a special erasure mark~$\bot$), then a fraction of~$1/2$ of bit erasures can be withstood~\cite{FGOS15,GZ22}, which also applies for larger alphabets~\cite{EGH16}. When messages can be inserted and deleted, the maximal corruption rate is again~$1/4$, see~\cite{BGMO17,SW19}.

In a recent work, Dani, Hayes, Movahedi, Saia, and Young~\cite{DHMSY18} suggested a different and interesting model for interactive coding in which the amount of noise is \emph{unbounded}. 
That is, the number~$T$ of corruptions that affects a given execution, can be arbitrary.
Note that this number~$T$ is unknown to the coding scheme; this is in contrast to the standard model of interactive coding, where a limit on the fraction of corrupted transmissions is known by all devices.
The scheme in~\cite{DHMSY18} correctly computes any two-party computation that takes $N$~rounds without noise, by communicating $N+ O(T+\sqrt{N(T+1)\log T})$ bits and succeeds with probability~$1-O(1/N\log N)$. 

In a nutshell, the idea of the scheme in~\cite{DHMSY18} is as follows. Every message sent between the parties contains the round number it corresponds to and a signature. A device checks that the signature is valid before processing a received message. If the signature does not check out, the device ignores that communication. 
The coding scheme tracks the progress of both parties via the added information of the round number, so that corrupted messages are re-transmitted until they arrive correctly at the other side. 

One significant drawback of the above approach, is that the noise might corrupt a message along with its signature so that the receiver believes that the signature is correct. This occurs with exponentially small probability in the length of the signature, which leads to the polynomially-small failure probability of the scheme. In other words, the scheme in~\cite{DHMSY18} assumes that the noise \emph{never creates a valid signature} and settles with a failure probability of magnitude~$1/N\log N$.

In this work we aim to achieve an interactive coding scheme that can withstand an unbounded amount of noise, yet, with failure probability exponentially small in~$N$, similar to most previous work on interactive coding (e.g.,~\cite{schulman96,GMS14,BKN14,GH14}). 
This effectively means that the coding scheme must cope with corrupted messages being processed by some device. That is, the coding scheme must be resilient to the event, that occurs with polynomially small probability in~$N$, where both the message and the signature are corrupted in a matching way.

Our main result is a coding scheme that is resilient to an arbitrary and a priori unknown number~$T$ of bit flips, 
with exponentially small failure probability.
\begin{theorem}[Main]\label{thm:main:realModel}
Given any two-party binary interactive protocol $\pi$ of length $N$, 
there exists an efficient randomized protocol $\Pi$ of length 
$O(N+T)$
that simulates $\pi$ 
with probability~$1-2^{-\Omega(N)}$
over a binary channel in the presence of an arbitrary and a priori unknown number~$T$ of corruptions. 
The noise is assumed to be independent of the parties' randomness.
\end{theorem}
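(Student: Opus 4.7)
The plan is to introduce and then exploit the \EFNM as an intermediary abstraction. I would split the proof into two stages: (i) construct an efficient interactive coding scheme that tolerates the Erase--Flip noise (adversarially placed corruptions with independent random effects), and (ii) reduce the real binary bit-flip channel to the Erase--Flip model via a randomized authenticated encoding. In this encoding, every transmitted symbol is wrapped with a short tag computed from the sender's private randomness; any corruption either mangles the tag and is then detected at the receiver as an erasure, or, with small probability $p$, accidentally passes verification and behaves as a random flip. Because the adversary is assumed independent of the parties' randomness, it cannot bias which corruptions become flips, so the induced noise at the inner layer is exactly of Erase--Flip type. Composing the two stages will yield the theorem.

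\textbf{Inner coding scheme for the Erase--Flip model.} In the inner model erasures are visible at the receiver. I would simulate $\pi$ round by round: in each iteration, a party transmits the next bit of its $\pi$-message together with a round counter; upon observing an erasure, the receiver requests retransmission and the sender resends. Handled naively this tolerates any number of erasures at an $O(1)$ additive cost per erasure, so the total length stays $O(N+T)$. The dangerous events are the corruptions that survive as undetected flips, numbering about $pT$ in expectation. To catch them, I would interleave the simulation with periodic consistency checks: every constant number of simulated rounds the parties exchange a short hash of their simulated transcript so far, and upon a mismatch they rewind to the last mutually verified checkpoint. The randomness of the flips ensures that each undetected flip passes the next hash check only with probability exponentially small in the hash length, so the rewinds occur in bounded quantity and can be charged against $T$.

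\textbf{Parameters and failure probability.} I would set the outer tag length $\ell$ so that each physical corruption survives verification with probability at most $p = 2^{-\Omega(\ell)}$; a constant $\ell$ keeps both $p$ and the per-symbol overhead $O(1)$, so the outer protocol length remains $O(N+T)$. I would then bound failure by decomposing into three events: (a) the number of surviving flips exceeds $2pT$, controlled by a Chernoff bound giving $2^{-\Omega(T)}$; (b) some hash check fails to detect a corrupted interval, absorbed by choosing a hash length $\Theta(\log(N+T))$ per local check together with a single global hash of length $\Theta(N)$ at the end of the simulation; (c) the retransmit/rewind subroutine desynchronizes, which should be ruled out by the self-correcting structure of the counters. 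Together these give the required $2^{-\Omega(N)}$ bound uniformly in $T$.

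\textbf{Main obstacle.} I expect the central technical difficulty to lie in designing the rewind/synchronization layer so that it is itself robust to the adversary, since the synchronization messages are subject to the same erasures and random flips as the payload; a naive ``exchange hashes and rewind'' approach can be indefinitely stalled by repeated erasure of the exchange. The remedy I would pursue is a self-adjusting sync phase in which a failed sync triggers a longer next attempt, so that the aggregate sync cost is charged to $T$ and the total communication remains $O(N+T)$. A second subtle point is obtaining failure $2^{-\Omega(N)}$ rather than merely $2^{-\Omega(T)}$: when $T$ is very small the Erase--Flip concentration is weak, and the argument must instead lean on the global hash of length $\Theta(N)$ to rule out any systematic chain of undetected flips with probability exponentially small in $N$.
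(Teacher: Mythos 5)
Your high-level plan correctly identifies the key abstraction the paper also introduces (the probabilistic Erase--Flip model as an intermediary between the raw flip channel and the coding scheme), but the inner construction you propose would not deliver $O(N+T)$ communication with $2^{-\Omega(N)}$ failure, and your own parameter choices reveal the tension.

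The concrete gap is in the interaction between the tag/hash lengths, the total communication, and the failure bound. You set the per-transmission tag length $\ell$ to a constant so that the per-symbol overhead is $O(1)$; but then $p = 2^{-\Omega(\ell)}$ is a constant, so the number of \emph{undetected} flips concentrates around $pT = \Theta(T)$, i.e.\ it is still unbounded. You then propose to catch these with periodic hash checks of length $\Theta(\log(N+T))$. However $T$ is unknown to the algorithm, so the parties cannot choose that length in advance; the natural fix (grow the hash length adaptively with the round index $i$, i.e.\ $\Theta(\log i)$) makes the scheme correct but blows up the communication to $O((N+T)\log(N+T))$. This is precisely the barrier the paper runs into with its \emph{first} construction (the code-concatenation scheme of Section~\ref{sec:UPEF-schemes-concat}), where the probabilities $p_i$ shrink with $i$ and the AMD tags grow logarithmically, and it is exactly why that scheme is only stated to achieve $O((N+T)\log(N+T))$ over the flip channel (Theorem~\ref{thm:main:concat:realModel}). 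Conversely, if you keep all hashes at a fixed length independent of $i$ and $T$, then each check is fooled with constant probability, and with $\Theta(T)$ checks the overall failure probability does not stay at $2^{-\Omega(N)}$ uniformly in $T$. A terminal global hash of length $\Theta(N)$ only \emph{detects} a wrong transcript at the very end; by itself it neither corrects the transcript nor gives the parties a way to re-run within an $O(N+T)$ budget, so it does not close this gap.

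The paper's actual route to the main theorem (Section~\ref{sec:iterative}) escapes this trap by \emph{not} relying on per-round authentication whose soundness must improve over time. It fixes $p_i = 2/3$ once and for all (so the UF-to-UPEF reduction costs only a constant factor, realized with a tiny random length-$5$ repetition-style code), and then runs an entire substitution-resilient interactive coding scheme as a black box in geometrically growing phases $L_i = |\pi'|\,2^i$. After each phase the parties estimate the noise level simply by \emph{counting the erasures they observed}, which is a reliable estimator precisely because the erasure probability per corruption is a fixed constant; they terminate only once this estimate is low enough to certify, via the inner scheme's worst-case resilience, that the output is correct. Termination coordination is handled by the success/error string and by forcing Alice's transmissions to contain a constant fraction of ones, not by hashing. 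None of these lengths needs to grow with the (unknown) number of corruptions, which is what makes $O(N+T)$ achievable. Your proposal also leaves the rewind/desynchronization subtleties you flagged essentially unresolved; the iterative approach avoids them entirely because each phase is a self-contained, fixed-length execution with no cross-phase state to keep in sync.
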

We note that the scheme assumes  \emph{oblivious noise} in the sense that the $T$ corrupted transmissions are selected at the beginning of the computation (as a function of the coding scheme and the parties' inputs) and is independent of the parties' (private) randomness. 
This assumption is crucial, as no coding scheme  withstands an unbounded amount of noise that is non-oblivious~\cite[Theorem~6.1]{DHMSY18}.

\subsection{Techniques}
\paragraph{Towards an optimal scheme: the code concatenation approach.}
The immediate approach towards an improved coding scheme for an unbounded amount of corruptions is of \emph{code concatenation}, namely composing two layers of interactive code. The inner layer would be responsible for transmitting bits over the channel despite the unbounded amount of noise (e.g.,~\cite{DHMSY18,GI19}). The outer layer would then ``see" only a limited amount of noise (which passes the inner layer with polynomially-small probability) and perform a standard interactive coding (e.g.,~\cite{schulman92,schulman96,BR14,haeupler14,GH14,GHKRW18,GZ22}) using these bits. 

Unfortunately, such an approach faces severe difficulties. For the inner layer, the scheme of Dani et al.~\cite{DHMSY18} assumes that no corrupted message is accepted by a party. That is, a message can either be correct or marked as invalid. Requiring the parties to process incorrect messages might cause their inner state to differ in a way that could not be recovered  
by the scheme. That is, a single corrupted message (that is believed to be correct by one of the parties) might cause the parties to ``lose sync'', so that the parties do not agree anymore on when the next phase of the scheme begins and ends, or whether the scheme has terminated or not. 
The scheme will not recover from this fault because the synchronization information would be sent by one party at certain rounds but expected by the other party at different rounds.
The other option for the inner layer is the scheme by Gelles and Iyer~\cite{GI19} designed to withstand an unbounded amount of \emph{erasures}, and thus, on its surface, does not fit our purpose.

\paragraph{The randomized erasure--flip model}
The key towards solving the above conundrum stems from defining a new random noise model that we name the \emph{\EFNM} (UPEF). 
This model (formally defined in Section~\ref{sec:upef-model:prelim}) 
still allows an unbounded number of corruptions determined by the adversary in an oblivious way. However, when the $i$-th transmission is corrupted by the adversary, the \emph{effect} of the corruption is random: the transmitted bit is flipped with some probability $p_i$ or erased with the complementary probability. The probabilities $\{p_i\}_{i\in \mathbb{N}}$ are parameters of the model and can be determined by the algorithm's designer.
In a sense, this type of noise matches the effect oblivious noise has on messages that are protected with a signature: with some probability the corruption is detected (the signature does not verify) and the message is marked as corrupted, i.e., erased. On the other hand, with some small probability the corruption is such that the signature verifies the corrupted message; in this case we have a flip.
The probabilities are determined by the length of the signatures in use.\footnote{We use AMD codes~\cite{Cra08} to generate signatures, 
see Appendix~\ref{apdx:uf}  for the exact details.}

This novel randomized model is much simpler to handle, and facilitates the design and analysis of optimal coding schemes. 
Furthermore, any scheme designed for this model can be translated into a coding scheme that works in the standard unbounded flips (UF) noise model, by employing signatures of respective length to match the erasure--flip probabilities of the UPEF model. Therefore, this model serves as a crucial tool for obtaining optimal coding schemes in the standard model.

Switching to the UPEF model allows us to use the scheme in~\cite{GI19} as the inner code of our concatenated coding scheme, in an almost as-is fashion: 
by smartly setting the probabilities~$\{p_i\}_{i\in \mathbb{N}}$, 
we can guarantee, with very high probability, that any execution experiences an unbounded number of erasures but only a bounded number of bit-flips.
The scheme in~\cite{GI19} withstands the erasures and delivers the non-erased bits (either correct or not)  to the outer layer, which should be able to cope with this limited amount of bit flips.

One problem still remains, but to explain it, we must first explain how the \cite{GI19}~scheme works. 
In a nutshell, the parties simulate the underlying (noiseless) protocol~$\pi$ bit-by-bit, where the scheme adds to each bit the parity (mod 2) of the corresponding round number. The scheme works in \emph{challenge-response}--style iterations: The first party (Alice) begins by sending the bit of the next round of~$\pi$ along with the parity of that round (\emph{the challenge}). 
Bob receives this bit, and if the parity corresponds to the round number he expects, he records this bit and replies with the next bit of~$\pi$ along with the parity of that round in~$\pi$ (\emph{the response}). When this reply reaches Alice, and the parity is correct, Alice records the bit from Bob and moves on to the next iteration. 
In any case of erasures or if the parity mismatches, the receiver ignores the received message. 
The analysis in~\cite{GI19} shows that this single parity bit suffices to keep track of the progress despite an unbounded amount of erasures.

However, in the UPEF model, a bit flip can either corrupt the content bit (i.e., the next simulated round of~$\pi$) or the parity bit sent along! 
Corrupting the parity bit damages the correctness of the \cite{GI19}~scheme, but this is the only way the noise can affect correctness.
Throughout a detailed case analysis, we prove that corrupting the parity bit has the sole effect of making the parties 
\emph{out-of-sync}, in the sense that one party advances to the next round in~$\pi$, while the other does not.
Luckily, this type of out-of-sync 
corruption was already considered in the interactive-coding community, initiated by the work of Braverman, Gelles, Mao, and Ostrovsky~\cite{BGMO17}, which presented a non-efficient scheme that withstands a noise level of up to $1/18$ fraction of the rounds, where ``noise'' here means insertions and deletions producing out-of-sync events as described above. 
That work was followed by a work by Sherstov and Wu~\cite{SW19}, who showed that a variant of the \cite{BGMO17}~scheme withstands the optimal level of noise, namely, up to $1/4$ of the rounds,
and by a work by Haeupler, Shahrasbi, and Vitercik~\cite{HSV18}, who presented an \emph{efficient} scheme, based on synchronization strings, with noise resilience of~$1/44$.
\looseness=-1

Therefore, we can set the scheme in~\cite{HSV18} (denoted HSV hereinafter) as the outer layer in our construction, and set the probabilities~$\{p_i\}_{i\in \mathbb{N}}$ such that the total number of insertion and deletion errors
will not surpass the threshold expected by the HSV scheme, except with an exponentially small probability.
This construction achieves our goal of obtaining a coding scheme in the UPEF model, with optimal length of $O(N+T)$, and an exponentially small failure probability. 

\smallskip

Unfortunately, once converting this optimal UPEF scheme back to the standard UF model, the overhead increases severely. 
In particular, the way we set the probabilities $\{p_i\}_{i\in \mathbb{N}}$ implies a logarithmic overhead on the size of the signatures, leading to a sub-optimal scheme of length $O((N+T)\log(N+T))$ in the UF model.
To avoid this increase in communication, we must maintain the probabilities $\{p_i\}_{i\in \mathbb{N}}$ ``large'', and design a new scheme that is still optimal in the UPEF model despite the high values of~$\{p_i\}_{i\in \mathbb{N}}$.

\paragraph{Obtaining an optimal scheme: the iterative approach.}
In order to obtain a UF-optimal scheme, we take a different approach, 
namely, we execute an increasing-length instances of a ``standard'' interactive coding~\cite{DHMSY18}.
As before, we start by constructing a coding scheme over the UPEF model. Our goal now is to maximize the $p_i$'s as much as possible.
The main idea is as follows. Let's fix each $p_i$ to some constant, say,~$2/3$. Now, any corruption in the UPEF model will cause an erasure with a fixed probability of~$1/3$. The number of erasures a party observes is a good estimate of the level of noise during the same transmissions. Hence, the parties can continue running the scheme again and again, until they believe the noise level was low enough to produce the correct output.

In more detail,
Alice and Bob run an efficient interactive coding scheme  resilient to a constant fraction of adversarial flips (e.g., \cite{BKN14,GH14}). After executing the scheme, Alice and Bob count the number of erasures observed during the execution and estimate (with high probability) the fraction of corruption they experienced. They communicate this estimate to each other, and decide how to continue accordingly.
If the noise level seems sufficiently low, the resilient scheme must have produced the correct output, and the parties can safely terminate. 
Otherwise, the parties re-run the interactive scheme, doubling its length. They repeat this action until they reach an execution where the noise level is low enough to guarantee the success of the underlying interactive coding scheme. 

With a correct choice of parameters, this results in a UPEF scheme of length $O(N+T)$. However, since all $\{p_i\}_{i\in \mathbb{N}}$ are fixed to a constant, once we translate this scheme into a UF scheme, we keep its length up to a constant and obtain an optimal length of $O(N+T)$ in this case as well.

We note that a communication complexity of~$\Theta(N+T)$ is tight for the UF model. 
A lower bound of~$\Omega(N+T)$ is immediate by considering the case where the adversary corrupts the entirety of the communication between Alice and Bob for $\Theta(T)$ rounds, e.g., by flipping each bit with probability 1/2, thereby not allowing any information to cross the channel during these rounds. After this corruption, $N$ rounds are still needed to complete the protocol without noise.

\subsection{Related Work}
As mentioned above, the field of interactive coding was initiated by the work of Schulman~\cite{schulman92,schulman96}. Following this work, many two-party interactive coding schemes were developed, with the goal to optimize various properties, such as  efficiency, communication rate, and noise resilience~\cite{BR14,GMS14,BKN14,KR13,haeupler14,GH15,BE17,GHKRW18,GZ22}. 
Two-party coding schemes for different types of noise, such as erasures or insertions and deletions, appeared in~\cite{FGOS15,GH15,EGH16,BGMO17,HSV18,SW19,EHK20,GZ22}. See~\cite{gelles17} for an extensive survey on this field.

Closest to our work are coding schemes that withstand an unbounded amount of corruption. 
As mentioned above,
Dani et al.~\cite{DHMSY18} developed a randomized scheme that deals with an unbounded amount~$T$ of oblivious bit-flips,  succeeds with high probability, and simulates any $\pi$ of length~$N$ in~$N+ O(T+\sqrt{N(T+1)\log T})$ rounds. 
Gelles and Iyer~\cite{GI19} developed a deterministic scheme that deals with an unbounded amount~$T$ of (not necessarily oblivious) erasures in at most $2N+4T$ communication rounds. 
For the multiparty setting, Aggarwal, Dani, Hayes, and Saia~\cite{ADHS20} developed a coding scheme that correctly simulates any protocol over an arbitrary network withstanding an unbounded amount of oblivious corruptions in $\tilde O(N+T)$ rounds, suppressing logarithmic terms.

\subsection{Paper Outline}
\label{sec:outline}
We set up the UPEF and UF models, recall the insertion-deletion model, and review interactive coding protocols in Section~\ref{sec:prelim}.
In Section~\ref{sec:UPEF-schemes-concat} we describe an optimal  UPEF coding scheme that follows a code concatenation approach. Its analysis is deferred to Appendix~\ref{apdx:analysis}. Finally, in Section~\ref{sec:iterative} we describe an optimal coding scheme in the UPEF model that follows an iterative approach. We then show how to translate it into an optimal UF coding scheme.

\section{Preliminaries}\label{sec:prelim}
\paragraph{Notations.}
For an integer $n\in \mathbb{N}$ we use $[n]=\{1,2,3,\ldots, n\}$. All logarithms are taken to base~2 unless otherwise mentioned. For two strings $a,b$ we denote by $a\circ b$ their concatenation. We will use $\bigcirc_{k=1,2,...,\ell}\ a_k \triangleq a_1 \circ a_2 \circ \cdots \circ a_\ell$ to  abbreviate the concatenation of multiple strings. 
We use $O_\eps(\cdot), \Theta_\eps(\cdot)$, etc., to explicitly remind that the constant inside the $O(\cdot)$ may depend on (the constant)~$\eps$.

\subsection{Interactive Protocols and Coding Schemes}
Consider two parties, Alice and Bob, having inputs $x, y \in \{0, 1\}^k$ respectively, who wish
to compute some function~$f(x, y)$ by communicating over a channel with alphabet~$\Sigma$. 
Towards that goal, Alice and Bob use an \emph{interactive protocol} composed of two algorithms $\pi = (\pi_a, \pi_b)$ for Alice and Bob, respectively.
These algorithms assume a common clock known by both parties (i.e., the protocol is synchronized) and  determine, for each party in each round (timestep), whether the party 
(1) has to send a message in that round, 
(2) which symbol the party sends, and (3) if the party should terminate in that round and which output should it give.

Each party records all the messages it receives during the execution of the protocol. The collection of these records is the party's \emph{transcript}. 
We assume that $\pi$ has \emph{fixed order of speaking}; this means that in each round exactly one party is transmitting a symbol (the other party listens), and the identity of the transmitting party in a given round is predetermined and independent of the parties' inputs. 
In particular, a protocol in which Alice speaks in odd rounds, and Bob speaks in even rounds is said to be of an \emph{alternating order}. 
Note that if $\pi$ is not alternating, then it can be converted to an alternating-order protocol while increasing the communication complexity by a factor of at most 2. 
We say that a protocol is \emph{$k$-alternating}, for some $k\in\mathbb{N}$,
if during its execution each party transmits bulks of $k$ bits.
The \emph{length} of a protocol is defined to be the number of rounds it includes until both parties have terminated.

\paragraph{Noisy channels and coding schemes.}
Now, assume that the parties are connected by a \emph{noisy channel}. Formally,  given an input and output alphabets $\Sigma_{in}, \Sigma_{out}$, respectively, a single utilization of a noisy channel is the (possibly randomized) function $C: \Sigma_{in} \rightarrow \Sigma_{out}$. 

We can now discuss protocols that perform over noisy channels. 
We say that a protocol $\pi'$ \emph{simulates}~$\pi$ over the noisy channel~$C$, if for any inputs $(x,y)$, after executing $\pi'$ over the noisy channel~$C$, 
the parties can output their transcripts in an execution of~$\pi$ over a noiseless channel with inputs~$(x,y)$.
When the channel noise or the algorithm~$\pi'$ are probabilistic, we say that $\pi'$ simulates $\pi$ \emph{with probability~$p$} if the probability that the parties' output equal the transcript of~$\pi$ is at least~$p$, for any inputs pair. 

\medskip
A \emph{coding scheme} (for some given noisy channel~$C$) is a function $CS$, whose input is a noiseless protocol~$\pi$, and its output is a protocol $\pi' = CS(\pi)$ which simulates $\pi$ over the channel~$C$. 
When the channel noise or the scheme are probabilistic, we say that the coding scheme has success probability~$p$, if for any $\pi$, the protocol $\pi' = CS(\pi)$ simulates $\pi$ with probability~$p$.

\subsection{Noise Models}
As alluded to in the introduction, our scheme is designed to withstand an unbounded amount of (oblivious) bit flips. However, we design the scheme by reducing the unbounded-flip model to a different noise model with unbounded probabilistic erasures and flips. 
Furthermore, the effect of probabilistic erasures and flips noise on the inner layer of our coding scheme is such that the outer layer ``sees'' insertion and deletion noise. 
We will now define these three noise models in turn.

\paragraph{The Unbounded Flip  noise model (UF)}
\label{sec:uf-model:prelim}
Our main noise model is the unbounded flip noise model, set forth by Dani et al.~\cite{DHMSY18}. Given a specific execution of $\pi'$ with inputs~$(x,y)$, the adversary sets a \emph{noise corruption pattern} $E \subset \mathbb{N}$ such that the amount of noise, $|E|$, satisfies $|E|=T$ for some number $T\in\mathbb{N}$ decided by the adversary. The noise pattern can be set as a function of $\pi,x,y$ but is independent of any randomness the parties might have (i.e., an \emph{oblivious} noise). The noise pattern~$E$ determines which bits get flipped during the execution of~$\pi$. Namely, if $i \in E$, then the $i$-th transmitted bit in~$\pi$ will be flipped. Otherwise, the bit goes through uncorrupted. Note that $T$ might be arbitrary. When one of the parties terminates, the channel sends zeros to the another party, which may be flipped by the adversary.

\paragraph{The Unbounded Probabilistic Erasure-Flip noise model (UPEF)}
\label{sec:upef-model:prelim}
Our coding scheme in this work is designed and analyzed within the following noise model, that combines both erasure and flip noise. This model naturally appears when executing a protocol in the UF model while each message contains a (probabilistic) signature or a message authentication tag that indicates its validity.

In this model, the parties are connected via a noisy communication channel $C: \{0,1\} \to  \{0,1,\bot\}$, which can either flip bits or erase them (denoted by the erasure mark~$\bot$).
Similar to the UF model, 
given any specific execution of~$\pi'$ with inputs~$(x,y)$, the rounds which are corrupted are predetermined by an adversary that knows $\pi',x,y$ and the inputs but not the parties' private randomness.
This corruption is described via the noise pattern $E \subset \mathbb{N}$, where $i \in \mathbb{N}$ means that the $i$-th round is corrupted; otherwise, the bit arrives at the other side intact. 
When a round is corrupted, the effect is as follows:
the bit is flipped with some probability~$p_i$ or is erased
with probability~$1-p_i$. The probabilities $\{p_i\}_{i\in \mathbb{N}}$ are parameters of the model and will be specified later.

Terminating in the UPEF is different from terminating in the UF. When Alice terminates, the channel transmits a special ``silence'' symbol, namely, `$\square$'. Upon the reception of this special symbol, Bob knows that Alice has quit, and terminates as well.

Similar to the UF model, we restrict the discussion to noise patterns in which the total number of corrupted rounds is finite. That is, there exists some number $T\in\mathbb{N}$, unknown to the parties and~$\pi$, such that~$|E|=T$.

\paragraph{The Insertion-Deletion noise model}
\label{sec:id-model}
The insertion-deletion noise model~\cite{BGMO17}, which we briefly describe here, is important for our analysis of the concatenated coding scheme.

In this model we consider  \emph{alternating} interactive protocols~$\pi'$, where no common clock is assumed by the parties. 
Instead, Alice sends the first symbol (round 1), and Bob is idle until receiving this symbol. Once the first symbol is obtained by Bob he transmits a symbol back to Alice (round 2). Alice will execute round~3 once receiving this symbol, and so on.
The noise is allowed to either corrupt a symbol (i.e., the receiver will obtain a different symbol from the one sent, \emph{a substitution}), or to completely delete the symbol, so that the receiver receives nothing. In the latter case, the protocol is ``stuck'' as both parties await an incoming symbol to proceed. 
To avoid getting stuck, the noise must inject a new symbol towards the sender of the symbol that got deleted. This causes the parties to get \emph{out of sync}, that is, one of them will believe the current round is $i$, while the other will believe the current round is $i+2$.
See also~\cite{BGMO17,SW19,HSV18,EHK20,GKR21,GKR22}.

In \cite{HSV18}, the authors give an efficient constant-rate coding scheme for insertions and deletion noise, which we will use in our construction. 
\begin{theorem}[Theorem 1.2 in \cite{HSV18}]
\label{thm:HSV18}
For any alternating protocol~$\pi$ of length $n$ and for any $\eps>0$, there exists an efficient randomized protocol~$\pi'$ simulating $\pi$ in presence of $\delta = 1/44-\eps$ fraction of edit-corruptions, whose length is $\Theta_\eps(n)$ and succeeds with probability $1 - 2^{-\Theta(n)}$. The alphabet size of $\pi'$ is~$\Theta_\eps(1)$.
\end{theorem}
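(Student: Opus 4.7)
The plan is to prove Theorem~\ref{thm:HSV18} via the \emph{synchronization strings} framework, which reduces insertion–deletion noise to ordinary substitution/erasure noise and then invokes a standard interactive coding scheme as a black box. The argument has two layers: an inner substitution-tolerant coding scheme, and an outer synchronization string layer that protects against edit errors.

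For the inner layer, I would start with a known efficient constant-rate interactive coding scheme $\pi''$ simulating $\pi$ over a channel with bit-substitution/erasure noise, tolerating some constant fraction $\gamma_0$ close to $1/4$ of corruption (e.g., the efficient scheme of Ghaffari–Haeupler or Brakerski–Kalai–Naor). This gives length $\Theta(n)$ with alphabet of size $O_\eps(1)$. For the outer layer, I would attach to each transmission of $\pi''$ the next symbol of an $\eps$-synchronization string $S$ of length $\Theta_\eps(n)$ over a constant-size alphabet. The defining property of $S$ is that for any corrupted version of $S$ subjected to a fraction $\delta$ of insertions and deletions, an efficient relative suffix-distance decoder can recover, for each received symbol, its index in $S$, with the total number of misdecoded indices bounded by a constant multiple of the number of edit errors. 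Existence and efficient construction of such strings with near-optimal parameters is the central technical contribution of \cite{HSV18}, obtained by an explicit construction based on repeated pseudorandom sampling and a careful amortized analysis of the decoder.

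Combining the layers, the receiver uses the sync-string decoder to tag each received symbol with a putative round index of $\pi''$; symbols whose indices are decoded incorrectly are treated as erasures/substitutions at the level of $\pi''$. The analysis then boils down to a double accounting: every edit error introduced by the adversary can cause at most a bounded number (roughly $11$ in the HSV18 accounting) of substitution/erasure errors seen by $\pi''$. Therefore, if the adversary injects a $\delta = 1/44 - \eps$ fraction of edit errors, $\pi''$ sees at most $(11+o(1))\delta < \gamma_0 - \Omega(\eps)$ fraction of substitution/erasure errors, and hence correctly simulates $\pi$. The probability of failure comes solely from the pseudorandom construction of $S$ and is $2^{-\Theta(n)}$, as required.

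The main obstacle is the synchronization string construction and its decoding analysis: one must simultaneously achieve (i)~constant alphabet size depending only on $\eps$, (ii)~polynomial-time encoding and decoding, and (iii)~a tight blowup constant between edit errors and induced substitution errors, since it is precisely this blowup that determines the $1/44$ threshold. Everything else — padding $\pi$ into alternating form, composing with the inner code, and the union bound over the random sync string — is routine once the sync-string machinery is in place.
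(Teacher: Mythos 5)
This statement is \emph{not proved in the paper at all}: it is Theorem~1.2 of Haeupler, Shahrasbi, and Vitercik~\cite{HSV18}, imported verbatim as a black box and used as the outer layer of the concatenated construction in Section~\ref{sec:UPEF-schemes-concat}. There is therefore no ``paper's own proof'' to compare your argument against, and you should not have attempted to re-derive it from first principles here; a citation is all that is required.

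That said, your sketch is a broadly accurate high-level account of how \cite{HSV18} actually proves the theorem: synchronization strings convert insertion--deletion corruptions into a bounded fraction of substitution/erasure corruptions as seen by an inner interactive coding scheme, and a constant blowup factor between the two noise budgets yields the $1/44$ threshold from an inner tolerance near $1/4$. Two small cautions if you ever did need to flesh this out. First, the claim that the failure probability ``comes solely from the pseudorandom construction of $S$'' is not right: synchronization strings in \cite{HSV18} admit efficient deterministic (or derandomizable) constructions, and the $2^{-\Theta(n)}$ failure probability is inherited from the randomized inner interactive coding scheme, not from $S$. Second, the specific inner schemes you name (Ghaffari--Haeupler, Brakerski--Kalai--Naor) do not by themselves give both efficiency, near-$1/4$ resilience over a constant alphabet, and exponentially small failure probability simultaneously; \cite{HSV18} has to be more careful about which inner code is plugged in and how the sync-string indexing interacts with it. For the purposes of the present paper, however, none of this matters: the theorem is an external result that the authors invoke, not prove.
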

We assume that at its termination, $\pi'$ has an output, which equals to the output  of $\pi$ (under the conditions in the theorem).


\section{A UPEF-optimal coding scheme via code concatenation}
\label{sec:UPEF-schemes-concat}
In this section we give an optimal UPEF coding scheme, based on code-concatenation approach (Algorithms~\ref{alg:schemeALICE} and~\ref{alg:schemeBOB}).
The analysis of the scheme, presented in the appendix, proves the following Theorem.
\begin{theorem}\label{thm:main}
Given any two-party binary interactive protocol $\pi$ of length $N$, there exists some constant $C$ and an efficient randomized protocol $\Pi$ of length $O(N+T)$ that simulates $\pi$ 
with probability~$1-2^{-\Omega(N)}$
over a binary channel in the presence of an arbitrary and a priori unknown number $T$ of probabilistic Erasure--Flip corruptions, with $p_i=\min\left\{\frac{CN}{i^2}, \frac{1}{2}\right\}$.
\end{theorem}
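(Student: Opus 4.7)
The plan is to prove Theorem~\ref{thm:main} by the code-concatenation construction sketched in the introduction. The outer layer applies the HSV scheme (Theorem~\ref{thm:HSV18}) to $\pi$ with a constant $\eps$, producing an alternating protocol $\pi^{\text{out}}$ of length $m = \Theta_\eps(N)$ over a constant-size alphabet that tolerates a $\delta = 1/44 - \eps$ fraction of edit corruptions. The inner layer encodes each symbol of $\pi^{\text{out}}$ bit-by-bit via a GI19-style challenge--response exchange in which the sender appends to every transmitted bit the parity of the outer-round index it is trying to advance; the receiver accepts the bit only when this parity matches the parity it expects, and otherwise silently waits for a retransmission.

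The first step is to re-establish a \cite{GI19}-style correctness lemma for this inner layer: if no flips occur and only erasures corrupt the UPEF channel, then after a number of inner rounds that is $O(1) \cdot (\text{outer rounds completed} + \text{erasures})$ the two parties' recorded outer transcripts agree with the noiseless run of $\pi^{\text{out}}$. This follows by revisiting the argument of~\cite{GI19} in the bit-and-parity format we use, and reduces to a short case analysis on whose turn it is to challenge and whose turn it is to respond.

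The main technical step is to analyze the effect of a single flip. By walking through the sub-cases (content bit vs.\ parity bit; challenger's transmission vs.\ responder's reply) I would show that each inner-layer flip has one of two effects at the outer layer: either (i) exactly one party advances its outer-round counter while the other does not, which appears to $\pi^{\text{out}}$ as an insertion/deletion event, or (ii) both advance but the symbol recorded by one party differs from the symbol sent, which appears as a substitution. In either case, a single flip contributes at most one edit corruption to the outer execution, and the parties never cascade into more than one such edit per flip. This case analysis is the main obstacle: one must rule out pathologies in which a single flip triggers a chain of further out-of-sync events before re-synchronization occurs, which is delicate because the parity-mod-$2$ checksum is intentionally minimal.

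It then remains to quantify. Regardless of the corruption pattern $E$ chosen by the adversary, the total number of flips during the execution is stochastically dominated by $\sum_{i \geq 1}\mathrm{Ber}(p_i)$ with $p_i = \min\{CN/i^2, 1/2\}$, whose mean is bounded by $\sum_{i \leq \sqrt{2CN}}\tfrac{1}{2} + \sum_{i > \sqrt{2CN}} CN/i^2 = O(\sqrt{N})$. A standard Chernoff bound then shows that the number of flips exceeds $\delta m = \Theta(N)$ only with probability $2^{-\Omega(N)}$. On the complementary event, the outer layer experiences at most a $\delta$-fraction of edit corruptions and, by Theorem~\ref{thm:HSV18}, recovers the transcript of $\pi$ except with probability $2^{-\Omega(m)} = 2^{-\Omega(N)}$. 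The total inner-layer length is the cost of completing $m$ outer rounds plus a constant overhead per corrupted round, i.e.\ $O(N) + O(T) = O(N+T)$ as claimed. A union bound over the flip-count event and the HSV failure event yields the overall exponentially small failure probability.
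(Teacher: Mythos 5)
Your high-level plan is the same as the paper's: use Theorem~\ref{thm:HSV18} as the outer layer with some constant $\eps$ and a GI19-style parity-bit challenge--response as the inner layer, then argue that flips are rare enough (by a Chernoff bound driven by $\sum_i p_i$) that the outer protocol sees at most a tolerable fraction of edit corruptions. Your Chernoff-side estimate is in fact slightly sharper than the paper's loose bound ($O(\sqrt{CN})$ rather than $O(CN)$), and the $O(N+T)$ length claim also goes through, so those parts are fine.

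The genuine gap is the sentence ``a single flip contributes at most one edit corruption to the outer execution, and the parties never cascade into more than one such edit per flip.'' This is exactly the step you acknowledge is ``the main obstacle,'' and the claimed $1$:$1$ ratio is false. The paper's Appendix~\ref{apdx:analysis} builds an entire iterations~$\to$~sequences~$\to$~segments~$\to$~frames decomposition to prove the correct bound, which is (up to lower-order terms) $2$:$1$: a frame with $f$ flips corresponds to a matching insertion--deletion execution of $\pi'$ with at most $2f$ edit corruptions (Theorem~\ref{thm:frame}, via Lemmas~\ref{lem:segment:flips}--\ref{lem:reduction:last}). Here is a concrete counterexample to your $1$:$1$ claim. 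Start with $r_a = r_b$. Iteration $1$: erase only $r_b$ --- Bob inserts, zero flips, and now $r_a \ne r_b$. Iteration $2$: flip only $r_a$ --- Bob inserts again, one flip, and $r_a = r_b$ again. Iteration $3$: erase only $r_b$ --- Bob inserts, zero flips, $r_a \ne r_b$. Iteration $4$: no corruption --- Alice inserts, $r_a = r_b$. One flip total, but Bob has advanced three outer rounds while Alice has advanced one; the matching $\pi'$ execution has two out-of-sync edit corruptions. The subtlety you elided is that Bob-side insertions are cheap: they can be triggered by mere erasures (which are unbounded), and erasure-triggered insertions would normally ``cancel'' against a later Alice insertion with the same content, but a single parity flip in between decouples that cancellation and commits the phantom progress. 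This is why the paper instantiates HSV with $\eps = 1/1980$ to get $\delta = 1/45$, and then requires the flip count to be below $N'/90$ --- a factor-of-two slack against $\delta N'$. Your version, which allows flips up to $\delta m$, could yield up to $2\delta m$ edit corruptions and overwhelm the HSV budget. The repair is only a constant-factor adjustment of the Chernoff threshold, but the factor-of-two lemma itself is the technical heart of the proof and requires the full segment/frame machinery (or an equivalent careful accounting) rather than the one-flip-one-edit claim you state without proof.
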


Let $\pi$ be a binary alternating protocol that assumes noiseless channels of length $|\pi|=N$. Our goal is to simulate $\pi$ in the UPEF model.
Let~$\pi'$ be the randomized protocol obtained from~$\pi$ via Theorem~\ref{thm:HSV18}, by setting $\delta=1/45$ (i.e., $\eps=1/1980$). We denote $|\pi'| = N'$.
Denote the alphabet of $\pi'$ by $\Sigma'$ and note that its size is constant, $|\Sigma'|=\Theta(1)$.

Our
coding scheme (Algorithms~\ref{alg:schemeALICE} and~\ref{alg:schemeBOB}) simulates the communication of~$\pi'$, symbol by symbol. As the channel in the \EFNM is binary, the parties communicate the binary representations of the symbols in $\pi'$. Therefore, during each iteration of the simulation, Alice sends a symbol to Bob using $\log|\Sigma'|$ bit transmissions, and expects a symbol reply from Bob.

\begin{algorithm}[ht]
\caption{Simulation over Erasure and Substitution Channel with Unbounded Noise (Alice)}
\label{alg:schemeALICE}
\algrenewcommand{\alglinenumber}[1]{\scriptsize A.{#1}}
    \makeatletter
	\renewcommand{\theALG@line}{A.\arabic{ALG@line}}
	\makeatother
	\begin{algorithmic}[1]
            \Statex \small \textbf{Input:} {An alternating binary protocol $\pi$ of length $N$,  an input $x$}
            \Statex \small \textbf{Initialize:} {Let $\pi'$ be the protocol simulating $\pi$ given by Theorem~\ref{thm:HSV18}, setting $\eps = 1/1980$. Let $N'=|\pi'|$ and assume $\Sigma'$ (the alphabet of $\pi'$) is a power of two.}
        \Statex
	    \State{\(\T_{a} \leftarrow \emptyset, r_a \leftarrow 0\)}
	    \While{\(r_a < \frac{N'}{2}\)} \label{line:alice:while}
	        \State{\texttt{// Send Message}}
            \State{\(r_a \leftarrow r_a + 1\)} \label{line:alice:incRA}
            \State{\(m_{send} \leftarrow \pi'(x \mid \T_a)\)} \label{line:alice:compute}
            \State{\(\T_a \leftarrow \T_a \circ m_{send}\)} \label{line:alice:insertAlice}
            \State{send \((m_{send}, r_a \mod 2)\)}   \Comment{$k$ transmissions }\label{line:alice:send}
            \State{}
            \State{\texttt{// Receive Message}}
            \State{receive \(m' = (m_{rec}, r_{rec})\)}  \Comment{$k$ transmissions }\label{line:alice:receive}
            \If{\(m'\) does not contain \(\perp\) and \(r_{rec} = r_a \mod 2\)} \label{line:alice:if}
                \State{\(\T_a \leftarrow \T_a\circ m_{rec}\)} \label{line:alice:insertBob}
            \Else
                \State{delete the last symbol of \(\T_a\)} \label{line:alice:delete}
                \State{\(r_a \leftarrow r_a - 1\)} \label{line:alice:decRA}
            \EndIf
	    \EndWhile   \label{line:alice:end-while}
            \State{Output the output given by $\pi'$}   \label{line:alice:output}
	\end{algorithmic}
\end{algorithm}


\begin{algorithm}[ht]
\caption{Simulation over Erasure and Substitution Channel with Unbounded Noise (Bob)}
\label{alg:schemeBOB}
	\begin{algorithmic}[1]
	\algrenewcommand{\alglinenumber}[1]{\scriptsize B.{#1}}
	\makeatletter
	\renewcommand{\theALG@line}{B.\arabic{ALG@line}}
	\makeatother
            \Statex \small {\textbf{Input:} An alternating binary protocol $\pi$ of length $N$,  an input $x$}
            \Statex \small \textbf{Initialize:} {Let $\pi'$ be the protocol simulating $\pi$ given by Theorem~\ref{thm:HSV18}, setting $\eps = 1/1980$. Let $N'=|\pi'|$ and assume $\Sigma'$ (the alphabet of $\pi'$) is a power of two.}
        \Statex
	    \State{\(\T_{b} \leftarrow \emptyset, r_b \leftarrow 0, 
	    err \leftarrow 0, m \leftarrow (0, 0)\)}
	    \While{\(m' \neq \square\)}\label{alg:bob:while}
    	    \State{\texttt{// Receive Message}}
            \State{receive \(m' = (m_{rec}, r_{rec})\)}\Comment{$k$ transmissions } \label{line:bob:receive}
            \If{\(m'\) does not contain \(\perp\) and  \(r_{rec} \neq r_b \mod 2\)} \label{line:bob:if}
                \State{\(\T_b \leftarrow \T_b\circ m_{rec}\)} \label{line:bob:insertAlice}
                \State{\(err \leftarrow 0\)}
            \Else
                \State{\(err \leftarrow 1\)} \label{line:bob:err}
            \EndIf
            \State{}
            \State{\texttt{// Send Message}}
            \If{\(err = 0\)}
                \State{\(r_b \leftarrow r_b + 1\)}  \label{line:bob:increaseRB}
                \State{\(m_{send} \leftarrow \pi'(y \mid \T_b)\)} \label{line:bob:compute}
                \State{\(\T_b \leftarrow \T_b \circ m_{send}\)} \label{line:bob:insertBob}
                \State{send \(m \leftarrow (m_{send}, r_b \mod 2)\)}\Comment{$k$ transmissions } \label{line:bob:send}
            \Else
                \State{send \(m\)}\Comment{$k$ transmissions; $m$ from memory } \label{line:bob:previous}
            \EndIf
        \EndWhile
        \State{Output the output given by $\pi'$}
	\end{algorithmic}
\end{algorithm}

The simulation of~$\pi'$ employs a challenge-response paradigm, where Alice sends a symbol (the challenge) and expects one back (the response). 
The parties maintain  a counter to track their respective progress, namely, the variables $r_a$ and $r_b$, which represent the number of successful iterations observed by Alice and Bob, respectively.
Every time Alice and Bob send a symbol, they attach to it the parity (mod~2) of their own counter. Hence, the simulation is $k$-alternating, with $k=\lceil\log{|\Sigma'|}\rceil + 1$.

When Alice receives a symbol (as a response to the challenge she has previously sent), she checks the counter value attached to it: if it matches her expected counter parity (mod 2),  she ``believes'' this challenge-response iteration, delivers the received symbol to~$\pi'$ and increases $r_a$ by 1; otherwise, she ignores the reply and tries again in the next iteration. 

Bob acts in an analogous manner: if the information received from Alice matches the counter parity (mod 2) he is expecting, then he ``believes'' the received symbol, 
delivers it to~$\pi'$, increases $r_b$ by~1, obtains from $\pi'$ the next symbol to communicate to Alice, and sends Alice this symbol and the parity of~$r_b$. 
If the information from Alice does not match Bob's expectation, he ignores this transmission and replies with the previous symbol computed by~$\pi'$ (along with the parity that corresponds to that symbol).  
When a party ``believes'' an iteration, it appends the received and transmitted symbols of the iteration to its transcript, $\T_a$ or~$\T_b$, respectively. This transcript records all the symbols communicated so far (by~$\pi'$) during the ``successful'' iterations of Algorithms~\ref{alg:schemeALICE} and~\ref{alg:schemeBOB}. 

To summarize, 
in each iteration of the loop, Alice generates the next message of~$\pi'$, denoted $m\in \Sigma'$, based on her current transcript~$\T_a$ and her input~$x$, i.e.,  $m=\pi'(x \mid \T_a)$.
Alice (temporarily) adds $m$ to~$\T_a$, and sends its binary representation to Bob, along with the parity of~$r_a$. 
After receiving a $k$-bit message $(m_{rec},r_{rec})$ from Alice, Bob checks that none of the $k$ bits have been erased (denoted by~$\bot$) and that $r_{rec}$ is opposite to his parity (since Alice added a new symbol and he did not, yet). 
If everything matches, Bob adds $m_{rec}$ to his transcript~$\T_b$, increases $r_b$ by 1, computes the next message $m'=\pi'(y\mid \T_b)$ and the new parity of $r_b$, and transmits them to Alice. 
On the other hand, if Bob notices any erasures or the mismatch of the parity, he ignores Alice's new symbol and replies with the latest computed $(m',r_b)$ recorded in his memory.

At the end of the iteration, Alice receives a message and a parity from Bob; if there were no erasures and the received parity matches~$r_a$, she adds that message to her transcript. Otherwise, Alice deletes the (temporary) message she added at the beginning of this iteration.

The algorithm ends once the length of~$\T_a$ reaches the length of~$\pi'$ (for Alice) or when a special symbol~$\square$, sent by the channel when Alice quits, is received by Bob. We discuss  termination in this model and the implication of assuming this special symbol  in~Appendix~\ref{sec:termination}.

\smallskip

The complete detailed analysis of the coding scheme (Algorithms~\ref{alg:schemeALICE} and~\ref{alg:schemeBOB}) and the proof of Theorem~\ref{thm:main}, are deferred to Appendix~\ref{apdx:analysis}.


\section{A UF Scheme with Optimal Communication}
\label{sec:iterative}
The concatenated scheme in Section~\ref{sec:UPEF-schemes-concat}, while optimal in the UPEF model, 
implies a UF scheme  of length $O((N+T)\log(N+T))$
in which the $i$-th bit is replaced with an AMD code~\cite{Cra08} of length $O(\log (p_i^{-1}))$; see Appendix~\ref{apdx:uf} for details. 
In this section, we take a different approach towards constructing an optimal coding scheme in the UPEF model, namely by executing increasing-length coding schemes in an iterative fashion. This approach eventually leads to an optimal-communication UF scheme.

Here, 
Alice and Bob utilize a ``standard'' interactive coding scheme for substitutions and estimate the experienced level of noise. 
If the estimated noise level is too high, Alice and Bob repeat the execution with a larger amount of redundancy. 
When the noise level is low enough, the interactive scheme guarantees the success of the computation, and the parties can terminate.
This approach, in addition to leading to a UF scheme with optimal communication, is also much simpler and easier to analyze than the scheme of Section~\ref{sec:UPEF-schemes-concat}. 
The key difference is that, all we need in order to estimate the noise level well, is that the probabilities $\{p_i\}_{i\in \mathbb{N}}$ are bounded below by a constant, rather than converging to~$0$. This aligns perfectly with obtaining optimal complexity, as smaller $\{p_i\}_{i\in \mathbb{N}}$ imply longer encodings.

The scheme in this section utilizes a slight variant of the UPEF model, denoted the modified UPEF (mUPEF) model, which we now describe.
Let a noise pattern $E \subset \mathbb{N}$ be determined adversarially.
As before, if $i \notin E$, the $i$-th transmitted bit reaches the other party intact. 
For $i \in E$, the $i$-th transmitted bit is still erased with 
probability~$1-p_i$, and corrupted with probability~$p_i$.
However, the corruption here is not necessarily a bit flip as in the original UPEF. Instead, the adversary determines whether the bit is flipped, erased, or not corrupted at all.
The probabilities~$\{p_i\}_{i\in\mathbb{N}}$ are parameters of the model. However, we will actually set them all to have the same value. That is, $\forall i,\ p_i = 2/3$. 

In our scheme, we set $p_e = 1-p_i= 1/3$ as a lower bound on the probability that a bit is erased.
Similar to the UF model, we will assume that when Alice terminates, the channel implicitly sends Bob a default symbol (e.g., a zero). 

\smallskip
In the following sections we describe and analyze a coding scheme in the mUPEF model, with optimal $O(N+T)$ communication that, due to our choice of $p_e=2/3$, results with a UF scheme with $O(N+T)$ communication as well. 

\paragraph{A optimal mUPEF Coding Scheme.}
For the underlying substitution-resilient interactive coding scheme, we can take any (efficient) 2-party scheme with binary alphabet that is resilient to a constant fraction of adversarial noise, e.g.,~\cite{BKN14,GH14}.
In particular, let us assume an interactive coding scheme that 
simulates any (noiseless) protocol $\pi$ in the presence of up to 0.1 adversarial substitutions with a constant rate over the binary alphabet. 
Denote the substitution-resilient version by~$\pi'$, so  $|\pi'|=O(|\pi|)=O(N)$.
We assume that $\pi'$ is alternating. 
We additionally assume that the communication of $\pi'$ includes a constant fraction of ones. To be concrete, out of the $|\pi'|/2$ bits Alice sends, at least $|\pi'|/8$ are the bit~1, in any execution of~$\pi'$.
We must have this property, because in our scheme, a long sequence of zeros will indicate termination. For that reason, we want that $\pi'$ will not send a long sequence of zeros. This can be achieved, for instance by making the parties communicate a~$1$ every alternate round, or by means of randomization (see, e.g., \cite{GH15}).

We proceed to describe our mUPEF resilient scheme~$\Pi$ simulating the substitution-resilient $\pi'$ defined above. The execution of~$\Pi$ consists of iterations, where the $i$-th iteration, $i=0,1,2,\ldots,$ takes $2L_i$ rounds with $L_i = |\pi'|2^i$. 
The $i$-th iteration can be broken down into two parts, each of length~$L_i$. 
In the first part, the parties execute $\pi'$ from scratch, padded to length $L_i$; in this padded protocol, each bit of $\pi'$ is sent $2^i$ times, and decoding is performed by majority (defaulting to 0 on ties, considering erased copies as zeros). 
In the second part of iteration $i$, only Bob speaks. He sends Alice the \emph{success string}~$0^{L_i}$ if and only if he observed less than $0.001p_eL_i$ erasures in the first part; otherwise Bob sends the \emph{error string}~$1^{L_i}$.

Alice terminates at the end of iteration~$i$ 
if she observed less than $0.001p_eL_i$ erasures in each of the parts of iteration~$i$, 
and Bob's transmissions at the second part contains more 0's than 1's (i.e., it decodes to the success string rather than to the error string). 
Alice gives as an output the same output that $\pi'$ has generated in the iteration in which she terminated. 
Bob terminates at the end of iteration~$j$ if  he observed less than $0.001p_eL_j$ erasures in the first part of $j$
and at most $0.001p_eL_j$ of the received bits in the first part of $j$ are 1's. 
Bob gives as an output the output of the latest iteration~$k$, with $k<j$,  
in which (1) he observed less than $0.001p_eL_j$ erasures in the first part and (2) he received at least $L_k/40$ ones from Alice in the first part. 
We call \emph{a valid iteration} any iteration that satisfies these two conditions. 

\subsection{Analysis}
In this section we prove the following theorem.
\begin{theorem}\label{thm:main:optimal}
Given any two-party binary interactive protocol $\pi$ of length $N$, there exists an efficient randomized protocol $\Pi$ of length $O(N+T)$ that simulates $\pi$ 
with probability $1-2^{-\Omega(N)}$
over a binary channel in the presence of an arbitrary and a priori unknown number~$T$ of mUPEF corruptions, with $p_i=2/3$ for all $i$.
\end{theorem}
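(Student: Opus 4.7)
The plan is to separate the analysis into a deterministic ``noise accounting'' step that identifies which iterations are safe to terminate on, and a probabilistic concentration step that ties observed erasure counts back to the adversary's corruption pattern via the forced $p_e = 1/3$ erasure probability. For each iteration $i$, let $T_i^{(1,A)}, T_i^{(1,B)}, T_i^{(2)}$ denote the number of adversarial corruptions on Alice's first-part transmissions, Bob's first-part transmissions, and Bob's second-part transmissions, respectively, and set $T_i = T_i^{(1,A)}+T_i^{(1,B)}+T_i^{(2)}$. Call iteration $i$ \emph{noise-light} if $T_i \leq \gamma L_i$ for a sufficiently small constant $\gamma$ (e.g.\ $\gamma=10^{-5}$). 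In a noise-light iteration, at most $T_i^{(1,A)}+T_i^{(1,B)} \leq \gamma L_i$ corruptions hit the $2^i$-fold repetition of $\pi'$, so fewer than $2\gamma|\pi'|$ bits of $\pi'$ can be flipped after majority decoding, which is well under the $0.1$ substitution threshold of $\pi'$; hence $\pi'$ produces the correct transcript of~$\pi$ in any noise-light iteration.

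Next, I would establish two concentration statements per iteration for each segment (party $\times$ part): (a) if more than $\beta L_i$ corruptions hit that segment, then with probability $1-2^{-\Omega(L_i)}$ the receiving party observes more than $0.001 p_e L_i$ erasures on it (since each corruption is independently forced to an erasure with probability $\geq p_e$, so a Chernoff bound on the sum applies), and (b) if at most $\gamma L_i$ corruptions hit the segment with $\gamma < 0.001 p_e$, the observed erasure count is deterministically below threshold. Choosing $\gamma$ accordingly and $\beta$ slightly above $0.003$, statement (a) converts low-erasure observations into \emph{certificates} that the corresponding $T$-value is at most $\beta L_i$, and statement (b) converts low-$T$ into low erasures.

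For correctness, suppose Alice terminates at iteration $i$. Her two erasure counts being under $0.001 p_e L_i$ imply via (a) that $T_i^{(1,B)}, T_i^{(2)} \leq \beta L_i$ except with failure probability $2^{-\Omega(L_i)}$. Her decoding Bob's second-part message as \emph{success} rules out that Bob actually sent $1^{L_i}$: flipping that to a 0-majority while Alice sees $<0.001p_e L_i$ erasures would need $\Omega(L_i)$ flips among $T_i^{(2)}$ corruptions, forcing $T_i^{(2)} = \Omega(L_i)$ and, by (a), $\Omega(L_i)$ erasures observed, a contradiction. Hence Bob sent success, so $T_i^{(1,A)} < \beta L_i$ as well, iteration $i$ is noise-light, and Alice's output is the correct transcript. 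For Bob, any iteration $k > i$ cannot be valid for him: since Alice is silent, the $\geq L_k/40$ ones needed for validity must all come from noise flipping the channel's default $0$, requiring $\geq L_k/40$ corruptions on that segment, hence by (a) more than $0.001 p_e L_k$ erasures---violating the low-erasure part of validity. Meanwhile iteration $i$ itself is valid for Bob: his first-part erasures are low by (b), and the $\geq L_i/8$ ones Alice actually transmitted through $\pi'$ survive even after up to $\beta L_i$ corruptions, comfortably clearing the $L_i/40$ threshold. So Bob's output is taken from iteration $i$ and coincides with Alice's.

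For length, let $i^*$ be the smallest $i$ with $L_i \geq T/\gamma$; since $|\pi'|=O(N)$, this gives $L_{i^*}=O(N+T)$. Iteration $i^*$ is automatically noise-light, so by (b) both parties' termination conditions fire by iteration $i^*+O(1)$, and the total communication is $\sum_{i\leq i^*+O(1)} 2 L_i = O(L_{i^*}) = O(N+T)$. The main obstacle I anticipate is the careful bookkeeping required to union-bound the per-iteration failure probabilities while maintaining $2^{-\Omega(N)}$ overall; this works because the $i$-th iteration's failure probability is $2^{-\Omega(L_i)} = 2^{-\Omega(N\cdot 2^i)}$, dominated geometrically by the $i=0$ term, but the case analysis needs to be done per segment and per iteration and then assembled so that no adversarial redistribution of the $T$ corruptions across iterations can evade the argument. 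A related subtlety is ensuring Bob's ``latest valid iteration'' cannot slip past $i$ or fall short of it---handled by the $\Omega(L_i)$ gap between the 1-density $|\pi'|/8$ of $\pi'$ and the $L_i/40$ validity threshold, which leaves comfortable slack against the $\beta L_i$ corruptions permitted in a noise-light iteration.
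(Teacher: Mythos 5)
Your proposal matches the paper's approach closely: you decompose the argument into a deterministic noise-accounting layer (the repetition/majority argument showing few corruptions force a correct $\pi'$ transcript, here via ``noise-light'') and a probabilistic concentration layer tying observed erasures to the adversary's budget (your statements (a) and (b) correspond to the Chernoff arguments in the paper's Lemma~\ref{lem:erasures-flips}), and you then argue correctness for Alice, validity of Alice's termination iteration, and non-validity of later iterations, essentially as the paper does in Observation~\ref{obs:correct-termination} and Lemma~\ref{lem:valid-iterations}. The length analysis is also structurally the same.

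However, there is a genuine gap: you never rule out Bob terminating \emph{before} Alice, which the paper handles separately in Lemma~\ref{lem:termination-after-alice}. Your conclusion ``so Bob's output is taken from iteration $i$'' presupposes that Bob's termination iteration $j$ satisfies $j > i$. But Bob terminates at iteration $j$ whenever he observes both few erasures and few ones in the first part, and nothing in your argument prevents this from happening at some $j \le i$ --- in which case Bob's output comes from some iteration strictly before $j$, and your valid-iteration reasoning (which only establishes that iteration $i$ is valid and no iteration after $i$ is) says nothing about whether that earlier iteration produced the right answer, or indeed whether any valid iteration exists at all before $j$. You in fact already possess the needed mechanism --- you use it to kill iterations $k>i$ --- but you must apply it to $j\le i$: while Alice is still active she transmits at least $L_j/8$ ones, so Bob receiving at most $0.001p_eL_j$ ones forces the adversary to corrupt $\Omega(L_j)$ of those positions, which by your (a) generates $\Omega(p_eL_j)$ observed erasures with probability $1-2^{-\Omega(L_j)}$ and thus violates Bob's low-erasure termination clause. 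Adding this per-iteration claim, and union-bounding it with your other per-iteration failure events, closes the gap. Two minor slips worth fixing: after Alice terminates you conclude iteration $i$ is ``noise-light'' ($T_i \le \gamma L_i$ with $\gamma = 10^{-5}$) from bounds of the form $T_i^{(\cdot)} \le \beta L_i$ with $\beta \approx 0.003$, which only gives $T_i \le 3\beta L_i$ --- still small enough for the $\pi'$-correctness calculation, but not what your definition says; and you cite (b) to say Bob's first-part erasures are low in iteration $i$, when what you actually have is the direct fact that Bob sent the success string, which by definition means he saw few erasures.
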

We start with proving the correctness of our coding scheme:
we begin by demonstrating that Alice's output is correct with high probability. Additionally, we show that Bob's output at Alice's termination is also correct.
Then, we prove that Bob terminates after Alice, that Alice terminates in a \emph{valid} iteration, and that there are no valid iterations afterwards.
This would imply that 
Bob gives the right output as well.

Recall that $\pi'$ is resilient to $0.1$-fraction of substitutions. In addition, recall the padding mechanism; in order to cause a bit substitution in $\pi'$ in some iteration $i$, at least a half of its $2^i$ transmitted copies must be flipped or erased. 
Thus, if there are less than $L_i/20$ corruptions during the first part of iteration $i$ (that is, indices that the adversary puts in $E$),
then $\pi'$ must give the correct output at the end of iteration $i$, for both Alice and Bob.

In the following lemma, we show that if there are \emph{more} than $L_i/20$ corruptions
in some iteration~$i$, then Alice continues to executing iteration~$i+1$ and does not terminate at the end of iteration~$i$, except with a negligible probability of~$2^{-\Omega_{p_e}(L_i)}$.

\begin{lemma}
\label{lem:erasures-flips}
    Assume that in the first part of iteration~$i$ there are $c_i \ge L_i/20$ corruptions. Then, the probability that Alice terminates at the end of iteration~$i$ is~$2^{-\Omega_{p_e}(L_i)}$.
\end{lemma}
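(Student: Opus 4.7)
The plan is to split the $c_i$ first-part corruptions according to which party is the receiver, writing $c_i = c_i^A + c_i^B$. Since $c_i \ge L_i/20$, at least one of $c_i^A$, $c_i^B$ exceeds $L_i/40$, which I handle in two cases. The key property I will exploit throughout is that in the mUPEF model, each index in the noise pattern $E$ is independently turned into a forced erasure with probability $p_e = 1/3$, regardless of the adversary's subsequent flip/erase/intact choice; hence on any fixed subset $S \subseteq E$ the number of erasures observed stochastically dominates $\mathrm{Bin}(|S|, p_e)$.

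Suppose first that $c_i^A \ge L_i/40$. A Chernoff lower-tail bound on the forced-erasure coins for Alice's $c_i^A$ corrupted receive-rounds yields that, except with probability $2^{-\Omega(p_e L_i)}$, Alice observes at least $p_e L_i/80 > 0.001\, p_e L_i$ erasures in the first part, violating her first termination condition. Suppose instead that $c_i^B \ge L_i/40$. The same Chernoff argument applied to Bob shows that, except with probability $2^{-\Omega(p_e L_i)}$, Bob observes more than $0.001\, p_e L_i$ erasures in the first part and therefore broadcasts the error string $1^{L_i}$ in the second part. Now split on the number $c_i^S$ of corruption indices the adversary places in the second part. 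If $c_i^S \le L_i/2$, then at most $L_i/2$ of Bob's ones can be flipped to zeros while at least $L_i/2$ arrive intact, so Alice receives at least as many $1$'s as $0$'s and her third termination condition fails deterministically. If instead $c_i^S > L_i/2$, the expected number of forced erasures in the second part is at least $p_e L_i/2 = L_i/6$, and a Chernoff bound gives at least $L_i/12$ actual erasures with probability $1 - 2^{-\Omega(L_i)}$, which far exceeds $0.001\, p_e L_i$ and violates Alice's second-part erasure check.

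Union-bounding over the $O(1)$ Chernoff failure events in the two cases yields the claimed $2^{-\Omega_{p_e}(L_i)}$ bound. The main obstacle is the conditioning in the second case: one has to decouple the randomness on Bob's first-part receive-rounds (which determines Bob's second-part broadcast) from the randomness on Alice's second-part receive-rounds (which determines her observation). Fortunately these concern disjoint index sets and hence involve independent forced-erasure coins, so the two Chernoff applications compose cleanly. One also has to recall that in mUPEF the adversary's per-index flip/erase/intact choice is oblivious, so its strategy may be treated as fixed before the coins are drawn — otherwise an adaptive adversary could try to undermine the above case analysis by reacting to Bob's observed erasures.
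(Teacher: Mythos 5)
Your proof is correct and follows essentially the same route as the paper: both bound the number of forced erasures on the corrupted positions by a Chernoff lower tail and observe that the resulting erasure counts violate Alice's termination conditions (or force Bob to send the error string, which Alice then cannot decode as success without $>L_i/2$ second-part corruptions). The only difference is organizational — you split the deterministic first-part corruption budget by receiver as $c_i=c_i^A+c_i^B$ before applying Chernoff, whereas the paper conditions on the random event of whether Bob observes many erasures; both reduce to the same two Chernoff applications on disjoint index sets, and your remark about decoupling the randomness and relying on the adversary's obliviousness makes explicit something the paper leaves implicit.
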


\begin{proof}
We divide the proof into two separate cases: (1) each of Alice and Bob observes less than $0.001p_eL_i$ erasures in the first part, and (2) Bob observes more than $0.001p_eL_i$ erasures but Alice does not.
Of course, if more than $0.001p_eL_i$ erasures are observed by Alice during the first part, she does not terminate by definition.

First, we find the probability of case~(1) to occur.
Denote by $E_i$ the subset of the noise pattern $E$ containing only rounds in the first part of iteration~$i$. 
Then,  $|E_i|\ge c_i \ge L_i/20$.
Let $e'$ be the number of rounds during the first part of iteration~$i$ that were erased because the event of channel erasure, which happens with probability~$p_e$, occurred.
It holds that
$\mathbb{E}[e'] = |E_i|p_e \ge 0.05p_eL_i$, and by Chernoff's inequality (Theorem~4.4(2) in~\cite{MU17}),
\begin{equation*}
\Pr(e' < 0.002p_eL_i) \leq \Pr(e' < 0.04\mathbb{E}[e']) \leq e^{-\mathbb{E}[e']\frac{0.96^2}{2}} \leq e^{-0.05p_eL_i\frac{0.96^2}{2}} \in 2^{-\Omega_{p_e}(L_i)}\text{.}
\end{equation*}

Thus, if $c_i \ge L_i/20$, then $e' < 0.002p_eL_i$ with probability $2^{-\Omega_{p_e}(L_i)}$, which bounds the probability of case~(1) to occur.

In case (2), Bob sees a lot of erasures and sends the error string. In order for Alice to terminate, it is necessary for her to decode this message as the success string.
For this to happen, it is necessary that the adversary corrupts at least $L_i/2$ bits during the second part of the $i$-th iteration. 
Similar to the proof of case~(1), the adversary succeeds to corrupt so many bits while causing less than $0.001p_eL_i$~erasures during the second part with probability of at most~$2^{-\Omega_{p_e}(L_i)}$. 
We conclude that Alice terminates at the end of iteration~$i$ with probability of at most~$2^{-\Omega_{p_e}(L_i)}$.
\end{proof}

The above lemma indicates that, in the event of an excessive number of corruptions, Alice will not terminate.
The following observation complements this idea and states that if Alice does terminate, the computation of $\pi'$ is successful with high probability, hence, her output in~$\Pi$ is correct.
\begin{observation}
\label{obs:correct-termination}
    When Alice terminates, $\pi'$ gives the correct output, with probability $1-2^{-\Omega_{p_e}(N)}$.
\end{observation}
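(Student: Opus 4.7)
The plan is to combine Lemma~\ref{lem:erasures-flips} with the substitution resilience of $\pi'$ via a union bound over iterations. The key observation that drives everything is a deterministic statement: for $\pi'$ to fail to give the correct output in iteration~$i$, the noise pattern must contain at least $L_i/20$ indices in the first part of iteration~$i$. Indeed, each bit of $\pi'$ is transmitted $2^i$ times, and majority decoding of a single bit (treating erasures as zeros) can only be flipped by $\pi'$'s perspective if at least $2^{i-1}$ of its copies are either flipped or erased. Thus $c_i < L_i/20$ corruptions can corrupt at most $(L_i/20)/2^{i-1} = |\pi'|/10$ bits of $\pi'$, which is below the $0.1$ fraction of substitutions the underlying scheme tolerates. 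Hence if $\pi'$ outputs incorrectly at the end of iteration~$i$, then necessarily $c_i \ge L_i/20$.

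Next, I would invoke Lemma~\ref{lem:erasures-flips}: conditioned on the event $c_i \ge L_i/20$ (which depends only on the adversarial choice of~$E$, not on the channel randomness, since the adversary is oblivious), the probability that Alice terminates at the end of iteration~$i$ is at most $2^{-\Omega_{p_e}(L_i)}$. In particular, for each fixed~$i$,
\begin{equation*}
\Pr\bigl[\text{Alice terminates at iter.\ } i \ \text{and } \pi' \text{ is wrong at iter.\ }i\bigr] \le \Pr\bigl[\text{Alice terminates at iter.\ } i \,\big|\, c_i \ge L_i/20\bigr] \le 2^{-\Omega_{p_e}(L_i)}.
\end{equation*}

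Finally, I would take a union bound over all iterations $i = 0, 1, 2, \ldots$ Using $L_i = |\pi'|\cdot 2^i$ and $|\pi'| = \Theta(N)$, the total failure probability is bounded by a geometric sum
\begin{equation*}
\sum_{i=0}^{\infty} 2^{-\Omega_{p_e}(|\pi'|\cdot 2^i)} \;\le\; 2 \cdot 2^{-\Omega_{p_e}(|\pi'|)} \;=\; 2^{-\Omega_{p_e}(N)},
\end{equation*}
which yields the desired bound. The only mildly delicate point is the union bound over an unbounded number of iterations, but this is handled for free by the exponential growth of~$L_i$ so that the $i=0$ term dominates; no heavy calculation is needed, and there is no serious obstacle beyond composing Lemma~\ref{lem:erasures-flips} with the deterministic padding argument.
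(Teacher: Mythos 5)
Your proposal is correct and takes essentially the same approach as the paper: the deterministic observation that an incorrect output of $\pi'$ at iteration~$i$ forces $c_i \ge L_i/20$ (which the paper states just before Lemma~\ref{lem:erasures-flips}), combined with Lemma~\ref{lem:erasures-flips} and a union bound over the geometrically growing $L_i$. The paper's version is more compressed but the argument and decomposition are the same.
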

\begin{proof}
If during a given iteration there were less than $L_i/20$ corruptions, then the resilience of~$\pi'$ guarantees that it gives the right output, and Alice terminates. 
The event in which Alice terminates and provides incorrect output can only occur when there are more corruptions in a specific iteration, the probability of which was constrained in Lemma \ref{lem:erasures-flips}.
A union bound on all possible iterations (while recalling  that $L_i=|\pi'|2^i$) bounds the probability for Alice to give an incorrect output, by 
\begin{equation*}
\sum_{i=0}^{\infty}{2^{-\Omega_{p_e}(L_i)}} = 2^{-\Omega_{p_e}(|\pi'|)} = 2^{-\Omega_{p_e}(N)}\text{.}\qedhere
\end{equation*}
\end{proof}

Next, we prove that Bob terminates only after Alice has already terminated, with high probability. 

\begin{lemma}
\label{lem:termination-after-alice}
Consider an iteration $i$ in which Alice has not yet terminated. 
Then, the probability that Bob terminates at the end of iteration~$i$ is at most $2^{-\Omega_{p_e}(L_i)}$.
\end{lemma}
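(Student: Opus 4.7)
The plan is to show that Bob terminates at the end of iteration $i$ only if two conditions hold simultaneously on his observations in the first part, namely fewer than $0.001 p_e L_i$ erasures \emph{and} at most $0.001 p_e L_i$ received bits equal to~$1$, and then to argue that these two conditions are incompatible up to probability $2^{-\Omega_{p_e}(L_i)}$, whenever Alice is still participating in iteration~$i$.

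Since Alice has not terminated before iteration $i$, she executes the padded $\pi'$ during its first part. By the standing assumption on $\pi'$, at least $|\pi'|/8$ of Alice's bits equal~$1$, and each bit of $\pi'$ is repeated $2^i$ times in iteration~$i$, so Alice transmits at least $L_i/8$ one-bits. If Bob receives at most $0.001 p_e L_i$ ones, then the number of Alice's one-bits that Bob fails to receive as~$1$ is at least $L_i/8 - 0.001 p_e L_i \geq L_i/10$. Because any bit outside the adversarial noise set $E$ is delivered intact, this forces the set $A$, consisting of Alice's one-bits that lie in $E$, to satisfy $|A| \geq L_i/10$.

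Next, I exploit the randomness built into the mUPEF channel: every bit in $E$ is \emph{independently} erased with probability $p_e$ \emph{before} the adversary's choice of effect (flip, erase, or no corruption) is applied to the complementary probability mass~$p_i$. Therefore, conditional on $|A| \geq L_i/10$, the number of such random erasures among the bits of $A$ dominates a $\mathrm{Bin}(L_i/10, p_e)$ variable whose mean is at least $L_i/30$. Since each random erasure in $A$ is an erasure Bob observes, the first termination condition demands fewer than $L_i/3000$ such erasures, which by the Chernoff bound (as invoked in Lemma~\ref{lem:erasures-flips}) occurs with probability at most $2^{-\Omega_{p_e}(L_i)}$. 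Combining this with the deterministic implication in the previous paragraph will yield the bound.

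The only potential obstacle is the adversary's latitude in choosing, for each bit of $E$, among the three possible corruption effects. This freedom, however, cannot undermine the argument: the $p_e$-biased random erasure at each index of $E$ is drawn independently of any adversarial effect and produces an observable erasure regardless of what the adversary picks. Consequently, no oblivious adversarial strategy can simultaneously suppress the received-ones count and the erasure count, except with probability $2^{-\Omega_{p_e}(L_i)}$, completing the plan.
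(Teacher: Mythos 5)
Your proposal is correct and is essentially the paper's own argument run in the opposite direction: the paper conditions on few observed erasures, invokes Lemma~\ref{lem:erasures-flips}'s Chernoff bound to conclude there were fewer than $L_i/20$ adversarial corruptions, and then deterministically derives that Bob received many ones; you instead start from the deterministic implication that few received ones forces at least $L_i/10$ of Alice's one-bits into $E$, and then apply the same Chernoff-style bound to conclude Bob observes many erasures. The two arguments use identical ingredients (Alice sends $\geq L_i/8$ ones, and each bit in $E$ independently suffers a random erasure with probability $p_e$) and the same case split on the adversarial budget, so the proofs are equivalent in substance.
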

\begin{proof}
In order to terminate at the end of iteration~$i$, Bob must observe less than $0.001p_eL_i$~erasures in the first part of the iteration. In a manner analogous to the argument presented in Lemma \ref{lem:erasures-flips}, it can be shown that, with a probability approaching $1-2^{-\Omega_{p_e}(L_i)}$, there will be a total of less than $L_i/20$ corrupted messages received by Bob during the first part of iteration $i$.
Recall that as long as Alice has not terminated, at least $L_i/8$ out of the $L_i/2$ bits she sends in iteration~$i$ are ones. Consider these transmissions only.
Even if all the corruptions during the first part of $i$ occur during these transmissions, then Bob will observe at least $0.001p_eL_i$ ones in the first part of $i$, and will therefore not terminate by definition.
Consequently, Bob terminates in iteration $i$ with probability~$2^{-\Omega_{p_e}(L_i)}$.
\end{proof}

Bob's output is the output of $\pi'$ in the last \emph{valid} iteration prior to his termination iteration.
The following lemma 
shows that, with high probability, the last valid iteration is the same iteration in which Alice has terminated. 
By Observation~\ref{obs:correct-termination}, $\pi'$ gives the correct output in that iteration.

\begin{lemma}
\label{lem:valid-iterations}
    Denote by $i$ the iteration in which Alice terminates. Then, $i$ is a valid iteration with probability $1-2^{-\Omega_{p_e}(L_i)}$.
    Further, the probability that there is a valid iteration $j>i$ is $2^{-\Omega_{p_e}(N)}$.
\end{lemma}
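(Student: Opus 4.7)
The plan is to split the statement into two sub-claims matching its two assertions and verify each of Bob's two validity conditions in turn.

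For the first claim (that $i$ itself is valid for Bob), I would start with condition (2), the ``enough ones'' requirement. The contrapositive of Lemma~\ref{lem:erasures-flips} implies that Alice's termination in iteration $i$ forces $c_i < L_i/20$ except with probability $2^{-\Omega_{p_e}(L_i)}$. Since each bit of $\pi'$ is repeated $2^i$ times in the padded first part, Alice transmits at least $L_i/8$ ones, each of which reaches Bob as a $1$ unless the corresponding Alice-to-Bob round lies in $E$. Bounding the damaged ones by $c_i < L_i/20$ gives Bob at least $L_i/8 - L_i/20 > L_i/40$ ones, so condition (2) holds.

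For condition (1) I would argue indirectly that, with overwhelming probability, Bob actually sent the success string $0^{L_i}$ in the second part, i.e., he himself observed fewer than $0.001 p_e L_i$ erasures in the first part. Suppose instead that Bob sent the error string $1^{L_i}$. For Alice to decode this as success, more than $L_i/2$ of the bits she received in the second part must be $0$ or $\bot$, which forces at least $L_i/2$ corruptions in the Bob-to-Alice channel during that part (since uncorrupted bits are $1$). A Chernoff argument parallel to the proof of Lemma~\ref{lem:erasures-flips} then shows that the number of forced erasures (each occurring independently with probability $p_e = 1/3$) Alice observes in the second part is at least, say, $L_i/100$ with probability $1 - 2^{-\Omega_{p_e}(L_i)}$. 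This exceeds her erasure budget $0.001 p_e L_i$, contradicting her termination. Hence Bob sent the success string, establishing condition (1).

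For the second claim (no valid $j>i$), I would exploit the fact that after Alice terminates the channel delivers the default symbol $0$ to Bob. Thus every $1$ that Bob receives from Alice in the first part of iteration $j$ must come from the adversary flipping a default $0$; in particular condition (2) requires $c^{A\to B}_{1,j} \ge L_j/40$. But then the forced erasures in that part form a $\operatorname{Binomial}(c^{A\to B}_{1,j}, 1/3)$ random variable with expectation $\ge L_j/120$, and by Chernoff its value is at least $L_j/240 \gg 0.001 p_e L_j$ except with probability $2^{-\Omega_{p_e}(L_j)}$, violating condition (1). A union bound over all $j > i$ yields
\[
\sum_{j>i} 2^{-\Omega_{p_e}(L_j)} \;\le\; 2^{-\Omega_{p_e}(L_{i+1})} \;=\; 2^{-\Omega_{p_e}(L_i)} \;\le\; 2^{-\Omega_{p_e}(N)},
\]
using $L_j = L_i \cdot 2^{j-i}$ and $L_i \ge L_0 = |\pi'| = \Theta(N)$.

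The main obstacle is condition (1) in the first claim: the deterministic bound $c_i < L_i/20$ inherited from Lemma~\ref{lem:erasures-flips} is far too loose to directly upper-bound Bob's erasures, so one must route the argument through Alice's second-part decoding and carefully track the random forced-erasure events conditioned on the adversary's otherwise-arbitrary corruption strategy.
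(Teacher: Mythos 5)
Your proof is correct and takes essentially the same route as the paper: you verify the two validity conditions for iteration $i$ by (a) passing condition~(1) through Alice's decoding of Bob's second-part transmission exactly as in case~(2) of Lemma~\ref{lem:erasures-flips}, and (b) deriving condition~(2) from the contrapositive of Lemma~\ref{lem:erasures-flips}; and for $j>i$ you use the same dichotomy (many adversarial ones forces too many observed erasures, few forces too few ones) followed by the same geometric union bound. The only difference is presentational — you phrase the $j>i$ argument as showing conditions (1) and (2) cannot co-occur rather than as an explicit case split — which is logically equivalent to what the paper does.
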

\begin{proof}
In order to prove that $i$ is a valid iteration, we have to show that Bob observes less than $0.001p_eL_i$ erasures in the first part of $i$, and that he receives at least $L_i/40$ ones from Alice in the first part. 

First, note that Bob observes less than $0.001p_eL_i$ erasures in the first part if and only if he sends the success string in the second part. We demonstrate that with probability $2^{-\Omega_{p_e}(L_i)}$ this is the case. As illustrated in case (2) of Lemma \ref{lem:erasures-flips}, when Bob transmits the error string to Alice in the second part of $i$, Alice terminates with probability $2^{-\Omega_{p_e}(L_i)}$. Since Alice terminates in $i$, there is a probability of $2^{-\Omega_{p_e}(L_i)}$ that Bob sends the error string in the second part of $i$ and observes more than $0.001p_eL_i$ erasures in the first part of $i$.

We proceed to show that Bob receives at least $L_i/40$ ones from Alice in the first part of $i$. By Lemma \ref{lem:erasures-flips}, during the first part of iteration~$i$ there are less than $L_i/20$ corruptions with probability $1-2^{-\Omega_{p_e}(L_i)}$. 
Additionally, Alice always sends at least $L_i/8$ ones in the first part. Thus, Bob receives at least $L_i/40$ ones during the first part of $i$ with probability $1-2^{-\Omega_{p_e}(L_i)}$, and this is the probability of $i$ to be a valid iteration.

Let $j$ be an iteration such that $j>i$. Recall that after Alice terminates the channel sends Bob zeros, by default.
We show that $j$ is not a valid iteration with high probability, by dividing into two cases. 
If there are at least $L_j/40$ flips in the transmissions towards Bob during the first part of iteration~$j$, then with probability $2^{-\Omega_{p_e}(L_j)}$ Bob  observes at least $0.001p_eL_j$ erasures, similar to case (1) in Lemma \ref{lem:erasures-flips}, thus $j$ is not valid. 
If there are less than $L_j/40$ flips in these transmissions, then Bob receives less than $L_j/40$ ones and $j$ is not a valid iteration either. 
Thus, iteration $j$ is valid with probability of at most $2^{-\Omega_{p_e}(L_j)}$. 
Since $L_j = |\pi'|2^j$, applying a union bound on all the iterations
gives a probability of $\sum_{j=i}^{\infty}{2^{-\Omega_{p_e}(L_j)}} = 2^{-\Omega_{p_e}(N)}$ to the event that there is a valid iteration after~$i$.
\end{proof}

We may use a union bound to conclude the correctness part for Bob. 
The overall probability of Bob to terminate after Alice is $1-2^{-\Omega_{p_e}(N)}$. The probability of Bob to declare Alice's termination iteration as valid is $1-2^{-\Omega_{p_e}(N)}$. 
Bob gives  the correct output at this iteration with probability $1-2^{-\Omega_{p_e}(N)}$ by Lemma \ref{lem:erasures-flips}, and this iteration is the last valid iteration with probability $1-2^{-\Omega_{p_e}(N)}$. We may use the inclusion-exclusion principle to show that the probability of the intersection of all these four events is $1-2^{-\Omega_{p_e}(N)}$. Recall that for any $A, B$ it holds that $1\ge P(A\cup B) = P(A) + P(B) - P(A\cap B)$.
Then, the fact that $P(A), P(B) \in 1-2^{-\Omega_{p_e}(N)}$ implies that $P(A\cap B) \ge 2(1-2^{-\Omega_{p_e}(N)})-1 \in 1-2^{-\Omega_{p_e}(N)}$. 
Thus, the probability of the event in which Bob gives the correct output at his termination is $1-2^{-\Omega_{p_e}(N)}$, and we have completed the correctness part of Theorem \ref{thm:main:optimal}.

\smallskip
The communication complexity of $\Pi$ is $\sum_{i=1}^{i_B}{2L_i}$, with $i_B$ being the iteration in which Bob terminates. 
At any iteration $i$ before Alice terminates, the adversary has to select at least $0.001p_eL_i$ bits to corrupt in order to prevent Alice from terminating. In addition, at any iteration $i$ after Alice's and before Bob's terminations, the adversary has to select at least $0.001p_eL_i$ bits to corrupt in order to prevent Bob from terminating. 
Denote the iteration in which Alice terminates by $i_A < i_B$. Then, $\sum_{i=1, i \ne i_A}^{i_B-1}{0.001p_eL_i} < T$, so $\sum_{i=1, i \ne i_A}^{i_B-1}{2L_i} < 2000/p_e \times T$. Since $L_i$ satisfies $L_i=2L_{i-1}$ for all $i$, and since $L_{i_A-1}, L_{i_B-1}$ are included in $\sum_{i=1, i \ne i_A}^{i_B-1}{2L_i} < 2000/p_e \times T$, then $\sum_{i=1}^{i_B}{2L_i} < 6000/p_e \times T$ and the communication complexity is $O_{p_e}(N+T)$.

\subsection{Obtaining a UF-optimal coding scheme}
In this section we construct a UF-model scheme based on the mUPEF protocol~$\Pi$, while maintaining a communication complexity of $O(N+T)$. Recall, we set $p_e = 1/3$. 
This means that adversarial corruption in the $i$-th transmission of $\Pi$, becomes detectable (an erasure) with probability at least~$1/3$. We would like to simulate this property in the UF model.

Towards this goal, we independently encode each bit of~$\Pi$ using a random code of length~5.
In particular, we encode a 0 to one of  $\{00000, 10000, 01000\}$ with equal probability,
and encode a 1 to one of $\{00100,10010,01001\}$ with equal probability. 
A received 5-bit word is decoded to a 0 or a 1 only if it belongs to respective set, or otherwise it is considered as an erasure. 
It can easily be seen that any pattern of 1 to~5 bit-flips decodes to an erasure with probability at least~$1/3$ as desired.
We have thus inflated the scheme by only a constant factor. The bit complexity of the resulting UF scheme is then~$5\cdot|\Pi|=O(N+T)$.

\section*{Acknowledgments}
This work is partially supported by a grant from the United States-Israel Binational Science Foundation (BSF), Jerusalem, Israel, Grant No.\@ 2020277.
E. Fargion and R. Gelles  would like to thank Paderborn University for hosting them while part of this research was done.
R. Gelles would like to thank  CISPA---Helmholtz Center for Information Security for hosting him.

\bibliographystyle{alphabbrv-doi}
\bibliography{coding}

\appendix
\section*{\huge{Appendix}}
\section{An Analysis of the Code Concatenation Scheme}
\label{apdx:analysis}
In this section, we analyze Algorithms~\ref{alg:schemeALICE} and~\ref{alg:schemeBOB} and prove Theorem \ref{thm:main}.

{
\renewcommand{\thetheorem}{\ref{thm:main}}
\begin{theorem}
Given any two-party binary interactive protocol $\pi$ of length $N$, there exists some constant $C$ and an efficient randomized protocol $\Pi$ of length $O(N+T)$ that simulates $\pi$ 
with probability~$1-2^{-\Omega(N)}$
over a binary channel in the presence of an arbitrary and a priori unknown number $T$ of probabilistic Erasure-Flip corruptions, with $p_i=\min\left\{\frac{CN}{i^2}, \frac{1}{2}\right\}$.
\end{theorem}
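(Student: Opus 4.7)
Plan: My approach is a reduction from the UPEF simulation problem to the insertion-deletion-substitution error model for which Theorem~\ref{thm:HSV18} provides a resilient scheme. The inner challenge-response loop simulates $\pi'$ symbol by symbol; each successful iteration delivers one symbol of $\pi'$, while corrupted iterations fall into a few standard categories whose cumulative effect on $\pi'$ I can bound in terms of the actual number of bit flips.

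First, I would carry out a per-iteration case analysis on the $2k$ bit-transmissions of a single scheme iteration, classifying the outcome as either \emph{clean} (both parties advance correctly), \emph{stalled} (no party advances and both transcripts remain aligned), \emph{substituted} (both advance, but a content-bit flip causes the wrong symbol to be recorded in a transcript, yielding one HSV-level substitution), or \emph{desynced} (a parity-bit flip causes one party to advance while the other does not, yielding at most one HSV-level insertion or deletion). Two observations drive the reduction: an iteration whose corruptions are all erasures and no actual flips always falls into category~(a) or~(b) and creates no HSV-level error; and a desync self-heals within one subsequent iteration in the absence of further corruptions, so the number of edit errors at the HSV layer is bounded by a constant times the number of parity-bit flips. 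Ruling out cascading desyncs and correctly accounting for iterations in which both content and parity bits are flipped simultaneously is the most delicate part of the proof.

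Second, I would bound the communication cost. Every non-clean iteration consumes at least one corruption (an erasure or a flip) from the adversary's budget, so the number of such iterations is at most~$T$. Together with the $N'/2$ clean iterations needed to exhaust $\pi'$, the total is $O(N' + T)$ iterations, each of $O(k) = O(1)$ bits, giving $|\Pi| = O(N + T)$ as required.

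Third, for correctness, let $F$ denote the total number of bit flips that actually occur during the execution. By the iteration analysis, the HSV-level error count is $O(F)$. The adversary picks $E \subset \mathbb{N}$ with $|E| = T$ obliviously, and each $i \in E$ contributes an independent Bernoulli of mean $p_i$ to $F$, so
\[
\E[F] \le \sum_{i=1}^{T} \min\!\left\{\tfrac{CN}{i^{2}},\,\tfrac{1}{2}\right\}.
\]
Splitting the sum at $i_{0} = \sqrt{2CN}$ yields $\E[F] = O(\sqrt{N})$, independently of $T$. Because $\E[F] = O(\sqrt{N}) \ll N' = \Theta(N)$, the multiplicative Chernoff bound in the large-deviation regime $\delta \gg 1$ gives $\Pr[F \ge N'/90] \le 2^{-\Omega(N)}$ once $C$ is chosen large enough. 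Conditioning on this event, the total HSV-level error count lies below the threshold $(1/45-\eps)N'$ tolerated by Theorem~\ref{thm:HSV18}, which then outputs the correct transcript of $\pi$ with probability $1 - 2^{-\Omega(N')} = 1 - 2^{-\Omega(N)}$. A union bound over the "$F$ small" event and the HSV success event gives the claimed $1 - 2^{-\Omega(N)}$ overall success probability.
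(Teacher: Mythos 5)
Your high-level plan mirrors the paper's: concatenate with the HSV insertion-deletion scheme, argue that UPEF flips translate into a bounded number of HSV-level edit errors, and then use Chernoff to show the flip count is small except with probability $2^{-\Omega(N)}$. The Chernoff calculation (splitting $\sum p_i$ at $i_0 = \sqrt{2CN}$ to get a small expected flip count, then a multiplicative deviation bound) and the communication count (every non-clean iteration burns at least one corruption) are both essentially what the paper does, and your tighter $\E[F] = O(\sqrt{N})$ bound is correct even though the paper uses a looser $O(N)$ that still suffices. Minor slip: $C$ must be chosen \emph{small} enough, not ``large enough,'' so that $\sum_i p_i$ stays below $N'/90$.

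The genuine gap is the step you explicitly flag as ``the most delicate part.'' Your per-iteration taxonomy (clean / stalled / substituted / desynced) does not, by itself, yield the bound that HSV-level edit errors are $O(F)$, because the correspondence between iterations of $\Pi$ and rounds of $\pi'$ is not local: a single round of $\pi'$ can be stretched across several consecutive \emph{sequences} of $\Pi$-iterations. The paper makes this precise with an explicit example: a sequence in which Bob inserts twice while Alice makes no progress, followed by a sequence in which Bob inserts then Alice inserts, corresponds to \emph{four} rounds of $\pi'$ with two out-of-sync corruptions and two substitutions, and the accounting of flips-to-edit-errors must be done across this whole block, not per iteration. This is why the paper introduces the notions of \emph{sequences}, \emph{frames}, and \emph{segments} (Definitions A.4, A.7, A.13) and proves the central Theorem A.8 (\texttt{thm:frame}), which shows a frame with $f$ flips admits a matching $\pi'$-execution with at most $2f$ edit-corruptions. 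Your claim that ``the number of edit errors at the HSV layer is bounded by a constant times the number of parity-bit flips'' is exactly the content of this theorem, but asserting it is not proving it, and proving it occupies the bulk of the paper's Appendix A. Without this piece, the reduction to Theorem~\ref{thm:HSV18} is not established.

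One more caution on the classification itself: your ``clean'' category (both parties advance correctly) is not free of corruptions. Per Lemma A.3(2)(b), mid-sequence iterations where both parties advance with identical content can require \emph{two} parity-bit flips (both $r_a$ and $r_b$ flipped while the parties are out of sync), so ``clean at the transcript level'' and ``no corruption occurred'' are not equivalent. This does not break the communication bound (those iterations still burn corruptions from the budget), but it is another symptom of why a per-iteration view is insufficient for the error-translation argument.
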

\addtocounter{theorem}{-1}
}

We argue that for any~$\pi$ satisfying the conditions in \Cref{thm:main}, the protocol $\Pi$ obtained by applying Algorithms~\ref{alg:schemeALICE} and~\ref{alg:schemeBOB} on $\pi$ satisfies the theorem. 
In the following, we analyze the behavior of $\Pi$ (in the UPEF model), and reduce it to an execution of $\pi'$ in the insertion-deletion model with a related noise pattern. We show that this noise pattern exhibits the sufficiently low fraction of edit-corruptions required by Theorem~\ref{thm:HSV18} to guarantee the success of the execution of $\pi'$. Consequently, this implies the success of~$\Pi$.

Throughout this section we assume a fixed $\pi,x,y$. Let $\pi'$ and $\Pi$ be given by Algorithms~\ref{alg:schemeALICE} and~\ref{alg:schemeBOB}. When discussing executions of $\Pi$ and $\pi'$ we assume some arbitrary noise pattern that occurs during the execution, unless a pattern is explicitly specified.
We begin the analysis by setting forth several definitions and observations that will be useful throughout the analysis.

\begin{definition}[Iterations]
\label{def:iteration}
For any $i\in\mathbb{N}$, the consecutive rounds 
$[2(i-1)(\log|\Sigma'| + 1) + 1,$ $2i(\log|\Sigma'| + 1)]$
are called \emph{iteration}~$i$.
\end{definition}
The $i$-th iteration as defined above corresponds to the rounds in which Alice and Bob transmit during their $i$-th execution (iteration) of the while loop (Line~\ref{line:alice:while} and~\ref{alg:bob:while}, respectively).

The following observation is rather straightforward.
\begin{lemma}
\label{lem:progress}
In each iteration, each party adds either 0 or 2 symbols to its transcript.
\end{lemma}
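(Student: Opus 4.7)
The plan is to prove the lemma by a direct trace through the pseudocode of Algorithms~\ref{alg:schemeALICE} and~\ref{alg:schemeBOB}, since a single execution of the while-loop body of either party corresponds to exactly one iteration in the sense of Definition~\ref{def:iteration}. The claim reduces to checking that, within one pass through each party's loop, the only possible net effect on the transcript is the addition of either $0$ or $2$ symbols.

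For Alice, I would first observe that the only three lines that modify $\T_a$ inside the loop body are Line~\ref{line:alice:insertAlice} (appending $m_{send}$), Line~\ref{line:alice:insertBob} (appending $m_{rec}$), and Line~\ref{line:alice:delete} (removing the last symbol). Line~\ref{line:alice:insertAlice} is always executed, so after it $\T_a$ has grown by one symbol. The \textbf{if}/\textbf{else} starting at Line~\ref{line:alice:if} then either appends a second symbol (total growth $2$), or deletes the one just appended (total growth $0$), and these are the only two possibilities. So Alice's contribution per iteration is in $\{0,2\}$.

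For Bob, the three relevant lines are Line~\ref{line:bob:insertAlice}, Line~\ref{line:bob:insertBob}, and the branch controlled by the flag $err$. I would argue by tying the two \textbf{if} statements together through $err$: if the condition at Line~\ref{line:bob:if} holds, then Line~\ref{line:bob:insertAlice} appends $m_{rec}$ and sets $err \leftarrow 0$, which forces the second block (Lines~B.13--\ref{line:bob:send}) to execute and append a second symbol $m_{send}$ via Line~\ref{line:bob:insertBob}; if the condition fails, no symbol is appended at Line~\ref{line:bob:insertAlice}, $err \leftarrow 1$, and the \textbf{else} branch transmits the stored $m$ from memory without touching $\T_b$. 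Thus Bob's two appends happen simultaneously or not at all, yielding net growth $0$ or $2$.

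There is really no obstacle here beyond bookkeeping; the only care point is to make explicit that the temporary append-then-delete on Alice's side genuinely leaves $\T_a$ unchanged (i.e., that Line~\ref{line:alice:delete} removes \emph{exactly} the symbol appended at Line~\ref{line:alice:insertAlice} and nothing else), which follows from the fact that between these two lines Alice only sends, receives, and reads, without altering~$\T_a$. Combining the two cases concludes the proof.
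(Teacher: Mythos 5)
Your proof is correct and follows essentially the same route as the paper's: a direct line-by-line trace of the loop bodies of Algorithms~\ref{alg:schemeALICE} and~\ref{alg:schemeBOB}, noting that Alice's unconditional append at Line~\ref{line:alice:insertAlice} is either complemented by a second append at Line~\ref{line:alice:insertBob} or cancelled at Line~\ref{line:alice:delete}, and that Bob's two appends are coupled through the flag $err$. The extra remark that Line~\ref{line:alice:delete} removes exactly the temporarily appended symbol is a reasonable bit of hygiene but does not change the argument.
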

\begin{proof}
Let us first count the symbols added by Alice during an iteration. Alice always adds one symbol to~$\T_a$ in Line~\ref{line:alice:insertAlice}. 
Then, according to the \emph{if} statement in Line~\ref{line:alice:if}, she either adds another symbol in Line~\ref{line:alice:insertBob}, or deletes one symbol in Line~\ref{line:alice:delete}. 
Next, consider Bob (Algorithm~\ref{alg:schemeBOB}); there are two cases here. In the first case,
Bob executes Line~\ref{line:bob:insertAlice} and adds Alice's symbol to~$\T_b$. In this case, $err=0$ and Bob adds another symbol in Line~\ref{line:bob:insertBob}. 
In the other case, Line~\ref{line:bob:insertAlice} is not executed and $err=1$, thus no further symbols will be added to~$\T_b$.
\end{proof}

\begin{definition}[Progress]
\label{def:progress}
When a party adds two symbols to its transcript in a certain iteration, we say that the party \emph{makes progress} in that iteration. Otherwise, we say that the party \emph{does not make progress}.
Further, we distinguish the following types of iterations:
\begin{enumerate}
    \item When both parties add the same two symbols to their transcripts, we say that the parties make the \emph{same progress}.
    \item When both parties make progress, but not the \emph{same} progress, we say that the iteration ends up with a \emph{substitution}.
    \item When only one party makes progress, we say that the iteration ends up with an \emph{insertion}. Further, the party that made the progress is the \emph{inserting party}.
\end{enumerate}
\end{definition}

An iteration in which at least one party makes progress will be designated as an iteration with progress. When neither party makes any progress in a given iteration, that iteration is said to have \emph{no-progress}. A party's \emph{state} is defined as $(\T_a, r_a)$ or $(\T_b, r_b, m)$, depending whether it is Alice or Bob. Note that the lines a party executes in some iteration and the messages it sends depend only on its state, its inputs and the message it receives from the other party during the iteration.

\subsection{Sequences}
\label{apdx:sequence:all}
The notion of a sequence, which we now define, 
is the basic unit of the simulation, which we can analyze in an organic manner.
A sequence contains consecutive iterations that begin at an iteration where Alice and Bob are ``synchronized'', and lasts until the next iteration they are synchronized again (which could be the immediately subsequent iteration, or several iterations later).

\begin{definition}[A sequence]
Assume $r_a=r_b$ at the beginning of some iteration $i$, and let $j \ge i$ be the first iteration with $r_a=r_b$ at its end. The consecutive series of iterations $[i,j]$ is called \emph{a sequence}.
\end{definition}

In this section we state and prove a lemma that describes the different types of sequences, and relates them to the minimal number of corruptions that can cause them.

First, note that the parties send messages in Lines~\ref{line:alice:send} and~\ref{line:bob:send} that has two parts,~$(m,r)$, which we refer to as ``the $m$ part'' and ``the $r$ part'', respectively, for short.
\begin{lemma}
\label{lem:sequence:all}
Let $[i,j]$ be a sequence. Then, one of the following holds:
\begin{enumerate}
    \item The sequence contains a single iteration, $i=j$, in which either (a) both parties make progress, and further, if there are no flips during this iteration, then both parties make the same progress; or, (b) both parties don't make progress, and there is at least one corruption.
    \item The sequence has more than one iteration, and the following holds:
    \begin{enumerate}
        \item At iteration~$i$ there is an insertion, and there is at least one corruption. If there are no flips in the $r$-bits in that iteration, then the inserting party is Bob.
        \item At each iteration~$k$, where $i<k<j$, either (a) both parties don't make any progress, or (b) both parties make progress and the noise flips both $r_a$ and~$r_b$. 
        In either case, there is at least one corruption in iteration $k$. Further, if there are no flips in the $m$ parts of the iterations and case (b) happens, the parties make the same progress.
        \item At iteration~$j$, only one of the parties makes progress. Further, if there are no flips in the $r$-bits in that iteration, then the party that makes progress is Alice. Additionally,
            if there are no flips in the sequence at all, then the progress Alice makes is equal to the progress Bob has made in iteration~$i$. 
    \end{enumerate}
\end{enumerate}
\end{lemma}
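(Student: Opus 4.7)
The plan is to prove the lemma via a tight case analysis built on a single parity invariant: at the start of every iteration, Bob's stored message $m$ carries parity equal to his current $r_b \bmod 2$. This follows because $m$ is refreshed only in the $err=0$ branch, where the stored parity is set to the just-incremented $r_b \bmod 2$, and otherwise both $r_b$ and $m$ are frozen together. Coupling this with the fact that Alice's transmitted parity is always her just-incremented $r_a \bmod 2$, each iteration reduces to a decision table in the four noise events on Alice's $r$-bit, Alice's $m$-part, Bob's $r$-bit, and Bob's $m$-part. I would write this decision table down first; its key entries are that in the sync case ($r_a = r_b$) Alice's sent $r$-bit is exactly what Bob expects, whereas in the desync case with $d = r_b - r_a = 1$ Alice's sent $r$-bit collides with $r_b \bmod 2$ and Bob's stored parity automatically matches Alice's expected $r_a^{\text{new}} \bmod 2$.

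For case~1 (single-iteration sequence), the constraint $r_a = r_b$ at both endpoints forces the increments of $r_a$ and $r_b$ to agree, so either both parties progressed or neither did. If neither did, at least one party rejected, which in the sync case can only happen through a $\perp$ in the received message; this $\perp$ requires a corruption, proving 1(b). If both progressed with no flips, then no erasures can have occurred either (any erasure would trigger a rejection), so every $m$-bit arrived intact and the two transcript appends are identical, yielding the ``same progress'' part of 1(a).

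For case~2 I would split into the boundary iterations and the middle. At iteration $i$, starting from sync, insertion requires exactly one acceptor; absent $r$-bit flips the decision table pins Bob as the inserter, since Bob's acceptance of Alice is automatic while flipping Alice's verdict on Bob requires a $\perp$ inside Bob's reply. At iteration $j$ the gap closes from $|d|=1$ back to $0$, so Alice must be the sole acceptor; absent $r$-bit flips in iteration~$j$ the table again forces this conclusion. Under the stronger ``no flips in the whole sequence'' hypothesis, Alice's local state $(\T_a, r_a)$ and Bob's memory $m$ have been frozen since iteration~$i$, so Alice's appended pair $(\pi'(x\mid \T_a), m)$ equals exactly what Bob appended in iteration~$i$. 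The middle iterations are then handled by induction on $k$ using the invariant ``$|d|=1$ and Bob's memory parity equals $r_b \bmod 2$''. Under this invariant, the decision table reveals that the only outcomes preserving the invariant while keeping the sequence open are (a) both parties reject, or (b) both accept---which the parity analysis forces to require a flip on each of the two $r$-bits; in either case a corruption is needed, and the ``same progress'' clause under no $m$-flips follows exactly as in 1(a).

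The hard part will be the middle-iteration induction, where one must exclude an asymmetric insertion that would push $|d|$ to $2$ and destroy the parity invariant driving the rest of the proof. Ruling this out requires careful accounting of when Bob's memory is refreshed across case-(b) iterations, so as to verify that the invariant indeed survives every allowed step. Once this invariant is in place, all remaining clauses of the lemma read off the decision table derived at the outset.
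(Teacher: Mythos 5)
Your high-level approach is essentially the same as the paper's: organize an exhaustive case analysis around the parity state of $(r_a,r_b)$, treating the synchronized and desynchronized starting states separately. The decomposition differs: the paper proves Lemma~\ref{lem:sequence:all} in three passes (Lemma~\ref{lem:sequence:manage} restricts corruptions to $r$-bits; Lemma~\ref{lem:sequence:manage:content:erasures} adds $m$-part erasures via an equivalence with $r$-bit erasures; the final step adds $m$-part flips, noting they never change accept/reject decisions), whereas you propose a single decision table seeded with the invariant that Bob's stored message always carries parity $r_b\bmod 2$. That invariant is correct and is indeed what tacitly drives the paper's case work, so surfacing it is a reasonable unification. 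Two things, however, need fixing before your plan is a proof.

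First, your argument for case~1(b) asserts that in the synchronized case rejection ``can only happen through a $\perp$ in the received message.'' This is false: if Alice's $r$-bit is flipped rather than erased, Bob receives a parity equal to $r_b\bmod 2$, which fails his check with no $\perp$ present (this is exactly case (2) in the sync branch of the paper's Lemma~\ref{lem:sequence:manage}). The ultimate conclusion---at least one corruption occurred---still holds because a flip is also a corruption, but the reasoning as written drops a case, and the same gloss reappears at iteration~$i$ where ``Bob's acceptance of Alice is automatic'' silently ignores erasures of Alice's $r$-bit or $m$-bits.

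Second, the step you defer as ``the hard part''---ruling out an asymmetric insertion in a middle iteration that would push $|d|$ to $2$---is not optional, and fortunately it is immediate once phrased in parity. In a desynchronized iteration, if exactly one party accepts then exactly one of $r_a, r_b$ changes by $1$ (and the other stays put), so the parities of $r_a$ and $r_b$ coincide at the iteration's end; by the very definition of a sequence that iteration is therefore the last one, not a middle one. Hence a middle iteration admits only ``both reject'' or ``both accept,'' and your parity computation correctly shows the latter forces flips of both $r$-bits. Filling this in (and correcting the $\perp$-only claim) brings your outline into line with the paper's conclusions.
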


We prove lemma \ref{lem:sequence:all} in two steps. First, we restrict the discussion to the case where corruptions could affect only the the $r$~part, i.e., the bits~$r_a$ and~$r_b$ communicated by the parties. These bits are responsible for the synchronization between the parties and determine whether parties make progress or not. We then regard the case that the transmissions that correspond to the $m$~part might be corrupted as well. These are the ``content'' bits, so corrupting them translates to corrupting the ``data'' of the message.  Corrupting the data part cannot cause synchronization issues since whether a message is recorded or discarded by a party is decided solely by the $r_a,r_b$ bits.

Let us start with analyzing corruptions that occur only in the $r_a,r_b$ parts.
\begin{lemma}
\label{lem:sequence:manage}
    Assume the noise (flips and erasures) corrupts only transmissions when $r_a$ and $r_b$ are communicated. Then, Lemma~\ref{lem:sequence:all} holds.
\end{lemma}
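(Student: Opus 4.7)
The plan is to prove the lemma by a careful case analysis on which of the two $r$-bits transmitted in each iteration of the sequence $[i,j]$ are corrupted, combined with tight bookkeeping of the counters $r_a,r_b$ and Bob's memory variable~$m$. The central structural fact I will use repeatedly is that Bob's memory~$m$ always holds a message whose attached parity equals $r_b \bmod 2$: $m$ is reassigned in Line~B.15 only when Bob accepts and has just incremented~$r_b$, so its parity always coincides with the current value of $r_b \bmod 2$. Consequently, Bob's transmitted parity is $(r_b{+}1) \bmod 2$ after accepting in the current iteration and $r_b \bmod 2$ when he replays the memory. Since Lemma~\ref{lem:sequence:manage} restricts corruptions to the $r$-parts, the $m$-parts always arrive intact, so the accept/reject decisions are dictated entirely by the (possibly noisy) parity bit and the receiver's current counter parity.

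The first step is to analyze iteration~$i$, where $r_a=r_b=k$ at the start. Alice transmits parity $(k{+}1)\bmod 2$, which naturally satisfies Bob's acceptance criterion, and, if Bob accepts, he replies with parity $(k{+}1)\bmod 2$. Enumerating the four possibilities for the pair (A-to-B, B-to-A) of $r$-bit corruption yields: (i)~no corruption $\Rightarrow$ both accept and make the same progress (case~1a); (ii)~A-to-B corruption alone $\Rightarrow$ Bob rejects and replays parity $k\bmod 2$, which Alice naturally rejects, giving case~1b; (iii)~B-to-A corruption alone $\Rightarrow$ Bob accepts but Alice rejects, producing an insertion by Bob; (iv)~A-to-B corruption together with a B-to-A flip of the replayed parity from $k\bmod 2$ to $(k{+}1)\bmod 2$ $\Rightarrow$ Bob rejects while Alice accepts, producing an insertion by Alice and costing at least one $r$-bit flip. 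Case~(1) of the lemma follows directly, and for multi-iteration sequences this establishes the iteration-$i$ clauses of case~(2a): at least one corruption, and (since only case~(iv) produces insertion at Alice and requires a flip) ``no flips in the $r$-bits'' forces the inserting party to be Bob.

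For iterations $i<t\le j$ I proceed inductively, carrying the parity invariant on Bob's memory along with the state $(r_a,r_b)$. At the start of each such iteration the counters are unequal, and the parity of $r_a$ relative to $r_b$ determines which outcomes are natural. Running through the same four-way corruption enumeration in an interior iteration and tracking how the state evolves, I will conclude that the only patterns admissible at a middle iteration ($i<t<j$) are (a)~both parties reject, requiring an A-to-B corruption to override Bob's natural accept, or (b)~both parties accept, requiring flips on both the A-to-B and B-to-A $r$-bits; in case~(b), since the $m$-parts are intact, both parties add the same pair of symbols, so the progresses coincide. At the closing iteration~$j$, the condition $r_a=r_b$ at its end forces the starting state to have counters differing by exactly~$1$; splitting on who is ahead, a Bob-ahead-by-$1$ state closes naturally (no $r$-bit flips, with Alice making the progress), whereas an Alice-ahead-by-$1$ state closes only via an A-to-B flip (with Bob making the progress). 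The final sub-claim of case~(2c) --- matching progresses when the entire sequence is flip-free --- follows because Bob's memory, written at iteration~$i$ with the symbols of Bob's initial progress, is then never overwritten and is precisely what Alice records at~$j$.

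The main obstacle is the inductive middle-iteration analysis: one must justify that every ``one-party-insertion'' corruption pattern in an interior iteration either collapses that iteration into~$j$ itself (so it is not genuinely middle) or is precluded by the joint invariants on Bob's memory parity and on the admissible evolutions of $(r_a,r_b)$ relative to the later closure at~$j$. Disentangling these cases requires simultaneously juggling the corruption pattern in each direction and the parities of the counters over several iterations, which is the only place where the argument is technically heavy.
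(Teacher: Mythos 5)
Your overall strategy is the same as the paper's --- an exhaustive case analysis on which $r$-bits get corrupted, organized by the parity relationship between $r_a$ and $r_b$ at the start of the iteration --- and your explicit use of the invariant ``Bob's memory $m$ always carries parity $r_b \bmod 2$'' is a nice way to make the bookkeeping cleaner (the paper uses this implicitly). Your analysis of iteration~$i$ is essentially sound. However, the two places you flag as the technically heavy part are exactly where your planned conclusions are wrong, and in a way that would cause the proof to fail if you carried it out.

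First, at a middle iteration ($r_a \neq r_b \bmod 2$ at the start), your case~(a) asserts that both parties rejecting ``requires an A-to-B corruption to override Bob's natural accept.'' That is backwards. After Alice increments $r_a$, the parity she sends \emph{equals} $r_b \bmod 2$, and Bob accepts only when the received parity is \emph{different} from $r_b \bmod 2$. So Bob's natural behavior at every non-first iteration of a sequence is to \emph{reject} and replay his memory, whose parity (by your own invariant) equals $r_b \bmod 2 = r_a \bmod 2$ post-increment, which Alice naturally \emph{accepts}. Hence the natural, corruption-free outcome of a middle-looking iteration is in fact an Alice-insertion that \emph{ends} the sequence; to make both parties reject you need a \emph{B-to-A} corruption (of Bob's $r$-bit), not an A-to-B one. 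This matches the paper's cases~1--4 of the $r_a\ne r_b$ block, all of which have $r_b$ corrupted, and two of which have $r_a$ left intact.

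Second, your closing-iteration split ``on who is ahead'' is the wrong criterion. You claim a Bob-ahead state closes naturally with Alice progressing, while an Alice-ahead state closes only via an A-to-B flip with Bob progressing. In reality the two states are interchangeable: regardless of who is ahead, the corruption-free closing gives Alice the progress, and a flip of the A-to-B $r$-bit gives Bob the progress (with Alice decrementing back). The determining factor is whether $r_a$ is flipped, not the sign of $r_a - r_b$. If you split on ``who is ahead'' instead of ``was $r_a$ flipped,'' you will fail to establish item~(2)(c) of Lemma~\ref{lem:sequence:all}, namely that \emph{no $r$-flips in the last iteration forces Alice to be the progressing party}, because that statement is supposed to hold unconditionally, including when Alice is the one ahead. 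Both errors trace to the same miscalculation of which party is on the ``natural accept'' side once $r_a$ has been offset from $r_b$; fixing that sign would bring your case table in line with the paper's.

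A minor additional gap: your four-way enumeration at iteration~$i$ does not explicitly cover the pattern (A-to-B corrupted, B-to-A erased), which is a fifth outcome distinct from your~(ii) and~(iv); it results in no progress and folds into case~(1b), but it should be acknowledged so the enumeration is visibly exhaustive.
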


\begin{proof}
In each iteration, the adversary has eight possible ways to corrupt the transmissions that contain the $r_a,r_b$ information (in addition to not corrupting at all): erasing only one or both of $r_a, r_b$, flipping only one or both of them, or erasing one of them and flipping the other one. 
We analyze the nine cases in turn.
We repeat the proof twice, once for the case  
where $r_a = r_b $ (mod~2) at the beginning of the iteration, and second time for iterations with $r_a \ne r_b$ (mod~2) at the beginning of the iteration.

\goodbreak
\noindent\begin{description}[left=0pt]
\item \textbf{The case where $r_a=r_b \pmod{2}$ at the beginning of the iteration:}
\begin{enumerate}[topsep=0pt,parsep=0pt]
    \item \textbf{No corruptions:}
    Alice raises $r_a$ (Line~\ref{line:alice:incRA}) and sends this value to Bob (Line~\ref{line:alice:send}); note that now $r_a\ne r_b \pmod 2$. Bob gets $r_a' \neq r_b \pmod 2$ and since there are no corruptions and $r'_a$ fits the expected value, Bob sets $err=0$ and thus he raises $r_b$ (Line~\ref{line:bob:increaseRB}), which is sent back to Alice (Line~\ref{line:bob:send}). Alice gets the expected value (Line~\ref{line:alice:if}) and thus does not change her $r_a$ anymore. We get that $r_a=r_b \pmod 2$ at the end of this iteration, and both parties make the same progress.

    \item \textbf{Only $r_a$ is flipped:} As in the previous case, Alice raises $r_a$ and sends this value to Bob; $r_a \neq r_b$, but Bob gets $r_a' = r_b \pmod 2$ so $r'_a$ does not fit the expected value. Thus, Bob sets $err=1$, does not raise~$r_b$ and sends Alice the message saved in his memory, which has value $r$ such that $r\ne r_a\pmod 2$ (Line~\ref{line:bob:previous}). Alice does not get the expected value (Line~\ref{line:alice:if}) and decreases her $r_a$ (Line~\ref{line:alice:decRA}). We get that $r_a=r_b$ (mod 2) at the end of this iteration, and none of the parties makes progress.
    \item \textbf{Only $r_a$ is erased:} This case is similar to the case of flipping only~$r_a$, but instead of getting the wrong~$r_a$, Bob gets $\bot$. Nevertheless, Bob and Alice perform exactly as in the previous case.
    \item \textbf{Both $r_a, r_b$ are erased:} This is similar to the case of erasing only~$r_a$. The only difference is that instead of getting the wrong~$r_b$, Alice gets $\bot$, but she executes the same code nevertheless.
    \item \textbf{$r_a$ is flipped,  $r_b$ is erased:} This is similar to the case of erasing both $r_a, r_b$. Here, instead of getting~$\bot$, Bob gets the wrong $r_a$, but he executes the same lines (and so is Alice).
\end{enumerate}
\qquad\tikz{
\draw (1,0) -- (\columnwidth,0);
\draw[fill=black] (1,0) circle (0.075);
\draw[fill=black] (\columnwidth,0) circle (0.075);
}
\begin{enumerate}[resume*,topsep=0pt,parsep=0pt]

    \item \textbf{Only $r_b$ is flipped:} Alice raises $r_a$ (Line~\ref{line:alice:incRA}) and sends this value to Bob (Line~\ref{line:alice:send}). Bob gets $r_a' \neq r_b \pmod 2$ and since there are no corruptions and $r'_a$ fits the expected value, Bob sets $err=0$ and thus he raises $r_b$ (Line~\ref{line:bob:increaseRB}), which is sent back to Alice (Line~\ref{line:bob:send}). Alice gets flipped $r_b$, which does not match to her expected value, (Line~\ref{line:alice:if}) and thus decreases $r_a$ in Line~\ref{line:alice:decRA}. We get that $r_a \neq r_b\pmod 2$ at the end of this iteration, and only Bob makes progress.
    \item\textbf{Only $r_b$ is erased:} Similar to the case of flipping only $r_b$. Here, instead of getting the wrong $r_b$, Alice gets $\bot$ and behaves as in the previous case.
    \item\textbf{$r_a$ is erased,  $r_b$ is flipped:} Alice raises $r_a$ (Line~\ref{line:alice:incRA}) and sends this value to Bob (Line~\ref{line:alice:send}). Bob gets $\bot$, sets $err=1$, and thus he sends Alice the message he sent in the previous iteration (Line~\ref{line:bob:previous}), and $r_a \neq r_b \mod 2$. Alice gets the flipped $r_b$, which fits to her expected value (Line~\ref{line:alice:if}) and thus adds the message from Bob to her transcript. We get that $r_a \neq r_b$ (mod 2) at the end of this iteration, and only Alice makes progress.
    \item\textbf{Both $r_a, r_b$ are flipped:} Similar to the case of erasing $r_a$ and flipping $r_b$. Instead of getting a~$\bot$, Bob gets the wrong $r_a$, but he executes the same lines (and also Alice).
\end{enumerate}
Note that cases (1)--(5) end up with $r_a=r_b\pmod{2}$, while cases (6)--(9) end up with $r_a\ne r_b\pmod2$.
In case (1) both parties make the same (correct) progress, while in (2)--(5) none of the parties makes progress. Thus, items (1)--(5) correspond to item~$(1)$ of the lemma's statement; indeed, when both parties don't make progress there is at least one corruption.

Items (6)--(9) correspond to item $(2)(a)$, where only one party makes a progress and $r_a \ne r_b \pmod2$ at the end of this iteration, so the sequence is not over.
Also note that in the case where there are no flips (but there are erasures, i.e., item~(7)), Bob is the party that makes progress. In all cases (6)-(9) there is at least one corruption.

\goodbreak
\item\textbf{The case where $r_a\ne r_b \pmod{2}$ at the beginning of the iteration:}
\begin{enumerate}[topsep=0pt,parsep=0pt]
    \item\textbf{Only $r_b$ is erased:} Alice raises $r_a$ (Line~\ref{line:alice:incRA}), and $r_a=r_b\pmod2$. Thus, Bob gets $r_a$ which does not fit to the expected, and sends Alice the message he sent in the previous iteration (Line~\ref{line:bob:previous}). Alice gets~$\bot$, decreases~$r_a$ (Line~\ref{line:alice:decRA}) and at the end of the iteration $r_a \neq r_b \pmod 2$ and none of the parties make progress.
    \item\textbf{Both $r_a,r_b$ are erased:} This is similar to the case of erasing $r_b$. Instead of getting the wrong $r_a$, Bob gets $\bot$, but he executes the same (and so does Alice).
    \item\textbf{Only $r_b$ is flipped:} This is similar to the case of erasing $r_b$. Here, instead of getting~$\bot$, Alice gets the wrong~$r_b$, but she still executes the same lines.
    \item\textbf{$r_a$ is erased, $r_b$ is flipped:} Similar to the case of erasing $r_a,r_b$. Instead of getting $\bot$, Alice gets the wrong $r_b$, but she executes the same lines.
    \item\textbf{Both $r_a,r_b$ are flipped:} Alice raises $r_a$ (Line~\ref{line:alice:incRA}), and $r_a=r_b\mod2$. Due to the flip of $r_a$, Bob gets $r_a$ which fits to the expected, raises $r_b$ (Line~\ref{line:bob:increaseRB}) so $r_a \neq r_b \pmod 2$, inserts the message from Alice (Line~\ref{line:bob:insertAlice}) and sends Alice the next message (Line~\ref{line:bob:send}). Due to the flip of $r_b$, Alice gets $r_b$ which fits to the expected and inserts the received symbol (Line~\ref{line:alice:insertBob}). At the end of the iteration $r_a \neq r_b \pmod 2$, and both parties make the same progress.
\end{enumerate}
\qquad\tikz{
\draw (1,0) -- (\columnwidth,0);
\draw[fill=black] (1,0) circle (0.075);
\draw[fill=black] (\columnwidth,0) circle (0.075);
}
\begin{enumerate}[resume*,topsep=0pt,parsep=0pt]
    \item\textbf{No corruptions:} Alice raises $r_a$ (Line~\ref{line:alice:incRA}), and now $r_a=r_b\pmod2$. Bob gets~$r_a$ which does not fit to the expected, and sends Alice the message from his memory  (Line~\ref{line:bob:previous}). Thus, Alice gets a message with $r_a=r_b \mod 2$ and inserts the message from Bob (Line~\ref{line:alice:insertBob}). At the end of this iteration $r_a=r_b \pmod 2$ and only Alice has made progress.
    \item\textbf{Only $r_a$ is erased:} This is similar to the case of no corruptions. Now, instead of getting the wrong~$r_a$, Bob gets~$\bot$ but still performs in the same manner (and so does Alice).
    \item\textbf{Only $r_a$ is flipped:} Alice raises $r_a$ (Line~\ref{line:alice:incRA}) and now $r_a=r_b\pmod2$. Bob gets $r_a$ which fits to the expected, raises $r_b$ (Line~\ref{line:bob:increaseRB}) so $r_a \neq r_b \pmod 2$, and sends Alice the next message (Line~\ref{line:bob:send}). Alice gets $r_b$ which does not fit to the expected, decreases $r_a$ (Line~\ref{line:alice:decRA}) and deletes the last symbol (Line~\ref{line:alice:delete}). At the end of the iteration $r_a=r_b\pmod2$ (since Bob changed his $r_b$) and (only) Bob makes progress.
    \item\textbf{ $r_a$ is flipped,  $r_b$ is erased:}
    this is similar to the case of flipping~$r_a$. Instead of getting the wrong $r_b$, here Alice gets~$\bot$, but she executes the same code as in the previous case.
\end{enumerate}
\end{description}
Note that after cases (1)--(5) it holds that 
$r_a \neq r_b\pmod2$, thus the sequence does not end after this iteration. 
Further, in cases (1)--(4) both parties make no progress, yet in case~(5), both parties make the same progress. These cases correspond to part (2)(b) in the lemma's statement. In each one of these cases there is at least one corruption.

In cases (6)--(9), exactly one party makes progress and we end up having $r_a = r_b\pmod2$, thus, this is the last iteration of the sequence, corresponding to part (2)(c) in the lemma's statement.
We note that, as stated, in the two cases where there are no flips, cases (6)--(7), Alice is the party that makes progress. 

\smallskip
The last claim we have to prove is that when there are no flips in the sequence~$[i,j]$, then the progress of Bob at iteration~$i$ equals the progress of Alice at  iteration~$j$. 
To see that, we denote the parties transcripts at the beginning of iteration $i$ as $\T_a, \T_b$, and the parties inputs as $x, y$. At iteration~$i$ Alice sends $\sigma=\pi'(x\mid\T_a)$, which is not flipped by the claim condition, and not erased since otherwise Bob would not progress in iteration~$i$, in contradiction. Thus, Bob's progress at iteration~$i$ consists of $\sigma$ and $\rho=\pi'(y\mid\T_b \circ \sigma)$. 
Alice does not make progress in iterations $[i, j-1]$ at all, and Bob does not make progress in iterations $[i+1, j]$ at all. 
Thus, at the beginning of iteration~$j$ Alice's transcript is $\T_a$. Alice sends Bob $\sigma=\pi'(x\mid\T_a)$ (Line~\ref{line:alice:compute}), which is added to her transcript. 
Since Bob does not make progress in iteration~$j$, he sends Alice the message he sent in iteration $j-1$ (Line~\ref{line:bob:previous}), which is the message he sent in iteration $j-2$ and so on, which is~$\rho$. Therefore, Alice's progress in iteration~$j$ consists of $\sigma$ and $\rho$, and this equals to Bob's progress in iteration~$i$.
\end{proof}

We can now analyze a slightly more general case, where corruption can affect either the $m$ or the $r_a,r_b$ part of every message, but transmission that belongs to the~$m$ part can only be \emph{erased}. 
\begin{lemma}
\label{lem:sequence:manage:content:erasures}
    Assume that flips corrupts only transmissions where $r_a,r_b$ are communicated while erasures may corrupt any transmission.  Then, Lemma~\ref{lem:sequence:all} holds.
\end{lemma}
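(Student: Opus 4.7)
The plan is to reduce this lemma to Lemma~\ref{lem:sequence:manage} by exploiting a key structural feature of the parties' code: both Alice's check in Line~\ref{line:alice:if} and Bob's check in Line~\ref{line:bob:if} reject a received word $m'$ as soon as \emph{any} of its bits equals $\bot$, regardless of whether the $\bot$ falls in the $m$-part or in the $r$-part. Consequently, an erasure of any single bit inside the $m$-part of a transmission has exactly the same effect on the receiver's branch decision (and hence on the state evolution of the iteration) as an erasure of the corresponding $r$-bit: the receiver discards the message, and from that point on Alice executes Lines~\ref{line:alice:delete}--\ref{line:alice:decRA} or Bob sets $err \leftarrow 1$ in Line~\ref{line:bob:err}.

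Given any noise pattern allowed by the present hypothesis, I would define a modified pattern in which every erasure in an $m$-part bit is replaced by an erasure of the $r$-bit of the same transmission (or is simply removed, if that $r$-bit is already corrupted and hence the replacement is redundant). The modified pattern corrupts only transmissions of $r_a$ and $r_b$, and so falls exactly within the scope of Lemma~\ref{lem:sequence:manage}. By the observation above, the parties' states after each iteration, and therefore the sequence structure, are identical under both patterns. Consequently, each of the per-case conclusions derived from Lemma~\ref{lem:sequence:manage} transfers verbatim to the original pattern.

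The remaining points are mostly bookkeeping. The "at least one corruption" clauses transfer because the reduction never removes a corruption from an iteration that had none in the $r$-part alone. The conditions of the form "no flips in the $r$-bits" are preserved verbatim, since the reduction only rearranges erasures and neither introduces nor deletes flips. The "no flips in the $m$-parts" hypothesis in case~$2(b)$ is automatic, because the hypothesis of the current lemma forbids flips in $m$-parts entirely; and in that case, since both parties make progress, neither $m$-part could have been erased either (an $m$-part erasure forces the receiver to reject the message), so both $m_{send}$ symbols cross the channel intact and the same computation as in the proof of Lemma~\ref{lem:sequence:manage} (using $\sigma=\pi'(x\mid\T_a)$ and $\rho=\pi'(y\mid\T_b\circ\sigma)$) shows the two parties append identical symbols.

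The only mild obstacle is giving a clean definition of the noise-pattern mapping in each combination of simultaneous corruptions at the same round (both $m$-part and $r$-part corrupted, erasures mixed with flips on the $r$-bits, etc.) so that the "state-equivalence" step is unambiguous; once that is pinned down, invoking Lemma~\ref{lem:sequence:manage} closes the argument.
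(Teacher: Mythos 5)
Your reduction to Lemma~\ref{lem:sequence:manage} is the same route the paper takes, and the key structural observation---that a received word is rejected by Lines~\ref{line:alice:if}/\ref{line:bob:if} whenever it contains any $\perp$, regardless of position---is exactly right. However, the specific noise-pattern mapping you describe is incorrect in one case. You propose that when the $m$-part is erased and the $r$-bit of the same transmission is already corrupted, you simply drop the $m$-erasure because ``the replacement is redundant.'' This holds only when the pre-existing $r$-bit corruption is itself an erasure. If the $r$-bit is \emph{flipped}, dropping the $m$-erasure can turn a reject into an accept, since the flipped parity bit may happen to match the receiver's expectation. Concretely, suppose $r_a \neq r_b \pmod 2$ at the start of an iteration, and the noise erases a bit of Alice's $m$-part while flipping her $r_a$ bit. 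In the original pattern Bob sees a $\perp$ and rejects; in your modified pattern Bob sees an intact word whose flipped $r_a$ now satisfies the check in Line~\ref{line:bob:if}, and he accepts. The states diverge, so the state-equivalence your whole argument hinges on fails.

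The fix---which is what the paper's proof does implicitly---is to replace the \emph{entire} corruption of that transmission with a single $r$-erasure whenever the $m$-part is erased, overwriting any existing $r$-flip. This always forces a reject, matching the original behavior. Although it may delete flips, the modified pattern's flips are then a \emph{subset} of the original's, so the ``if no flips, then\ldots'' hypotheses in Lemma~\ref{lem:sequence:all} still transfer in the one direction you need (original has no flips $\Rightarrow$ modified has no flips $\Rightarrow$ conclusion applies to the modified pattern $\Rightarrow$ by state-equivalence, to the original). The remaining bookkeeping in your proposal---the ``at least one corruption'' clauses and the same-progress analysis for case~$2(b)$---is sound.
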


\begin{proof}
Assume the transmission of some~$(m,r)$. If the $m$ part is not erased, we are back in the premises of Lemma~\ref{lem:sequence:manage}. Otherwise, we argue that the parties behave identically to the case where $m$ is uncorrupted and $r$ is erased. Indeed, in both cases the receiver of the message executes the same lines (Line~\ref{line:bob:err} or Line~\ref{line:alice:delete}), which changes its transcript in the same manner (i.e., does not change at all for Bob, and deleting the last symbol for Alice). The state of a party is determined by the transcript, and since these are identical in both cases, the next message that the receiver of $(m, r)$ will send later on will be the same, etc.

Since erasing $m$ is equivalent to erasing $r$ (while keeping $m$ uncorrupted), we are again in the premises of Lemma~\ref{lem:sequence:manage}, which completes the proof.
\end{proof}

We can finally analyze the general case, where both flips and erasures can affect either the $m$ or the $r_a,r_b$ parts of each message.
\begin{proof}[Proof of Lemma~\ref{lem:sequence:all}]
Consider the transmission of some~$(m, r)$.
The only corruption cases which still require a proof (beyond the cases covered by Lemma~\ref{lem:sequence:manage:content:erasures}) are when the $m$~part is flipped, and the $r$ part is either flipped, erased or remain uncorrupted. 

When the $r$~part is erased, similar to the proof of Lemma~\ref{lem:sequence:manage:content:erasures}, the $m$ part is anyways disregarded, and the proof follows in a similar manner.

Now assume that $r$ is either flipped or uncorrupted. 
Note that the parties decide whether to  progress or discard the iteration 
depending only on the $r$ bits (assuming no erasures in the $m$ part, see lines~\ref{line:alice:if} and~\ref{line:bob:if}). Thus, the flips that occur in the $m$~part have no effect on the question of \emph{who} makes progress in some iteration, whether it is Alice, Bob, both of them or neither of them. 
Note that the parts of the statements in Lemma~\ref{lem:sequence:all} that deal with which party makes an insertion and whether there is progress or not, depend only on this question. Hence, the previous lemmas prove Lemma~\ref{lem:sequence:all} up to the part concerning whether the progress is correct or not, which depends on the value of the $m$ bits, and we complete now.

Let us analyze the progress made in some iteration~$k$, assuming there is at least one flip in at least one $m$-part in~$k$.

Cases (1) and (2)(b) follow vacuously, since there are flips in some $m$-part in iteration~$k$.
Case (2)(c) states that the progress of Alice in iteration $j$ equals to the progress of Bob in iteration $i$, if there are no flips in the sequence at all.

We conclude that all the statements concerning the \emph{content} of the progress made in $k$ only consider the case when there are no flips in the $m$-parts, and thus the correctness of Lemma \ref{lem:sequence:all} stems from Lemma \ref{lem:sequence:manage:content:erasures}. 
\end{proof}

Before we proceed, we present a helpful observation.
\begin{observation}
\label{obs:sequence-cover}
    Consider a sequence $[i, j]$, where $i>1$. Then, each iteration $1 \le k \le i$ is included within a sequence, whose last iteration precedes $i$.
\end{observation}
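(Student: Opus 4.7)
The plan is to argue that the iterations of $\Pi$ partition neatly into a disjoint union of sequences, starting from iteration~$1$; once this is established, the observation follows immediately, since any iteration $k < i$ must be contained in some sequence whose final iteration is strictly less than~$i$.

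First I would verify the base case. Before iteration~$1$ executes, both parties are in their initial state, so $r_a = r_b = 0$. Hence the condition ``$r_a = r_b$ at the beginning of the iteration'' is satisfied at iteration~$1$, and by the definition of a sequence, iteration~$1$ is the first iteration of some sequence $[1, j_1]$, where $j_1 \ge 1$ is the first index such that $r_a = r_b$ at the end of iteration~$j_1$.

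Then I would proceed inductively on the sequences. Having identified $[1, j_1]$, I observe that at the beginning of iteration $j_1 + 1$ we again have $r_a = r_b$ (since this was the condition that closed the previous sequence, and nothing happens between iterations that would alter $r_a$ or $r_b$). Therefore iteration $j_1 + 1$ begins a new sequence $[j_1 + 1, j_2]$, defined as before. Continuing in this manner, the iterations tile contiguously into sequences without overlap or gap. At any point, the sequence currently containing a given iteration~$k$ is uniquely determined: its starting index is the largest iteration $\le k$ at whose beginning $r_a = r_b$, and its ending index is the smallest iteration $\ge k$ at whose end $r_a = r_b$.

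Given this tiling, consider the sequence $[i, j]$ with $i > 1$ from the statement. By construction, the previous sequence ends at iteration $i - 1$, and more generally every iteration $k$ with $k < i$ lies in some sequence $[i', j']$ whose last iteration $j'$ satisfies $j' \le i - 1 < i$. No step of this argument is subtle; the only thing to be careful about is invoking the correct invariant, namely that the end of one sequence forces the next iteration to begin a new sequence, which is precisely the definition of where a sequence begins and ends.
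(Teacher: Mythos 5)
Your proof is correct and follows essentially the same approach as the paper: an inductive argument showing that sequences tile the iterations contiguously starting from iteration~$1$ (where $r_a = r_b = 0$), with the end of one sequence forcing the next iteration to begin a new sequence. The paper phrases the induction per-iteration (each iteration $k < i$ belongs to a sequence) and uses the fact that $r_a = r_b$ at the end of iteration $i-1$ to cap the last iteration of each covering sequence strictly below $i$; you state this last point a bit implicitly ("by construction, the previous sequence ends at iteration $i-1$"), but the reasoning is the same.
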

\begin{proof}
    The proof is by induction on $k$. At the beginning of iteration 1, $r_a=r_b \mod 2$. The first iteration in which $r_a=r_b\mod 2$ at its end is the last iteration of the sequence which includes iteration 1. Iteration $i-1$ fulfills this condition; thus, iteration 1 is included within a sequence, with the last iteration preceding $i$.
    
    Assume iteration $k<i$ belongs to a sequence. If $r_a=r_b \mod 2$ at the end of $k$, then $r_a=r_b \mod 2$ at the beginning of $k+1$. Hence, iteration $k+1$ belongs to a sequence (similar to iteration~1). Otherwise, $k$ is not the last iteration of a sequence, and $k+1$ is included within the same sequence as $k$.
\end{proof}

\subsection{Frames and Segments}
The notion of a sequence is instrumental in understanding the behavior of the scheme~$\Pi$ (Algorithms~\ref{alg:schemeALICE} and~\ref{alg:schemeBOB}), and for relating its progress with the amount of corruptions it suffers. 
As mentioned above, our approach 
towards proving the main theorem is showing a reduction between $\Pi$ in the UPEF model and $\pi'$ in the insertion-deletion model.
Towards that end, we would like to be able to connect any partial execution (say, a sequence) of~$\Pi$ to an equivalent partial execution of~$\pi'$, and connect the noise in one model to the noise in the other.
Unfortunately, the notion of a sequence is insufficient for this goal. It can happen that a single transmission in~$\pi'$ is explained by several consecutive sequences in~$\Pi$.  
To see that, consider the sequence where Bob inserts twice (say, $a_1,b_1$ and $a_2,b_2$) but Alice makes no progress, followed by the sequence where Bob inserts at the first iteration (say, $a_3,b_3$) and Alice inserts at the second iteration (say, $a_4, b_4$). 
This execution translates to 4 rounds of~$\pi'$: at the first round Alice sends $a_4$ , which is substituted by the noise to~$a_1$ (note that Alice sends the same message in each of the four iterations, since at each iteration she thinks that the previous iteration had an error). 
Then, Bob suffers 2 out-of-sync corruptions, which consist of $b_1, a_2, b_2, a_3$. At the last round Bob replies with $b_3$, which is substituted by the noise to~$b_4$.
The above discussion hints that sequences are somewhat insufficient and that we need a unit of larger scale in~$\Pi$ in order to capture atomic operations of~$\pi'$. Towards this goal we define a new structure we call \emph{a frame}.

\begin{definition}[A good sequence]
    A sequence $[i, j]$ is \emph{good}
    if Alice makes progress at iteration~$j$. Otherwise, the sequence is \emph{bad}.
\end{definition}

\begin{definition}[A frame]
\label{def:frame}
    Assume some sequence $s_j = [j,j']$ is good, and let $s_i=[i,i']$ be the first bad sequence satisfying: (1) all the sequences between iterations $i$ and $j-1$ are bad and (2) $s_i$ has at least one iteration with progress. In particular, note that iteration $i$ has progress. If no such $s_i$ exists, set $s_i=s_j$.
    Then, the consecutive series of sequences $s_i,\ldots,s_j$ is called \emph{a frame}.
\end{definition}

We are going to set some definitions in order to establish a reduction between a frame and some partial execution of~$\pi'$. From now on, we use the notation of $s_i=[i,i']$ for sequences.

\begin{definition}[\Good transcripts]
The (possibly partial) transcripts $\T_a, \T_b$ are called \emph{\good} if there exists some execution of $\pi'$ that generates $\T_a,\T_b$,
in which Alice receives a message at the last step (either from Bob or due to an out-of-sync noise).
\end{definition}

\begin{definition}[Matching execution]
Let $\T_a, \T_b$ and $\T_a', \T_b'$ be the transcripts at the beginning and ending, respectively, of iterations~$[i,j]$, and assume $\T_a, \T_b$ are \good.
We say that $\pi'$ has a \emph{matching execution to~$[i,j]$} if
there exists a noise pattern (in the insertion-deletion model) such that executing $\pi'$ with that noise pattern starting with transcripts $\T_a, \T_b$,  yields the transcripts $\T_a', \T_b'$.
\end{definition}

\begin{theorem}
    \label{thm:frame}
    Assume an execution of $\Pi$ on some inputs $x, y$, and let $s_i,\ldots,s_j$ be a frame during which $f$~bit flips and $d$~erasures have occurred. 
    Let $\T_a, \T_b$ be Alice's and Bob's respective variables at the beginning of iteration $i$, and denote by $\T_a', \T_b'$ their respective variables at the end of iteration~$j'$. 
    If $\T_a, \T_b$ are \good transcripts,
    then there exist a matching execution of $\pi'$ for $s_i,\ldots,s_j$, suffering at most $2f$ edit-corruptions. It holds that  $\T'_a, \T'_b$ are \good transcripts.
\end{theorem}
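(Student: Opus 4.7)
The plan is to reduce each frame of $\Pi$ to a fragment of $\pi'$ in the insertion-deletion model, constructing the matching execution sequence-by-sequence and charging every edit-corruption in $\pi'$ to a bit-flip in $\Pi$.

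First I would classify every sequence in the frame via Lemma~\ref{lem:sequence:all}. Bad single-iteration no-progress sequences (case~1(b)) consume only erasures and contribute no $\pi'$-activity. Bad multi-iteration sequences must end, by case~(2)(c), with Bob (not Alice) as the sole progresser at the last iteration, which by the sub-case analysis inside the proof of Lemma~\ref{lem:sequence:manage} forces at least one $r$-bit flip at that iteration; their middle iterations of case~(2)(b)(b) require two $r$-flips each. The good sequence $s_j$ either consists of a single iteration of case~1(a) with both parties making identical progress, or a multi-iteration run ending with Alice as the sole progresser.

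Next I would build the matching $\pi'$-execution iteratively. Every iteration where both parties make the same progress translates to one clean $\pi'$ round-pair: Alice sends $\pi'(x\mid \T_a)$, Bob replies with $\pi'(y\mid \T_b)$, and I record a substitution wherever $m$-bits were flipped so that the symbol Alice appended to $\T_a$ differs from the one Bob appended to $\T_b$. Every iteration where only Bob progresses (respectively, only Alice) translates to a $\pi'$ round-pair converted by noise into an out-of-sync event: the non-progressing party's reply is deleted on the channel, and a spurious message is injected toward the progressing party so $\pi'$ does not stall. Consistency between the $\Pi$-transcripts and the constructed $\pi'$-execution follows by induction on the prefix of the frame already processed, using the fact that $m_{\mathrm{send}}$ in both Algorithm~\ref{alg:schemeALICE} and Algorithm~\ref{alg:schemeBOB} is set to $\pi'(\cdot \mid \T_\cdot)$, which is exactly the symbol $\pi'$ prescribes after the transcript produced so far in the matching execution.

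For the corruption count, each substitution in $\pi'$ costs one edit and is charged to at least one $m$-flip, while each out-of-sync event costs two edits (one deletion plus one injection) and is charged to at least one $r$-flip supplied by Lemma~\ref{lem:sequence:all} at the relevant iteration of the sequence causing it. A careful amortization across the sequences of the frame then yields the inequality $f_m + 2 f_r \leq 2f$. For the \good invariant, observe that $s_j$ ends with Alice executing Line~\ref{line:alice:insertBob}, i.e.\ appending a symbol received from the channel to $\T'_a$; in the constructed $\pi'$-execution the last $\pi'$-round is therefore a Bob-to-Alice transmission (possibly substituted), so Alice receives at the final step, matching the definition of \good.

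The main obstacle I anticipate is making the amortization tight: a bad multi-iteration sequence whose opening and closing iterations both feature Bob as the sole progresser appears to produce two out-of-sync events (four edits) while requiring only one $r$-flip in isolation, threatening the $2f$ bound. Resolving this requires either exploiting the global balance of Bob-ahead versus Alice-ahead excursions across the frame (the total displacement in $|\T_a|-|\T_b|$ is bounded by the number of $r$-flips, since each $r$-flip shifts the displacement by a bounded amount) or refining the sequence-level case analysis to show that each Bob-only progress can be charged to a distinct $r$-flip from that sequence or its successors within the frame. Combining this charging argument with the verification that the chosen $\pi'$-noise pattern produces \emph{exactly} $(\T'_a,\T'_b)$, rather than merely transcripts of the correct lengths, is the heart of the proof.
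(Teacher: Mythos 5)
Your plan correctly identifies the decomposition into sequences via Lemma~\ref{lem:sequence:all}, the round-pair translation into $\pi'$, and the induction that keeps the constructed $\pi'$-transcripts equal to $\T_a, \T_b$ using the fact that both algorithms compute $m_{\mathrm{send}}$ as $\pi'(\cdot\mid\T_\cdot)$. You also correctly spot the critical obstacle: in a bad multi-iteration sequence, Bob may insert both at the opening iteration (case 2(a), which can be caused by erasures alone) and at the closing iteration (case 2(c), which needs an $r$-flip), so you can get roughly two out-of-sync events for one flip, which a per-sequence charging of ``out-of-sync $\mapsto$ $r$-flip'' cannot absorb. But the proposal stops there: you sketch two candidate resolutions and neither is carried out, and one of them (``each Bob-only progress can be charged to a distinct $r$-flip from that sequence'') is contradicted by the obstacle you just described. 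This is not a minor gap --- resolving exactly this issue is the bulk of the paper's proof.

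The paper's resolution is a different and coarser decomposition than yours: instead of working sequence-by-sequence, it introduces \emph{segments} (Definition~\ref{def:segment}), maximal runs of Bob-only-progress iterations terminated by an Alice-progress iteration, which cut \emph{across} sequence boundaries. Within a segment, the set $P$ of Bob-inserts alternates between last-iterations-of-sequences (which force $r$-flips by Lemma~\ref{lem:sequence:all}(2)(c)) and first-iterations-of-sequences (which may not), and Lemma~\ref{lem:segment:flips} shows that roughly $|P|/2$ $r$-flips are forced in the segment, enough to pay for the $|P|$ or $|P|-1$ out-of-sync events. A further subtlety your proposal does not address is that the bound is not uniform: non-last segments achieve a \emph{strict} inequality $c < 2f_{\text{seg}}$ (Lemma~\ref{lem:reduction:not-last}) while the last segment only gives $c \leq 2f_{\text{seg}}+1$ (Lemma~\ref{lem:reduction:last}); it is the accumulation of the strict slacks that absorbs the last segment's $+1$ and yields the overall $c\leq 2f$. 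Without the segment construction and this two-tier accounting, the proof does not close. Finally, note a counting mismatch: you charge two edits per out-of-sync event, whereas the paper (following the convention of~\cite{BGMO17,HSV18}) counts a deletion-plus-injection pair as a single edit-corruption ($c=|P|-1+\delta$ for type~(3), etc.); with your convention the constants in Theorem~\ref{thm:HSV18} would need adjustment and the $2f$ bound would be even harder to hit.
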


The proof of Theorem~\ref{thm:frame} spans the rest of this section.
We do this by dividing the frame into small parts we call \emph{segments}, and reducing each segment to an equivalent execution of~$\pi'$ (Lemmas \ref{lem:reduction:not-last}--\ref{lem:reduction:last}). Finally, we complete the proof of Theorem~\ref{thm:frame} by `stitching' together all the segments (and the iterations with no progress that might reside in between) that are contained in a frame.

\subsubsection{Frame Observations}
First, we make two observations on frames, which will be helpful in their further analysis. The first one tells about the structure of sequences in a frame, stemming from the fact that all the sequences in a frame, except from the last one, are bad. The second observation tells us that given some frame, all the iterations with progress before that frame must belong to frames. That observation is crucial for using frames to analyze executions of $\Pi$, since it indicates that between frames there are only iterations without progress, which have limited effect on the simulation.

\begin{observation}
\label{obs:frame:not-last-sequence}
    Let $s_i, \ldots, s_j$ be a frame, and assume $s_i \neq s_j$. Then, each sequence between iterations $[i, j-1]$ either (1) has multiple iterations and Bob inserts at the sequence's last iteration, or (2) has single iteration where no party makes progress. In particular, $s_i$ satisfies (1) since it has progress (by the frame definition).
\end{observation}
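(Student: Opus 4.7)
The plan is to read off the observation as a direct case analysis on Lemma~\ref{lem:sequence:all}, combined with two facts: every sequence strictly before $s_j$ in the frame is bad (so Alice does not make progress at its last iteration), and $s_i$ is required by Definition~\ref{def:frame} to contain at least one iteration with progress.

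First I would fix an arbitrary sequence $s=[k,k']$ in $[i,j-1]$, which is bad by the frame's definition, and apply Lemma~\ref{lem:sequence:all}. If $s$ consists of a single iteration ($k=k'$), the lemma's item~(1) leaves only two possibilities: both parties make progress, or neither does. The first is excluded because Alice does not make progress (the sequence is bad), so neither party makes progress and we are in observation item~(2). If $s$ spans multiple iterations ($k<k'$), then by Lemma~\ref{lem:sequence:all}(2c) exactly one of the two parties makes progress at the final iteration~$k'$; since that party is not Alice, it must be Bob, which is observation item~(1).

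Next I would address the distinguished sequence $s_i$ itself. By Definition~\ref{def:frame}, $s_i$ contains at least one iteration with progress. This immediately excludes the single-iteration case treated above, because there the entire sequence has no progress at all. Hence $s_i$ must have multiple iterations, and by the previous paragraph Bob makes progress at its final iteration, so $s_i$ satisfies condition~(1), as claimed.

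I do not expect any real obstacle: the observation is essentially a bookkeeping statement against Lemma~\ref{lem:sequence:all}, and every structural ingredient it uses is already isolated there. The only mild subtlety is confirming that the ``has progress'' clause in the frame definition genuinely rules out single-iteration bad sequences (which are forced to be entirely progress-free), but this is immediate from item~(1b) of Lemma~\ref{lem:sequence:all}.
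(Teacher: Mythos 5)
Your proof is correct and follows essentially the same route as the paper's: both observe that every sequence in $[i,j-1]$ is bad, then split on whether the sequence has one iteration (Lemma~\ref{lem:sequence:all}(1), forcing the no-progress case) or multiple (Lemma~\ref{lem:sequence:all}(2)(c), forcing Bob to be the inserter at the last iteration). You also spell out the ``in particular'' remark about $s_i$ — that a single-iteration bad sequence would have no progress, contradicting the frame definition — which the paper states but does not prove explicitly; that extra detail is correct and welcome.
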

\begin{proof}
    Let $s=[k,k']$ be a sequence between iterations $[i, j-1]$. By the definition of a frame, at iteration $k'$ Alice does not make progress, or otherwise the frame ends there. By Lemma~\ref{lem:sequence:all}, either $k=k'$ or $k \neq k'$. When $k=k'$, then by Lemma~\ref{lem:sequence:all}(1) either both parties make progress in $k'$, or no party makes progress in $k'$. Since Alice does not make progress in $k'$, then no party makes progress in $k'$, and case (2) is satisfied.

    When $k \neq k'$, then by Lemma~\ref{lem:sequence:all}(2)(c) either Alice or Bob inserts in $k'$. Since Alice does not makes progress in $k'$, then Bob inserts in $k'$ and the proof is complete.
\end{proof}

\begin{observation}
    \label{obs:frame-complete}
    Let $s_i, \ldots, s_j$ be a frame, and let $k < i$ be an iteration with progress. Then, there exists some frame that includes iteration~$k$.
\end{observation}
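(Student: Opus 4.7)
The plan is to locate iteration $k$ inside the sequence decomposition of the execution and then apply Definition~\ref{def:frame} to a carefully chosen good sequence to exhibit a frame that contains $k$. First, I would invoke Observation~\ref{obs:sequence-cover} with the given frame's starting iteration~$i$: since $k < i$, iteration $k$ belongs to some sequence $s' = [a,b]$ with $b < i$, and because $k$ has progress, $s'$ itself contains at least one iteration with progress.

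Next, I would let $s_m$ denote the earliest good sequence occurring at or after $s'$. Such an $s_m$ exists because the given $s_j$ is itself a good sequence lying past $s'$, so at least one valid candidate is available. By the minimality of $s_m$, every sequence strictly between $s'$ and $s_m$ is bad; in particular, either $s' = s_m$ is itself good, or $s'$ is one of the bad sequences in the maximal bad run immediately preceding $s_m$.

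Finally, I would apply Definition~\ref{def:frame} to $s_m$ to extract the frame it determines. If $s' = s_m$, then $k \in s'$ lies in that frame and we are done. Otherwise $s'$ is bad and has progress, and the sequences from $s'$ through $s_{m-1}$ are entirely bad, so $s'$ is a valid candidate for the starting sequence of the frame attached to $s_m$. Since that starting sequence is defined as the \emph{earliest} such candidate, it begins no later than iteration $a$, and hence $s'$—and therefore $k$—lies inside this frame.

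The only real subtlety I anticipate is fixing the correct reading of the word \emph{first} in Definition~\ref{def:frame}: it must denote the earliest (smallest-index) bad sequence with progress whose run forward to $s_m$ is entirely bad, so that the frame extends backward from $s_m$ as far as the contiguous bad run permits. Under this reading $s'$ is automatically absorbed into the frame; the remainder of the argument is routine bookkeeping on how sequences partition the iterations.
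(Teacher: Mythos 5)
Your proof is correct but takes a genuinely different route from the paper's. The paper picks the \emph{last} sequence with progress ending before iteration~$i$, proves by contradiction (using the definition of the given frame $s_i,\ldots,s_j$ and Lemma~\ref{lem:sequence:all}) that Alice must make progress at its end, concludes it is good and therefore terminates some frame~$F$, and then iterates backwards if $k$ is not yet inside~$F$. You instead anchor on the sequence $s'$ that actually contains $k$, move \emph{forward} to the first good sequence $s_m$ at or after $s'$ (which exists since $s_j$ itself is good and lies past $s'$), and observe that $s'$ is already a valid candidate for the starting sequence in Definition~\ref{def:frame} applied to $s_m$: it is bad (when $s'\ne s_m$), has progress, and the run from $s'$ up to $s_m$ is entirely bad by minimality of $s_m$. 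Since the definition selects the \emph{earliest} such candidate, the chosen start can only be $s'$ or earlier, so the frame of $s_m$ necessarily covers $s'$ and hence~$k$. This construction is direct and dispenses with the paper's contradiction-plus-induction argument, and it also doesn't need to invoke Lemma~\ref{lem:sequence:all} at all; the paper's version, on the other hand, makes more explicit use of the structure of the frame $s_i,\ldots,s_j$ that was handed to it. Both are sound, and your reading of ``first'' in Definition~\ref{def:frame} as ``earliest'' is the intended one.
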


\begin{proof}
    By Observation \ref{obs:sequence-cover}, iteration $k$ is included within a sequence. Since $k$ is an iteration with progress, the sequence contains iteration $k$ is a sequence with progress, whose last iteration $l$ satisfies $l < i$. Denote the last sequence satisfying these two properties by $s$, and its last iteration by~$k'$. 
    Due to Lemma~\ref{lem:sequence:all}(1), (2)(c), in the last iteration of a sequence with progress, at least one party makes progress. 
    
    Assume towards contradiction that only Bob makes progress in~$k'$. Thus, $s$ is a bad sequence with progress. We argue that all the sequences from $s$ to the sequence ending at $j-1$ are bad. The sequences after $s$ through $s_i$, if any, are bad because they have no progress. The sequences between $[i, j-1]$ are bad, due to frame definition. Combining these observations, we deduce that all the sequences from $s$ to the sequence ending at $j-1$ are bad. 
    
    Thus, $s$ is a prior sequence to $s$ that satisfies (1) all the sequences between $s$ and the sequence ending at $j-1$ are bad and (2) $s$ has at least one iteration with progress. By frame definition, $s_i$ is supposed to be the first sequence which satisfies these conditions, which leads to a contradiction.
     
    Therefore, Alice makes progress in~$k'$. Thus, $s$ is good, which makes $s$ the last sequence of some frame $F$, which comes before the frame $s_i,\ldots,s_j$. 
    If $k$ is included within the frame~$F$, then the proof is complete. Otherwise, $k$ is prior to the first iteration of~$F$, and we can apply the same argument inductively on $k$ and~$F$.
\end{proof}

\subsubsection{Segments}
\label{apdx:segments}
As previously stated, the proof of Theorem \ref{thm:frame} is achieved by dividing the frame into smaller sections, which are referred to as ``segments".

\begin{definition}[A segment]
\label{def:segment}
     Let $k'$ be an iteration where Alice makes progress, and let~$k<k'$ be the first iteration such that (1) Alice does not make progress in any of the iterations in $[k,k')$ and (2) Bob makes progress in $k$. If no such $k$ exists, set $k=k'$. 
    The consecutive series of iterations $[k, k']$ is called \emph{a segment}.
\end{definition}

The objective of this section is to reduce segments into executions of $\pi'$. We start with a straightforward observation on the structure of segments, which we explicitly define in order to ease the analysis of a segment.
\begin{observation}
    \label{obs:segment:types}
    Let $q=[k,k']$ be a segment. Exactly one of the following holds:
    \begin{enumerate}
        \item Alice inserts at~$k'$, and $k=k'$.
        \item Alice and Bob make progress at~$k'$, and $k=k'$.
        \item Alice inserts at~$k'$, and Bob inserts at iterations $P = \{k_1, \ldots, k_{|P|}\}$ where $k_1=k$, $P \neq \emptyset$ and $k_i<k'$ for all $i$.
        \item  Alice and Bob make progress at~$k'$, and Bob inserts at iterations $P = \{k_1, \ldots, k_{|P|}\}$ where $k_1=k$, $P \neq \emptyset$ and $k_i<k'$ for all $i$.
    \end{enumerate}
\end{observation}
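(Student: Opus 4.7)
The plan is a direct case analysis that simply unpacks the definition of a segment $[k,k']$. Two independent binary choices arise: (i) whether only Alice makes progress at $k'$ or both parties do, and (ii) whether $k=k'$ or $k<k'$. Taking the product of these two options yields exactly the four cases claimed, so the proof is essentially bookkeeping from the definitions.

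First I would justify (i). By the definition of a segment, Alice makes progress at $k'$. Combined with Lemma~\ref{lem:progress}, which states that each party adds $0$ or $2$ symbols per iteration, and with the progress classification in Definition~\ref{def:progress}, the iteration $k'$ either has only Alice making progress (an insertion by Alice, as in cases 1 and 3) or has both parties making progress (as in cases 2 and 4). Bob-only insertion and no-progress are both excluded because Alice does make progress at $k'$.

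Next I would resolve (ii) using the definition of $k$. Either no iteration satisfies conditions (1)--(2) in Definition~\ref{def:segment} and we set $k=k'$, landing in case 1 or 2 with no further work, or the smallest such iteration is strictly less than $k'$, giving case 3 or 4. In the latter situation I would set $P$ to be the collection of iterations in $[k,k')$ at which Bob makes progress. Since Bob makes progress at $k$ by the very definition of $k$, we have $k \in P$; hence $P \neq \emptyset$ and its minimum element $k_1$ equals $k$. Every element of $P$ lies in $[k,k')$, so automatically $k_i < k'$ for all $i$.

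The one detail worth stating explicitly is that within the range $[k,k')$ the phrases ``Bob makes progress'' and ``Bob inserts'' are interchangeable: Alice makes no progress throughout $[k,k')$ by condition (1) on $k$, so any iteration there in which Bob makes progress has exactly one party doing so, which by Definition~\ref{def:progress} is precisely an insertion by Bob. This identifies $P$ with the set of Bob-insertion iterations and matches cases 3 and 4 as stated. I do not anticipate any genuine obstacle: the proof is purely definitional and invokes no corruption-counting, sequence-structure, or probabilistic argument.
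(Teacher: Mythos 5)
Your proof is correct and follows essentially the same route as the paper: both split on whether only Alice progresses at $k'$ or both parties do, and on whether the segment is a single iteration or has Bob-insertions strictly before $k'$, and both note that the segment definition forces Bob to insert at $k$ when $k<k'$. The one extra detail you make explicit—that in $[k,k')$ ``Bob makes progress'' coincides with ``Bob inserts'' because condition (1) of Definition~\ref{def:segment} excludes Alice from making progress there—is a worthwhile clarification the paper leaves implicit.
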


\begin{proof}
    By Definition~\ref{def:segment}, Alice makes progress at iteration $k'$, and that is the only iteration in~$q$ where Alice makes progress. Bob either makes or does not make a progress at~$k'$, and there is or there is not an iteration before $k'$ where he makes progress, which leads to the four cases. Note that by segment definition, in iteration $k$ at least one party makes progress; thus, in cases (3) and (4), Bob makes progress in $k$, so $k_1=k$.
\end{proof}

We are going to analyze segments within frames, and thus we state the following lemmas about segments in frames. Recall the notation of $s_i=[i,i']$ for sequences.
\begin{lemma}
\label{lem:segment:not-cross}
    Let $q=[k,k']$ be a segment, where $k'$ is in some frame $s_i, \ldots, s_j$. Then, $k$ is also in $s_i, \ldots, s_j$.
\end{lemma}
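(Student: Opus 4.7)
The plan is to split on whether $k=k'$ or $k<k'$. If $k=k'$, then the conclusion is immediate since $k=k'$ is assumed to lie in the frame $s_i,\ldots,s_j$. So the bulk of the work is the case $k<k'$, for which I will first invoke Observation~\ref{obs:segment:types}: in cases (3) and (4) of that observation (the only ones possible when $k<k'$), Bob makes progress at $k_1=k$. In particular, $k$ is an iteration with progress, which is precisely the hypothesis needed to apply Observation~\ref{obs:frame-complete}.

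Since clearly $k\le k'\le j'$, the whole lemma reduces to showing $k\ge i$. I would argue by contradiction: suppose $k<i$. Then Observation~\ref{obs:frame-complete} produces a frame $F'$ (distinct from and preceding $s_i,\ldots,s_j$) that contains~$k$. Let $l$ denote the last iteration of $F'$; because $F'$ ends strictly before $s_i$, we have $l<i$, and because $k$ lies in $F'$, we have $k\le l$. By the definition of a frame, the last sequence of $F'$ is \emph{good}, meaning Alice makes progress at $l$.

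Now I would observe that $l\in[k,k')$: indeed, $k\le l$ from above, and $l<i\le k'$ since $k'$ is in the frame $s_i,\ldots,s_j$ (so $k'\ge i$). This directly contradicts the segment definition, which forbids Alice from making progress at any iteration in $[k,k')$. Hence $k\ge i$, completing the proof.

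The main obstacle is really just bookkeeping: I must be careful to verify that the hypotheses of Observation~\ref{obs:frame-complete} are met (hence the case split extracting that Bob progresses at $k$), and to chase the chain of inequalities $k\le l<i\le k'$ so that $l$ indeed lands in the half-open interval $[k,k')$ where the segment definition applies. Once those two pieces are nailed down, the contradiction is immediate.
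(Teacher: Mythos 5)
Your proof is correct and matches the paper's argument essentially step for step: the $k=k'$ case is immediate, and for $k<k'$ you invoke Observation~\ref{obs:segment:types} to get that $k$ has progress, then Observation~\ref{obs:frame-complete} to place $k$ in an earlier frame, and derive a contradiction from Alice making progress at the last iteration of that earlier frame inside $[k,k')$. Your more explicit chaining of $k\le l<i\le k'$ is a small added care in bookkeeping but the route is the same as the paper's.
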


\begin{proof}
    If $q$ is of type (1) or (2) of \Cref{obs:segment:types}, then $k=k'$, and the proof is immediate. If $q$ is of type (3) or (4), Bob makes progress in $k$ and Alice makes no progress in iterations $[k,k')$. Assume towards contradiction that $k < i$. Then, $k$ belongs to another frame~$F$, which is prior to $s_i \ldots s_j$, due to Observation~\ref{obs:frame-complete}.
    By frame definition, Alice makes progress at the last iteration of~$F$, say, iteration~$k''$. However,  $k \le k'' < k'$, which contradicts the fact that Alice does not make progress in $[k,k')$.
\end{proof}

\begin{lemma}
\label{lem:frame:segments-cover}
    Let $s_i, \ldots, s_j$ be a frame. Then each iteration with progress in $[i,j']$ is included within a segment. 
    In particular, iterations $i, j'$ are included in the first and last segment within the frame, respectively. It should be noted that these segments may, in fact, be identical.
\end{lemma}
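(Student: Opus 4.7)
My plan is to prove \Cref{lem:frame:segments-cover} by showing that every iteration $\ell \in [i,j']$ with progress can be placed inside some segment, then instantiating the argument at $\ell=i$ and $\ell=j'$ to get the "in particular" statement. The key gadget is to locate, for each such $\ell$, the earliest iteration $k'\ge \ell$ with $k'\le j'$ at which Alice makes progress, and then show that the segment ending at $k'$ already contains $\ell$.

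The first step is to verify that such a $k'$ exists and lies in $[i,j']$. By the definition of a frame, $s_j=[j,j']$ is a good sequence, hence by the definition of goodness Alice makes progress at $j'$, so $k'\le j'$ exists. Then I split on whether Alice makes progress at $\ell$. If she does, take $k'=\ell$; then $\ell$ is itself the right endpoint of a segment by \Cref{def:segment}, so it belongs to a segment. If only Bob makes progress at $\ell$, then $k'>\ell$. Look at the segment $[k,k']$ associated to $k'$: by the definition of a segment, $k$ is the smallest iteration with the property that Alice makes no progress in $[k,k')$ and Bob makes progress at $k$. But iteration $\ell$ itself satisfies both conditions (Alice makes no progress in $[\ell,k')$ by the minimality of $k'$, and Bob makes progress at $\ell$), so the minimizer $k$ satisfies $k\le \ell$, hence $\ell\in[k,k']$.

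For the "in particular" part, I observe that iteration $j'$ has Alice's progress, so by the construction above $j'$ is the right endpoint of a segment; since any segment strictly later would require a right endpoint beyond $j'$, \Cref{lem:segment:not-cross} forces that segment out of the frame, so this is the last segment within the frame. For iteration $i$, \Cref{def:frame} explicitly records that $i$ has progress, so the same construction produces a segment $[k,k']\subseteq[i,j']$ containing $i$; by \Cref{lem:segment:not-cross} applied to this segment, $k\ge i$, which combined with $k\le i$ forces $k=i$, so $i$ is the left endpoint of this segment. Since $k'$ is the \emph{first} iteration in $[i,j']$ at which Alice makes progress, this is the first segment inside the frame. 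When $i=j'$ (a one-iteration frame where Alice progresses in $s_i=s_j$), the first and last segment coincide, which is consistent with the statement.

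The only delicate point is ensuring that the segment we build does not escape the frame, which could \emph{a priori} happen because \Cref{def:segment} allows $k$ to be arbitrarily early. This is precisely what the previous lemma (\Cref{lem:segment:not-cross}) rules out, so the main obstacle is already discharged; what remains is to combine it carefully with the minimality in the definitions of segment and of good sequence, as outlined above.
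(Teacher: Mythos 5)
Your proof is correct, and it takes a genuinely cleaner route than the paper's. The paper scans backward from $j'$: it first identifies the segment ending at $j'$, then argues inductively that either iteration $k-1$ (just before the current segment) is itself a segment right-endpoint, or else there is a stretch of no-progress iterations before the previous Alice-progress iteration; along the way it needs to rule out the possibility that ``only Bob makes progress at $k-1$'' by showing the current segment could be extended, which requires a bit of delicate reasoning. Your proof avoids the backward chaining entirely: given any progress iteration $\ell$, you pick $k'$ as the \emph{next} Alice-progress iteration at or after $\ell$ (which exists and is $\le j'$ by goodness of $s_j$), and then observe that $\ell$ itself witnesses the defining conditions for the left endpoint of the segment ending at $k'$, so the segment's minimizer $k$ satisfies $k\le\ell\le k'$. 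This is shorter and sidesteps the case analysis on iteration $k-1$. The one thing the paper's backward construction gives you for free — that the gaps between consecutive segments contain no progress, i.e.\ the segments effectively partition the progress iterations — is not part of the lemma statement, so your proof loses nothing with respect to what is claimed (though the paper's later applications of this lemma, e.g.\ in Theorem~\ref{thm:frame}, quietly rely on disjointness, which the paper's proof makes more transparent). Your handling of the ``in particular'' clause, using \Cref{lem:segment:not-cross} to pin $k=i$ for the first segment and to exclude any segment beyond $j'$ for the last, is also correct.
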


\begin{proof}
    We prove the claim by scanning the iterations from $j'$ backwards to~$i$. Each time we hit an iteration $k'$ where Alice makes progress. This defines a segment $q_k=[k,k']$, that covers more iterations that might contain progress. Then, we find a prior segment $q_l=[l,l']$, and show that it either ends at $l'=k-1$, or that between $l'+1$ and  $k-1$ there was no progress. We keep doing that until we reach iteration~$i$, thus covering all the iterations with progress in the frame.
    
    By frame definition, Alice makes progress in $j'$, the last iteration of the frame. Using segment's definition, $j'$ belongs to some segment, say $q_k=[k,k'=j']$.
    By Lemma~\ref{lem:segment:not-cross}, segments do not cross frames, thus $k\ge i$. If $k=i$, then the proof is complete. 
    
    Otherwise, $k > i$. Consider iteration $k-1$.
    By segment definition, there are two possibilities here:
    either Alice makes progress, or no one makes progress. Indeed, the case where only Bob makes progress is impossible, since it would mean that $q_k$ could extend to the iteration $k-1$.
    
    Let us consider the two cases in turn. 
    If Alice makes progress in $k-1$, then we can define a new segment $q_l=[l,l']$ with $l'=k-1$, and repeat the argument for this segment.

    We now turn to the other case, where Alice 
    does not make progress in~$k-1$. Yet, we argue that there is some iteration $l' \in [i,k-1)$ in which Alice makes progress, while there is no progress during $[l'+1,k-1]$. First, we argue that Alice must make progress in $[i,k-1)$. Indeed, by definition, iteration~$i$ has progress. If only Bob makes progress in iteration~$i$, and Alice never makes progress in $[i,k-1)$, then $q_k$ (which is the latest segment we defined so far) could extend to iteration $i$, in contradiction. 

    We have established that Alice makes progress in $[i,k-1)$. Then, denote the last iteration before $k-1$ where Alice makes progress by $l'$.
    We argue that no party makes progress during $[l'+1, k-1]$. Alice does not make progress by definition of $l'$; if Bob makes progress during $[l'+1, k-1]$, then by segment definition $q_k$ could extend to the last iteration before $k-1$ where Bob makes progress, in contradiction.
    Then, since Alice makes progress at~$l'$, we can define a segment $q_l=[l, l']$, and repeat the argument above until we reach $l=i$. Every time we repeat that argument, the first iteration of the created segment is at least $i$, due to Lemma \ref{lem:segment:not-cross}.
\end{proof}

In the rest of this section, we use the notation of $P$ to denote the set of iterations in which Bob inserts during segments of types (3) and (4) according to Observation~\ref{obs:segment:types}. The following lemma shows that certain types of segments can appear in a frame only in certain positions.

\begin{lemma}
    \label{lem:segment:props}
    Let $q=[k,k']$ be a segment within the frame  $s_i, \ldots, s_j$.
    \begin{enumerate}
        \item If $q$ is of type (1), it is either the first or the last segment within a frame containing multiple segments. In particular, $q$ cannot be a middle segment, nor the only segment. 
        Further, the sequence~$s_i$ contains multiple iterations.
        \item If $q$ is of type~(2), then $q$ is either the only segment within a frame containing a single sequence~$s_i$ with a single iteration, $i=i'$, or $q$ is a middle-segment within a frame containing multiple segments.
    \end{enumerate}
\end{lemma}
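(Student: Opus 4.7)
The plan is to apply Lemma~\ref{lem:sequence:all} to identify where Alice can progress alone (for type~(1)) or where both parties progress together (for type~(2)), and then use Observation~\ref{obs:frame:not-last-sequence} together with the segment definition to constrain where within the frame the segment $q$ can lie.

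For part~(1), suppose $q=[k,k']$ is type~(1), so Alice alone progresses at $k'=k$. By Lemma~\ref{lem:sequence:all}, Alice can progress alone only in case~2(a) (the first iteration of a multi-iteration sequence, requiring flips in the $r$-bits) or case~2(c) (the last iteration of a multi-iteration sequence). In case~2(c), the containing sequence is good (Alice progresses at its last iteration) and therefore equals $s_j$, giving $k'=j'$; by Lemma~\ref{lem:frame:segments-cover}, $q$ is the last segment, and the insertion at iteration $j$ of $s_j$ (given by case~2(a) of Lemma~\ref{lem:sequence:all}) produces additional progress, yielding at least one more segment. In case~2(a), I would argue by backward extension that $k'=i$ (so $q$ is the first segment): if instead $k'>i$, then $k'-1$ is the last iteration of some earlier sequence $s_{l'}$ in the frame; by Observation~\ref{obs:frame:not-last-sequence}, either $s_{l'}$ is multi-iteration with Bob inserting at $k'-1$, which gives a valid earlier $k$ and contradicts $k=k'$, or $s_{l'}$ is a single no-progress iteration, in which case tracing back through any chain of such sequences must eventually hit a progress event that is either Bob-only (extending $q$ backward) or Alice's, which is then immediately followed by Bob progress at the end of its containing bad multi-iteration sequence (again extending $q$ backward). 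Hence $q$ is at a boundary of the frame, the frame has multiple segments, and $s_i$ is multi-iteration (explicitly in case~2(a); via Observation~\ref{obs:frame:not-last-sequence} when $s_i\neq s_j$ in case~2(c); and directly because $s_i=s_j$ is itself multi-iteration when $s_i=s_j$ in case~2(c)).

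For part~(2), suppose $q$ is type~(2), so both parties progress at $k'=k$. By Lemma~\ref{lem:sequence:all}, this occurs only in case~1(a) (a single-iteration sequence with both progressing) or case~2(b) (a middle iteration of a multi-iteration sequence with both $r$-bits flipped). In case~1(a), the sequence is good and equals $s_j$ with $j=j'$; if additionally $s_i=s_j$, then the frame is a single sequence of a single iteration and $q$ is the only segment, matching option~(a). If $s_i\neq s_j$ in case~1(a), I would derive a contradiction: by Observation~\ref{obs:frame:not-last-sequence}, $s_i$ is a bad multi-iteration sequence with Bob progressing at $i'$, and because every bad multi-iteration sequence in $[i,j-1]$ has Alice making no progress at its last iteration, some Bob progress in $[i',j)$ is not followed by any Alice progress before $j'$, which forces the segment to extend backward and contradicts $k=k'$. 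In case~2(b), $q$ sits strictly inside a multi-iteration sequence $s_l$; by case~2(a) of Lemma~\ref{lem:sequence:all} iteration $l$ has an insertion, and analogous backward-extension reasoning forces an earlier Alice progress somewhere in $[i,k')$ (otherwise a Bob-only progress at $l$, or at the end of $s_i$, extends $q$ backward), yielding an earlier segment; meanwhile Alice's eventual progress at $j'$ (or at the last iteration of $s_l$ when $s_l=s_j$) yields a later segment, placing $q$ as a middle segment. The main obstacle throughout is the backward-extension argument, which must carefully chain Observation~\ref{obs:frame:not-last-sequence} with the segment definition to ensure that every Bob-progress event is either the source of a backward extension of $q$ or is ``covered'' by an intervening Alice progress.
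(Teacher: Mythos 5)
Your overall strategy is sound and tracks the paper's reasoning closely: both you and the paper classify iteration $k'$ by which sub-case of Lemma~\ref{lem:sequence:all} it falls into (first/last iteration of a multi-iteration sequence for type~(1), single-iteration sequence or middle iteration for type~(2)) and then use Observation~\ref{obs:frame:not-last-sequence}, Lemma~\ref{lem:frame:segments-cover}, and the segment definition to pin down where $q$ can sit. The paper frames this as a proof by contradiction on segment position (``suppose $q$ is neither first nor last'', ``suppose $q$ is the only segment''), whereas you run a direct case analysis on $k'$'s sub-case; these are essentially the same argument organized differently.

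That said, your ``backward-extension'' chains are under-specified in a way that would need fixing before this is a proof. The cleanest form of the key step is: \emph{the last iteration with progress strictly before $k'$ is always a Bob-only insertion}, because by Observation~\ref{obs:frame:not-last-sequence} every bad sequence in $[i,j-1]$ with progress ends with a Bob insertion, so whatever progress event you hit while scanning backward must be (or be followed within its own sequence by) a Bob insertion at the end of that sequence, which is the actual last progress. You instead write that the scan ``must eventually hit a progress event that is either Bob-only \ldots or Alice's, which is then immediately followed by Bob progress at the end of its containing bad multi-iteration sequence (again extending $q$ backward)'' --- but hitting Alice's progress while scanning backward \emph{blocks} extension of $q$, it does not cause it; and that Bob progress at the sequence end is encountered \emph{earlier} on the backward scan, so the scenario you describe cannot occur. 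The conclusion is right, but the stated logic is inverted. Similarly, in part~(1) case~2(c), ``the insertion at iteration $j$ \ldots produces additional progress, yielding at least one more segment'' skips the needed step: if that insertion at $j$ were Bob's with no intervening Alice progress, $q$ would extend back to $j$ and not be type~(1), so for $q=[j',j']$ of type~(1) there must be Alice progress at or after $j$ and strictly before $j'$, which is what gives the extra segment (the paper gets this more directly by noting iteration~$j$ has progress, belongs to some segment by Lemma~\ref{lem:frame:segments-cover}, and cannot belong to $q=\{j'\}$). The same kind of tightening is needed in part~(2) case~2(b), where ``analogous backward-extension reasoning'' is asserted without spelling out the observation that $s_i$ must be a multi-iteration bad sequence whose first iteration's insertion is then either Alice's (giving the earlier Alice progress) or Bob's (forcing $q$ to extend back). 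So: the approach is valid and essentially matches the paper, but several chains need to be re-stated with the ``last progress before $k'$ is a Bob insertion'' observation made explicit.
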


\begin{proof}
    \begin{enumerate}
        \item Assume towards contradiction that $q=[k,k']$ is of type (1) and is neither the first nor the last segment within the frame. By Observation \ref{obs:sequence-cover}, iteration $k'$ belongs to a sequence. Since Alice inserts during $k'$, by Lemma~\ref{lem:sequence:all}(2)(a),(c) iteration $k'$ is either the first or the last iteration of some sequence $s$ in the frame. If $k'$ is the last iteration of~$s$, then by definition the frame must end there, so $s = s_j$. Then, $q$ is the last segment within the frame, in contradiction. 
        
        Thus, $k'$ must be the \emph{first} iteration of~$s$. We argue that $s \ne s_i$; otherwise, $i=k'$, and $q$ is the first segment within the frame, in contradiction. 
        Since $s_i$ has progress by frame definition, it follows that during the frame there is a sequence with progress before $s$. Denote the last such sequence by $s'$, and its last iteration by $k''$. By Observation~\ref{obs:frame:not-last-sequence} Bob inserts at $k''$. In addition, note that $k''$ is the last iteration with progress before $k'$, by definition of $s'$.
        Thus, $k''$ belongs to $q$, by segment definition. It follows that $q$ is no longer of type (1), in contradiction.

        We proceed to show that $q=[k ,k']$ cannot be the single segment within the frame. In a similar way to the previous case, iteration $k'$ is either the first or the last iteration of some sequence $s$ in the frame. By Lemma \ref{lem:sequence:all}(2)(a),(c), there is another iteration $l$ in $s$ where some party inserts (if $k'$ is the first iteration of $s$, then $l$ is its last iteration, and vice versa). By Lemma \ref{lem:frame:segments-cover}, $l$ belongs to some segment, which cannot be $q$ since $q$ is of type (1). This leads to a contradiction with the singularity of $q$ in the frame.
        
        We continue to the last part of the claim, and show that $s_i$ contains multiple iterations in a frame that contains a segment of type (1). 
        Assume towards contradiction that $s_i$ is a single-iteration sequence. Then, by frame definition, iteration $i=i'$ has progress, and by Lemma~\ref{lem:sequence:all}(1) both parties make progress in $i=i'$. Therefore $s_i$ is a good sequence, and the frame ends in $i'$, by frame definition. Thus, both parties make progress in the single iteration of the frame, and the only segment within the frame must be of type~(2), in contradiction.

        \item 
        We divide the proof into two cases: where $q$ is the only segment within the frame, and where there are additional segment(s) in the frame.
        
        Let $q=[k,k']$ be the single segment within the frame. First, we show that there is only one sequence in the frame.
        Suppose towards contradiction that the frame contains more than one sequence, such that $s_i\ne s_j$.
        By Observation~\ref{obs:frame:not-last-sequence}, $s_i$ has multiple iterations, and Bob inserts in $i'$. Since $i'$ has progress, then by Lemma \ref{lem:frame:segments-cover} it belongs to the single segment within the frame, which is $q$. Thus, $q$ cannot be of type (2), in contradiction.
        
        We proceed to show that the single sequence $s_i=s_j$ in the frame has only one iteration. By frame definition, $j'$ has progress. Thus, by Lemma \ref{lem:frame:segments-cover}, $j'$ is included within the single segment within the frame, $q=[k,k']$, and $k'=j'$. 
        Since $q$ is of type (2), both parties make progress in $j'$, and Lemma~\ref{lem:sequence:all}(1) suggests that $j=j'$. Thus, $j=j'=k=k'$ and the proof of the case where $q$ is the only segment within the frame is complete.

        Next, assume that there are multiple segments in the frame. We show that $q$ is a middle segment.
        Assume towards contradiction that $q$ is the first segment.
        By frame definition, iteration $i$ has progress, and by Lemma~\ref{lem:frame:segments-cover} it belongs to the first segment within the frame, which is~$q$.
        Since $q$ has a single iteration (being type~(2)), $i=k=k'$, and both parties make progress at iteration~$i$.
        By Lemma~\ref{lem:sequence:all}(1), $s_i$ has only one iteration. Further, it is a good sequence by definition.
        Thus, the frame ends at $s_i$, so $s_i = s_j$ and $q$ is the single segment within the frame, in contradiction.
        
        We continue to the case in which $q$ is the last segment within the frame.
        In the same manner to the previous case, and by frame definition, iteration $j'$ has progress, and belongs to the last segment within the frame, which is $q$.
        Since $q$ has a single iteration (being type~(2)), $j'=k=k'$. By Lemma~\ref{lem:sequence:all}(1), $s_j$ has only one iteration, hence $k=k'=j=j'$.
        Nevertheless, reaching contradiction is somewhat more challenging than in the preceding case. By assumption, there are multiple segments in the frame. By segment definition, a segment has at least one iteration with progress. By Observation \ref{obs:sequence-cover}, each iteration in a frame belongs to a sequence. Thus, there exists a sequence with progress in the frame before $s_j$. Let $s_l=[l,l']$ be the last such sequence.
        Observation~\ref{obs:frame:not-last-sequence} tells us that Bob inserts at iteration~$l'$. Since there is no progress during $[l'+1, j'-1]$, $q$ includes $l'$ by segment definition, in contradiction to the assumption that $q$ is of type~(2).
    \end{enumerate}
\end{proof}

Before we proceed to analyze the segments of types (3) and (4), we present the following lemma, which concerns the set $P$ of Bob's insertions.
\begin{lemma}
\label{lem:segment:p-set}
    Let $q=[k,k']$ be a segment of type (3) or (4) in the frame $s_i, \ldots, s_j$. Then, each $k_l \in P$ is either the first or the last iteration of a sequence. Furthermore, let $[l, l']$ be a sequence with progress in the given frame. If $k \leq l < k'$, then $l \in P$; and similarly, if $k \leq l' < k'$, then $l' \in P$.
\end{lemma}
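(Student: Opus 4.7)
The plan is to derive both parts by a direct case analysis based on Lemma~\ref{lem:sequence:all}, combined with the key property of type-(3) and type-(4) segments from Observation~\ref{obs:segment:types}: Alice does not make progress at any iteration in $[k,k')$, and Bob's insertions occur precisely at the iterations of~$P$.

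For the first part, note that each $k_l\in P$ is an iteration at which Bob makes progress while Alice does not. Lemma~\ref{lem:sequence:all} classifies iterations inside a sequence: a single-iteration sequence either has both parties making progress (case~(1)(a)) or neither (case~(1)(b)); inside a multi-iteration sequence, insertions occur only at the first iteration (case~(2)(a)) or at the last iteration (case~(2)(c)), while in the middle iterations either both parties make no progress or both make the same progress (case~(2)(b)). Hence any iteration where only Bob makes progress must be the first or last iteration of some multi-iteration sequence, which yields Part~1.

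For the second part, fix a sequence $[l,l']$ with progress in the given frame. Suppose first that $k\le l<k'$; the goal is to show $l\in P$. If $l=l'$, then the single-iteration sequence has progress, so by Lemma~\ref{lem:sequence:all}(1)(a) both parties make progress at $l$, contradicting the fact that Alice does not make progress in $[k,k')$. Hence $l<l'$, and Lemma~\ref{lem:sequence:all}(2)(a) states that exactly one party inserts at~$l$; since this cannot be Alice, it must be Bob, so $l\in P$. The case $k\le l'<k'$ is symmetric: a single-iteration sequence with progress is ruled out as above, and for a multi-iteration sequence Lemma~\ref{lem:sequence:all}(2)(c) forces Bob to be the inserting party at $l'$, hence $l'\in P$.

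The argument is essentially bookkeeping once the structural result of Lemma~\ref{lem:sequence:all} is in hand; no real obstacle arises. The only subtlety to keep in mind is that the segment definition restricts Alice's lack of progress to iterations strictly before~$k'$ (not to $k'$ itself, at which Alice does make progress in types~(3) and~(4)), which is exactly why both claims in Part~2 use the strict inequality $<k'$ and fail for $l=k'$ or $l'=k'$.
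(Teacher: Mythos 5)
Your proof is correct and follows essentially the same approach as the paper's: both identify $P$ as Bob's insertions via Observation~\ref{obs:segment:types}, then use the case structure of Lemma~\ref{lem:sequence:all} to show insertions can only sit at the first or last iteration of a (multi-iteration) sequence, and finally rule out Alice as the inserting party at $l$ or $l'$ because Alice makes no progress strictly before $k'$. The only cosmetic difference is that the paper explicitly invokes Observation~\ref{obs:sequence-cover} to justify that every iteration lies inside a sequence, which you use implicitly.
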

\begin{proof}
    By Observation~\ref{obs:segment:types}, Bob inserts in each $k_l \in P$. By Observation \ref{obs:sequence-cover}, each iteration of the frame is included within a sequence. In particular, each iteration of $P$ is included within a sequence. By Lemma~\ref{lem:sequence:all}(2)(a),(c), an insertion may only occur at either the first or the last iteration of a sequence. Thus, each $k_l \in P$ is the first or the last iteration of a sequence. 
    
    Let $s=[l, l']$ be a sequence with progress. We prove that if $k \leq l < k'$, then $l \in P$; the case of $l'$ is left for the reader. Assume towards contradiction that $l=l'$. Since $s$ has progress, it follows from Lemma~\ref{lem:sequence:all}(1) that both parties have made progress in $l$, with Alice in particular. Thus, $l = k'$, in contradiction to the requirement $l < k'$. Thus, $l < l'$. By Lemma~\ref{lem:sequence:all}(2)(a),(c), iterations $l,l'$ end up with an insertion. If Alice inserts at $l$, then $l = k'$ in contradiction. Thus, Bob inserts at $l$. By definitions of types (3) and (4), $P$ contains all the iterations in $q$ where Bob inserts, hence $l \in P$.
\end{proof}

\begin{lemma}
    \label{lem:segment:flips}
    Let $q=[k,k']$ be a segment within the frame  $s_i, \ldots, s_j$.
    \begin{enumerate}
        \item Assume $q$ is of type~(3). If $q$ is the last segment within the frame, then at least $(|P|-1)/2$ $r$-bits are flipped during $q$. Otherwise, at least $(|P|+2)/2$ $r$-bits are flipped during $q$.
        \item Assume $q$ is of type (4). If $q$ is the last segment within the frame, then at least $|P|/2$ $r$-bits are flipped during $q$. Otherwise at least $(|P|+3)/2$ $r$-bits are flipped during $q$.
    \end{enumerate}
\end{lemma}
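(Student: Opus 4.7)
The plan is to classify every sequence intersecting the segment $q=[k,k']$ by its overlap pattern with $q$---entirely inside, starting before and ending inside, starting inside and ending after, or spanning $q$---and then to lower-bound the $r$-bit flips contributed by the insertions in $q$ via Lemma~\ref{lem:sequence:all}. The per-insertion flip costs are: a Bob insertion at the end of a multi-iteration sequence and an Alice insertion at its start each force at least one $r$-flip (parts (2)(a) and (2)(c) of Lemma~\ref{lem:sequence:all}); the symmetric cases, Bob-at-start and Alice-at-end, are default and incur no forced flip; and a both-progress middle iteration of a multi-iteration sequence forces two $r$-flips, one on $r_a$ and one on $r_b$ (part (2)(b)). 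Since $q$ is of type (3) or (4), Alice progresses in $q$ only at $k'$ and Bob progresses precisely at the iterations of $P$ (plus $k'$ in type (4)); by Lemma~\ref{lem:segment:p-set} these iterations sit at the endpoints of their sequences, and Observation~\ref{obs:frame:not-last-sequence} further restricts sequences strictly before $s_j$ to end with Bob (if multi-iteration) or to have no progress (if single-iteration).

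Each of the four bounds then reduces to a small integer count. For a type~(3) last segment ($k'=j'$), $k'$ is the end of the good sequence $s_j$ with Alice inserting (no flip at $k'$), and counting sequences in $q$ whose endpoints are both Bob-insertions versus the unique one whose end is Alice yields at least $(|P|-1)/2$ flips. For a type~(3) non-last segment, Observation~\ref{obs:frame:not-last-sequence} forbids $k'$ from being the end of any sequence (earlier sequences in the frame must end with Bob and $s_j$ ends at $j'>k'$), so $k'$ must be the start of a multi-iteration sequence extending past $q$; this costs an additional forced Alice-at-start flip at $k'$ and eliminates the Bob-start/Alice-end option entirely, raising the bound to $(|P|+2)/2$. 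For a type~(4) last segment, $k'$ is forced to be the single-iteration both-progress $s_j$ (no flip at $k'$), giving at least $|P|/2$ flips. For a type~(4) non-last segment, Observation~\ref{obs:frame:not-last-sequence} rules out single-iteration both-progress sequences before $s_j$, so $k'$ must be a middle iteration of a multi-iteration sequence (two flips at $k'$); the alternative where that sequence spans all of $q$ is ruled out because middle iterations of a single sequence cannot accommodate the Bob-only insertions of $P$, leaving at least $(|P|+3)/2$ flips.

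The main obstacle is the careful case-by-case bookkeeping: tracking which endpoints of each overlapping sequence fall inside $q$ and which party inserts at each of them, and verifying that the resulting integer optimisation is minimised correctly for both parities of $|P|$ while simultaneously respecting the structural constraints imposed by Lemmas~\ref{lem:sequence:all} and~\ref{lem:segment:p-set} and Observations~\ref{obs:frame:not-last-sequence} and~\ref{obs:segment:types}.
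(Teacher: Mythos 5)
Your proposal is correct and mirrors the paper's proof: both arguments classify the position of $k'$ within its sequence according to segment type and whether $q$ is the last segment (Alice-at-end of $s_j$ / Alice-at-start / single-iteration both-progress / middle iteration), extract per-iteration $r$-flip costs from Lemma~\ref{lem:sequence:all}, and then count Bob-ending-insertions among $P$ together with the forced flip(s) at $k'$. Your ``classify sequences by overlap pattern with $q$'' framing is just an alternative organization of the paper's explicit argument pinning down $k_{|P|}$ as a sequence start or end; both routes yield the identical $\lfloor|P|/2\rfloor$-versus-$\lceil|P|/2\rceil$ counting and the same final bounds.
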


\begin{proof}
    We begin by presenting the principal arguments supporting every one of the claims, as they are similar. Each segment of type (3) or (4) is comprised of a series of iterations, during which Bob inserts, and a final iteration during which Alice makes progress, sometimes accompanied by Bob. 
    The principal component of the asserted lower bound on the count of flipped $r$-bits comprises the iterations during which Bob inserts. The precise bound, provided for each case, is dependent upon the characteristics of the final iteration of $q$, in terms of its position within the frame.
    \begin{enumerate}[topsep=0pt,parsep=0pt]
        \item 
        In this section we assume that $q$ is of type (3). Let us first discuss the case where $q$ is the last segment within the frame. By Lemma~\ref{lem:frame:segments-cover}, iteration $j'$ belongs to $q$. By frame definition, Alice makes progress in $j'$, hence $k'=j'$. Since $q$ is of type (3), Alice inserts at $k'$. Thus, by Lemma \ref{lem:sequence:all}(2)(c), $s_j$ has multiple iterations, i.e. $j \ne j'$. By Lemma~\ref{lem:sequence:all}(2)(a) either Alice or Bob inserts at iteration $j$. Assume towards contradiction that Alice inserts at $j$. By Lemma \ref{lem:sequence:all}(2)(b), there is no iteration in the interval $(j, j')$ where some party inserts. In particular, at no point during these iterations Bob inserts. Consequently, $q=[j',j']$, hence it is a segment of type (1), in contradiction. Hence, Bob inserts at $j$, and $k_{|P|} = j$.
        By combining with Lemma~\ref{lem:segment:p-set}, it can be shown that $\lfloor{|P|/2}\rfloor$ of the iterations in $P$ are the ending iterations of sequences, while the remaining are the starting iterations of sequences. 
        By Lemma~\ref{lem:sequence:all}(2)(c), at least one $r$-bit is flipped during an insertion of Bob which is the last iteration of a sequence. Hence, at least $\lfloor{|P|/2}\rfloor$ $r$-bits are flipped between $k_1$ and $k_{|P|}$, which are at least $(|P|-1)/2$ $r$-bits.
        
        We proceed to the case in which $q$
        is not the last segment within the frame. Since $q$ is of type (3), Alice inserts at iteration $k'$. By Observation \ref{obs:sequence-cover}, iteration $k'$ belongs to a sequence $s=[l,l']$. It follows from Lemma~\ref{lem:sequence:all}(2)(a),(c) that $l \ne l'$, and either $k'=l$ or $k'=l'$. Assume towards contradiction that $k'=l'$. By frame definition, $s = s_j$, and $q$ is the last segment within the frame, in contradiction. Thus, $k'=l$. We proceed to demonstrate that $k_{|P|}$ is the last iteration of a sequence. First, we show that $s \neq s_i$; otherwise, $k'=l=i$, and it follows from Lemma \ref{lem:segment:not-cross} that $k=i$ such that $q$ is of type (1), in contradiction. By frame definition, the sequence $s_i$ has progress, and it follows that there is a sequence with progress before $s$. Let $s_1=[l_1, l_1']$ be the last such sequence. By Observation~\ref{obs:frame:not-last-sequence}, Bob inserts at $l_1'$. Hence, $l_1'$ is the last iteration with progress before $k'$, and $l_1' = k_{|P|}$. Therefore, $k_{|P|}$ is the last iteration of a sequence. In the same manner to the case where $q$ is the last segment of the frame, we can see that at least $\lceil{|P|/2}\rceil$ $r$-bits are flipped between $k_1$ and $k_{|P|}$. Since Alice inserts in $k'=l$, it follows from Lemma~\ref{lem:sequence:all}(2)(a) that at least one $r$-bit is flipped during $k'$. For conclusion, at least $|P|/2+1=(|P|+2)/2$ $r$-bits are flipped during $q$.

        \item %
        In this section we assume that $q$ is of type (4). First, we establish that $q$ is the last segment within the frame if and only if iteration $k'$ is a single iteration in a sequence.
        
        If $q$ is the last segment within the frame, then it follows from Lemma~\ref{lem:frame:segments-cover} that $k' = j'$. By definition of type (4), both parties make progress in $k'$. Thus, by Lemma~\ref{lem:sequence:all}(1) , $j = j' = k'$, and iteration $k'$ is a single iteration in a sequence. 
        We now turn to the second part of the claim, assuming that iteration $k'$ is a single iteration in a sequence $s$. By segment definition, Alice makes progress in $k'$. Since $k'$ is the last (and only) iteration of $s$, it follows from frame definition that $s=s_j$, hence $q$ is the last segment within the frame. 
        
        We now proceed to the calculation of the lower bound on the number of flips of $r$-bits during $q$. First, assume $q$ is not the last segment within the frame. Consequently, $k'$ is not a single iteration in a sequence, and it follows from Lemma~\ref{lem:sequence:all}(2)(b) that $k'$ is neither the first nor the last iteration of the sequence $s=[l,l']$ to which it belongs. By Lemma~\ref{lem:sequence:all}(2)(b), the two $r$-bits sent during iteration $k'$ are flipped. Since $q$ is of type (4), it can be easily shown that Bob inserts in $l$. Thus, $k_{|P|}=l$.
        In a similar argument to the case of segment of type (3), at least $(|P|-1)/2$ $r$-bits are flipped between $k_1$ and $k_{|P|}$, and at least $(|P|-1)/2+2=(|P|+3)/2$ $r$-bits are flipped in $q$.
        
        If $q$ is the last segment within the frame, then $k'$ is a single iteration in a sequence $s$. Since there is progress during the frame before $s$ (similar to the case in which $q$ is the last segment and is of type (3)), there is a sequence with progress before $s$. Let $s'=[l_1, l_1']$ be the last such sequence. By Observation~\ref{obs:frame:not-last-sequence}, Bob inserts at $l_1'$. Since there is no progress between $l_1'$ and $k'$, it follows that $k_{|P|} = l_1'$. A parallel argument to the that presented for segments of type (3) reveals that at least $\lceil{|P|/2}\rceil$ $r$-bits are flipped between $k_1$ and $k_{|P|}$, which are at least $|P|/2$ $r$-bits.
    \end{enumerate}
\end{proof}

The following lemmas describe the reduction of segments in $\Pi$ into executions of $\pi'$. We distinguish between the last segment within a frame and the other segments.

\begin{lemma}
    \label{lem:reduction:not-last}
    Consider an execution of $\Pi$ on inputs $x,y$ in which $s_i,\ldots,s_j$ is a frame, and  $q=[k,k']$ is a segment which is not the last segment in that frame (in particular, assume that there are multiple segments in the frame). Assume that $f$ flips occurred during $q$. 
    
    Let $\T_a, \T_b$ be Alice's and Bob's respective variables at the beginning of iteration~$k$, and denote by $\T_a', \T_b'$ their respective variables at the end of iteration $k'$. 
    If $\T_a, \T_b$ are \good transcripts, then there exists a matching execution of $\pi'$ for $q$,
    with $c$ edit-corruptions, such that $2f - c > 0$. It holds that $\T'_a, \T'_b$ are \good.
\end{lemma}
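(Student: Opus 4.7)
The plan is a case analysis on the type of segment $q$ as classified by Observation~\ref{obs:segment:types}. By Lemma~\ref{lem:segment:props}, since $q$ is not the last segment in a multi-segment frame, the admissible possibilities narrow to: type~(1) only when $q$ is the first segment (with $s_i$ multi-iteration), type~(2) only when $q$ is a middle segment, and types~(3) and~(4) in any non-last position.

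For each type, I will carry out three steps. First, use Lemma~\ref{lem:sequence:all} to trace the iteration-by-iteration behavior of $\Pi$ within $q$, identifying which iterations have progress, who inserts, and which $r$-bit flips are \emph{forced} to produce this behavior (noting that Lemma~\ref{lem:sequence:all}(2)(a),(c) pin down whether the inserting party must be Alice or Bob based on the presence of $r$-flips). Second, explicitly construct a matching execution of $\pi'$ that transforms $(\T_a, \T_b)$ into $(\T_a', \T_b')$, by interpreting each iteration where only one party progresses as an out-of-sync event in $\pi'$ (a deletion of the message destined to the non-advancing party together with an injection to the advancing party), and each iteration where both parties progress as a $\pi'$-round with zero or more substitutions depending on whether they make the same or different progress. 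Third, count the edit-corruptions~$c$ incurred and use Lemma~\ref{lem:segment:flips} (for types (3),(4)) or the forced-flip count from step one (for types (1),(2)) to conclude $2f > c$.

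Concretely, for type~(3) with $|P|$ Bob-inserts followed by Alice's insert at $k'$: the matching $\pi'$-execution needs to advance Bob's transcript by $2|P|$ symbols and Alice's transcript by $2$ symbols, which is achieved via roughly $|P|-1$ out-of-sync events during the Bob-insert iterations together with the handling of iteration $k'$ (possibly a substitution). A similar bookkeeping covers type~(4), where both parties progress at $k'$. In each subcase Lemma~\ref{lem:segment:flips} gives $2f \geq |P|+2$ or $2f \geq |P|+3$, which strictly exceeds the incurred~$c$. For types~(1) and~(2), the segment is a single iteration and the matching $\pi'$-round needs at most one out-of-sync event (type~(1)) or up to two substitutions (type~(2) with different progress); in each case Lemma~\ref{lem:sequence:all}(1)--(2) forces enough $r$-flips to satisfy $2f > c$. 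Finally, $(\T_a', \T_b')$ is \good since the constructed $\pi'$-execution ends with Alice's receive step at iteration $k'$.

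The main obstacle will be the tight bookkeeping for types~(3) and~(4) when $|P|$ is large: I must decide, for each of Bob's $|P|$ responses, whether it is delivered to Alice or deleted, and for each of Bob's $|P|$ receives, whether it is Alice's actual sent message or a pure injection, in such a way that the resulting number of edit-corruptions matches the bound given by Lemma~\ref{lem:segment:flips} with strict slack. Using Lemma~\ref{lem:segment:p-set} to locate the $k_l \in P$ as first/last iterations of sequences, together with Lemma~\ref{lem:sequence:all}(2)(c) forcing an $r$-flip at each Bob-insert that ends a sequence, will be the key to closing the strict inequality.
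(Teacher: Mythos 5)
Your proposal follows the paper's own proof strategy essentially verbatim: you case-split on segment type per Observation~\ref{obs:segment:types}, use Lemma~\ref{lem:segment:props} to pin down where each type can appear, construct a matching $\pi'$-execution with out-of-sync events for Bob's extra inserts and substitutions at the endpoints, and close the bound $2f>c$ with Lemma~\ref{lem:segment:flips} for types (3)--(4) and the forced $r$-flips from Lemma~\ref{lem:sequence:all} for types (1)--(2). The only place you are thinner than the paper is in justifying $f\ge 1$ for type~(1), which actually requires combining Lemma~\ref{lem:segment:props}(1) (to show $q$ is the first segment, $k'=i$, and $s_i$ is multi-iteration) with Lemma~\ref{lem:frame:segments-cover} and Lemma~\ref{lem:sequence:all}(2)(a), but this is a matter of elaboration rather than a different idea.
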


\begin{proof}
    The proof is divided according to the segment types, as defined in Observation \ref{obs:segment:types}. For each type we present an execution of~$\pi'$ starting at the state $\T_a, \T_b$ and ends with $\T'_a, \T'_b$. We state explicitly the noise pattern during the presented execution. In particular, we show that $\T_a', \T_b'$ are \good.

    \begin{enumerate}[topsep=0pt,parsep=0pt,label={type (\arabic*)},left=0pt]
        \item We first show that $f \ge 1$. By Lemma~\ref{lem:segment:props}, $s_i$ has multiple iterations. Thus, it follows from Lemma~\ref{lem:sequence:all}(2)(a) that iteration $i$ ends up with an insertion. We argue that Alice is the inserting party at $i$; otherwise, by Lemma \ref{lem:frame:segments-cover}, iteration $i$ belongs to the first segment in the frame, and by Lemma \ref{lem:segment:props}, this segment is $q$. It implies that $q$ is of type (3) or (4), which is a contradiction. Due to Lemma~\ref{lem:sequence:all}(2)(a), at least one flip occurs during iteration~$i$, hence $f\ge 1$. 
        
        Denote by $\sigma, \rho'$ the two messages Alice adds to her transcript at iteration~$k'$, with $\sigma=\pi'(x\mid\T_a)$. Thus, $\T'_a= \T_a \circ \sigma \circ \rho'$ and $\T'_b=\T_b$ at the end of~$q$.
        We describe the matching execution of~$\pi'$ starting with $\T_a,\T_b$. Since $\T_a,\T_b$ are \good, Alice sends the next message, which is $\sigma=\pi'(x\mid\T_a)$. The noise deletes this message and injects the message $\rho'$ towards Alice; there are $c=1$ edit corruptions in this execution. The parties now hold the same $\T'_a,\T'_b$ as at the end of~$q$ in the execution of~$\Pi$.
        We conclude that $c=1$ and $f \geq 1$, hence $2f-c > 0$. The above $\pi'$ execution also implies that $\T_a', \T_b'$ are \good. 

        \item Let $s_l=[l, l']$ be the sequence containing $k'$. We argue that $l < k' < l'$. Since $q$ is a segment of type (2), both parties make progress in $k'$. Lemma \ref{lem:sequence:all}(1), (2)(b) suggests that either $l=k'=l'$ or $l < k' < l'$. If $l=k'=l'$, then $s_l$ is good, and $s_l=s_j$ by frame definition. Consequently, $q$ is the last segment in the frame, in contradiction.
        
        Consider iteration~$k'$ in~$\Pi$. Denote the $m$-part sent by Alice during $k'$ by $\sigma=\pi'(x\mid\T_a)$, and the message Bob receives by $\sigma'$. Then, Bob must send the $m$-part $\rho=\pi'(y\mid\T_b\circ\sigma')$ to Alice, who receives, say,~$\rho'$. At the end of iteration~$k'$ we have
        $\T'_a=\T_a \circ \sigma \circ \rho'$ and $\T'_b = \T_b \circ \sigma' \circ \rho$. Since $l < k' < l'$, it follows from Lemma~\ref{lem:sequence:all}(2)(b) that $r_a, r_b$ are flipped in $k'$, hence $f\ge 2$.
        
        We outline the matching execution of $\pi'$. Starting with the state $\T_a,\T_b$, the execution includes two rounds of messages. First, Alice sends Bob $\sigma$ and he receives (the potentially corrupted) $\sigma'$. Then Bob sends $\rho$ and Alice receives (the potentially corrupted) $\rho'$. The amount of edit-corruptions~$c$ in this execution, $0\le c \le 2$, counts the events $\sigma \ne \sigma'$ and  $\rho \ne \rho'$. Since $f\ge 2$ and $c\le 2$, we conclude that $2f - c > 0$. Similar to the proof of type~(1) above, $\T_a', \T_b'$ are \good.

        \item For each iteration $l\in[k,k']$, denote by $\sigma_l$ the $m$-part of the message Alice sends at iteration~$l$, and by $\sigma'_l$ the (potentially corrupted) $m$-part received by Bob. Similarly define $\rho_l$ and $\rho'_l$ as the messages Bob sends and Alice receives at iteration~$l$. Recall our notation of $P=\{k_1,\ldots, k_{|P|}\}$ for the iterations where Bob inserts during $q$ (Alice inserts only at~$k'$). It is immediate that at the end of iteration~$k'$ the parties hold
        \begin{align}
            \T_a' = \T_a \circ (\sigma_{k'} \circ \rho'_{k'}), \quad\text{ and }\quad
            \T_b' = \T_b \circ \bigcirc_{l=1, \ldots, |P|} (\sigma'_{k_l} \circ\rho_{k_l}).
        \end{align}
        We construct a matching execution of~$\pi'$, starting with $\T_a,\T_b$. 
        Since $\T_a,\T_b$ are \good, the execution starts with Alice sending Bob $\pi'(x \mid \T_a)=\sigma_{k_1}$.
        We set the noise such that Bob receives~$\sigma'_{k_1}$. 
        Then, for each $l \in \{1, \ldots, |P|-1\}$, Bob sends $\pi'(y\mid \T_b \circ \sigma'_{k_1} \circ \left.\bigcirc_{l'=2,\ldots,l}(\rho_{k_{l'-1}}\circ \sigma'_{k_{l'}})\right.)$. Clearly, this equals to $\rho_{k_l}$. We set the noise pattern such that Bob receives $\sigma'_{k_{l+1}}$ due to an out-of-sync error. 
        Finally, Bob sends $\rho_{k_{|P|}}=\rho_{k'}= \pi'(y \mid \T_b \circ \left.\bigcirc_{l=1, \ldots, |P|-1} (\sigma'_{k_l} \circ\rho_{k_l})\right. \circ \sigma'_{k_{|P|}})$, and Alice receives $\rho'_{k'}$. Since Alice makes no progress during the segment before iteration $k'$, $\sigma_{k_1} = \sigma_{k'}$. Therefore, the execution ends with the specified $\T'_a, \T'_b$, which are \good. Let $\delta$ be the count of the events  $\sigma_{k_1}\ne \sigma'_{k_1}$, $\rho_{k'} \ne \rho'_{k'}$. Note that $0 \leq \delta \leq 2$. Consequently, there are $\delta$ substitutions and $|P|-1$ out-of-sync errors, so $c = |P|-1+\delta \le |P|+1$. By Lemma~\ref{lem:segment:flips}, $f \geq (|P|+2)/2$ and thus $2f - c > 0$.

        \item Using the same notations of $\sigma_l, \sigma'_l, \rho_l, \rho'_l$ as in the proof of type~(3), we can write
        \begin{align}
        \T_a' = \T_a \circ (\sigma_{k'} \circ \rho'_{k'}), \quad\text{and}\quad \T_b' = \T_b \circ \bigcirc_{l=1, \ldots, |P|} (\sigma'_{k_l} \circ\rho_{k_l}) \circ (\sigma'_{k'} \circ \rho_{k'}).
        \end{align}
        We construct an execution of $\pi'$ that is analogous to the execution of case (3) above, with the exception that when Bob sends~$\rho_{k_{|P|}}$, this message does not reach Alice; instead, an out-of-sync noise causes Bob to receive~$\sigma'_{k'}$. 
        Then, Bob sends $\rho_{k'}$ and Alice receives $\rho'_{k'}$, so $\T_a', \T_b'$ are \good. In accordance with the same notation of $\delta$ as in case (3), and since there are $|P|$ out-of-sync errors, $c=|P|+\delta \le |P|+2$. By Lemma~\ref{lem:segment:flips}, $f \geq (|P|+3)/2$, and $2f-c>0$.
    \end{enumerate}
\end{proof}

\begin{lemma}
    \label{lem:reduction:last}
    Consider an execution of $\Pi$ on inputs $x,y$ in which $s_i,\ldots,s_j$ is a frame, and  $q=[k,k']$ is the last segment in that frame. Assume that $f$ flips occurred during $q$. 

    Let $\T_a, \T_b$ be Alice's and Bob's respective variables at the beginning of iteration~$k$, and denote by $\T_a', \T_b'$ their respective variables at the end of iteration $k'$. 
    If $\T_a, \T_b$ are \good transcripts, then there exists a matching execution of $\pi'$ for $q$,
    with $c$ edit-corruptions, such that $2f - c \geq 0$ if $q$ is the single segment within the frame, or $2f - c \geq -1$ otherwise. It holds that $\T'_a, \T'_b$ are \good.
\end{lemma}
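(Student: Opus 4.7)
The overall plan is to mirror the proof of Lemma \ref{lem:reduction:not-last}, performing a case analysis on the four segment types of Observation \ref{obs:segment:types}. In each case, I will exhibit a matching execution of $\pi'$ that transforms $\T_a, \T_b$ into $\T_a', \T_b'$, count the edit-corruptions $c$, and lower-bound the flips $f$ using Lemma \ref{lem:segment:flips}. The matching executions themselves can be reused verbatim from the previous lemma, since whether $q$ is the last segment in its frame or not does not affect how the exchanged $m$-symbols assemble into $\T_a', \T_b'$. What changes is that Lemma \ref{lem:segment:flips} supplies a weaker flip lower bound when $q$ is the last segment, and consequently the target inequality weakens from $2f-c>0$ to $2f-c\ge 0$ (only segment) or $2f-c\ge -1$ (multi-segment frame).

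For type (1), Lemma \ref{lem:segment:props} rules out the ``only segment'' case, so only $2f-c\ge -1$ is required. Here $k=k'=j'$ is the last iteration of $s_j$ and Alice inserts; by Lemma \ref{lem:sequence:all}(2)(c) this may occur with no flips whatsoever, so $f\ge 0$ is all one can guarantee. The matching $\pi'$-execution consists of Alice sending $\sigma=\pi'(x\mid\T_a)$ and receiving the injected $\rho'$, yielding $c=1$ and hence $2f-c\ge -1$. For type (2), Lemma \ref{lem:segment:props} forces $q$ to be the only segment in a single-iteration frame, so $2f-c\ge 0$ is needed. The matching execution is a single clean exchange in which $c\in\{0,1,2\}$ equals the number of content substitutions; since each such substitution requires at least one flipped $m$-bit, we have $f\ge c$ and therefore $2f-c\ge c\ge 0$.

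For types (3) and (4), I reuse the matching executions from Lemma \ref{lem:reduction:not-last}, obtaining $c=|P|-1+\delta$ and $c=|P|+\delta$ respectively, with $\delta\in\{0,1,2\}$ counting the two possible endpoint substitutions ($\sigma_{k_1}\ne\sigma'_{k_1}$ at the start and $\rho_{k'}\ne\rho'_{k'}$ at the end). Lemma \ref{lem:segment:flips} now delivers only $f_{r}\ge(|P|-1)/2$ and $f_{r}\ge|P|/2$, respectively, for the flips confined to the $r$-bits. The key observation is that the $\delta$ content substitutions force at least $\delta$ additional flips in the $m$-bits, disjoint from those $r$-bit flips. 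Adding the two contributions gives $f\ge(|P|-1)/2+\delta$ or $f\ge|P|/2+\delta$, and a direct calculation yields $2f-c\ge \delta\ge 0$ in both cases. Goodness of $\T_a',\T_b'$ is immediate in every case: the matching execution ends with Alice receiving a symbol (either Bob's response or a symbol injected by out-of-sync noise), which is precisely the condition in the definition of a \good transcript.

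The main technical obstacle, compared to Lemma \ref{lem:reduction:not-last}, is that the weaker $r$-bit flip bound from Lemma \ref{lem:segment:flips} consumes exactly the slack that previously allowed us to ignore flips outside the $r$-bits. Tracking $r$-bit and $m$-bit flips separately and summing them is therefore no longer optional. An additional subtlety is keeping careful track of which segment types can actually occur as the only segment (by Lemma \ref{lem:segment:props}, type (1) cannot, which is precisely why the weaker $-1$ bound suffices in that case; while types (3) and (4), which can be the only segment, must land at the stronger $\ge 0$ bound), so that the correct strength of inequality is proven in each branch.
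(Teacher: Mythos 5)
Your proof is correct and follows essentially the same route as the paper's: for each of the four segment types of Observation \ref{obs:segment:types}, you reuse the matching $\pi'$-execution from Lemma \ref{lem:reduction:not-last} (so the edit-corruption count $c$ is unchanged), invoke Lemma \ref{lem:segment:props} to determine when the ``only segment'' case applies, apply the weaker $r$-bit flip bounds from Lemma \ref{lem:segment:flips} for last segments, and in types (2), (3), (4) separately credit the $m$-bit flips forced by the $\delta$ endpoint substitutions to recover the required $2f-c\ge 0$. This is precisely the paper's argument, including the observation that type (1) cannot occur as a lone segment so the weaker $\ge -1$ bound suffices there.
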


\begin{proof}
    The proof is divided according to the segment types, as in the proof of Lemma \ref{lem:reduction:not-last}.
    The progress each party makes during a segment is determined only by the segment type (regardless to its position in the frame, etc.). Therefore the analysis of what Alice and Bob send and receive within a segment~$q$ is identical to Lemma~\ref{lem:reduction:not-last}. Hence, 
    the matching execution of~$\pi'$ for each case, along with the noise pattern that derives it, is identical to the execution of~$\pi'$ in the corresponding case of Lemma~\ref{lem:reduction:not-last}. Further, To avoid unnecessary clutter, we will not repeat these parts in this proof.
        
    Despite the above similarity, the noise pattern (in~$\Pi$) that induces the parties to take the same actions as in Lemma~\ref{lem:reduction:not-last} may differ in this case, which we analyze below.  
    
    \begin{enumerate}[topsep=0pt,parsep=0pt,label={type (\arabic*)},left=0pt]
        \item 
        The execution of $\pi'$, depicted in Lemma~\ref{lem:reduction:not-last},  has $c=1$ for this case. 
        Then, $2f-c \geq -1$ holds trivially. 
        Note that by Lemma~\ref{lem:segment:props}, there should be more segments in the frame except from~$q$.

        \item 
        Recall that the execution of $\pi'$, depicted in Lemma~\ref{lem:reduction:not-last}, satisfies $0 \le c\le 2$ counting the events $\sigma \ne \sigma'$ and $\rho \ne \rho'$.  
        
        By Lemma~\ref{lem:segment:props}, $q$ is the single segment in the frame, and there is a single iteration in the frame. During this iteration, $\sigma$ and $\rho$ are the messages sent by Alice and Bob respectively (lines~\ref{line:alice:send}, \ref{line:bob:send}), while $\sigma'$ and $\rho'$ are the corresponding received messages (lines~\ref{line:bob:receive}, \ref{line:alice:receive}).
        If $\sigma \ne \sigma'$, a flip must have occur during the transmission of~$\sigma$ (i.e., the $m$-part), and similarly for $\rho,\rho'$. Hence $f\ge c$, and $2f-c \geq 0$.

        \item Following the execution of~$\pi'$ in Lemma~\ref{lem:reduction:not-last}, $c=|P|-1+\delta$. Considering $\Pi$, Lemma~\ref{lem:segment:flips} suggests that during~$q$ at least $(|P|-1)/2$ $r$-bits are flipped.
        Additionally, in a similar manner to type (2) above, at least one flip is required for corrupting $\sigma_{k_1}$ into $\sigma'_{k_1}$ in iteration $k_1$, and at least one flip is required for corrupting $\rho_{k'}$ into $\rho'_{k'}$ in iteration $k'$, implying at least $\delta$ additional flips of $m$-bits.
        We conclude that $f \geq (|P|-1)/2+\delta$, leading to the desired $2f-c \geq 0$.

        \item Following the execution of~$\pi'$ in Lemma~\ref{lem:reduction:not-last}, $c=|P|+\delta$. By Lemma~\ref{lem:segment:flips}, at least $|P|/2$ $r$-bits are flipped during $q$.
        Using a similar argument to case~(3) above, there are least $\delta$ additional flips of $m$-bits. We conclude that $f \geq |P|/2+\delta$, and the condition  $2f-c \geq 0$ holds.
    \end{enumerate}
\end{proof}

\subsubsection{Conclusion}
\label{apdx:thm:main:conclusion}
Now, we can complete the proof of Theorem~\ref{thm:frame}: we split the frame into segments, and use Lemmas \ref{lem:reduction:not-last} and \ref{lem:reduction:last} to reduce each segment into an execution of $\pi'$. Then we wrap all these executions up and create the matching execution for the frame.

\begin{proof}[Proof of Theorem~\ref{thm:frame}]
    Given a frame $s_i, \ldots, s_j$, it can be divided into $R\ge1$ disjoint segments $\{q_1, \ldots, q_{R}\}$ by Lemmas~\ref{lem:frame:segments-cover} and~\ref{lem:segment:not-cross}, in the following manner. Each iteration with progress in the frame belongs to a segment, and all the $R$ segments are contained within the frame. By Lemma \ref{lem:frame:segments-cover}, $q_R$ ends at $j'$, and $q_1$ starts at $i$.

    Consider the case where $R=1$. In this scenario, $q_1=[i,j']$ is the sole segment within the frame. Lemma~\ref{lem:reduction:last} states that there is a matching execution of~$\pi'$ for $q_1$ (and hence, for the entire frame), where at most $2f$ edit-corruptions occurred. $\T_a, \T_b$ are \good by assumption, and the lemma asserts that  $\T'_a, \T'_b$ are \good as well.

    Now consider the case where $R>1$. We construct an execution of $\pi'$ that consists of the concatenated executions matching $q_1,\ldots,q_R$ by activating the above lemmas serially on the segments. In particular, 
    Lemma~\ref{lem:reduction:not-last} suggests that $q_1$ has a matching execution of~$\pi'$ that ends with \good transcripts. Between $q_1$ and $q_2$ there may be consecutive iterations with no progress, so $q_2$ starts with the same $\T_a, \T_b$ as at the end of $q_1$. Thus the lemma can be activated again on $q_2$. This continues sequentially for all $q_1, \ldots, q_{R-1}$. 
    Now, we use Lemma~\ref{lem:reduction:last} to construct an execution of~$\pi'$ matching to~$q_{R}$; again recall that the transcripts at the end of the execution matching $q_{R-1}$ were \good. This implies a matching execution for the entire frame, and that $\T'_a,\T'_b$ at the end of the frame are also \good. 
    The noise pattern of the concatenated execution is the concatenation of the noise patterns in each segment (indeed, this is valid since the transcripts are \good, that is, Alice is the next one to speak at the beginning of each execution).

    Now we analyze the noise in the concatenated execution of~$\pi'$ that matches the entire frame. 
    For $l\in\{1,\ldots, R\}$, denote $f_l$ as the number of flips that occur in~$q_l$, and let $c_l$ be the number of edit-corruptions in the matching execution of~$q_l$. By Lemma~\ref{lem:reduction:not-last}, $c_l  < 2f_l$ for all $l \in \{1, \ldots, R-1\}$. In addition, $c_R \le 2f_R+1$, by Lemma~\ref{lem:reduction:last}. Since $R>1$, it follows that $c=\sum_{l=1...R} c_l < \sum_{l=1..R-1} 2f_l + (2f_R+1) \le 2f+1$, hence the theorem holds.
\end{proof}

\subsection{Completing Theorem \ref{thm:main}}
\label{apdx:thm:main}

In this section we complete the analysis and prove Theorem~\ref{thm:main}. 
The proof consists of three stages: the first is the reduction of a complete execution of $\Pi$ to a matching execution of $\pi'$; the second stage is the connection between the flips count in the execution of~$\Pi$ and the edit-corruptions in the execution of~$\pi'$; and the third is the connection between the probabilities $\{p_i\}_{i\in \mathbb{N}}$ and the flips count in~$\Pi$.

The following lemma states the reduction of a complete execution of $\Pi$ to a matching execution of~$\pi'$.

\begin{lemma}
\label{lem:full-execution}
    Consider an execution of $\Pi$ in which $f$ bit flips occur. Denote by $j$ the last iteration where Alice is active. Then, there exists a matching execution for the consecutive iterations $[1, j]$, in which at most $2f$ edit-corruptions occur.
\end{lemma}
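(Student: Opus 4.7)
The plan is to partition the iterations $[1,j]$ into a sequence of frames (plus ``leftover'' iterations between or after frames) and invoke Theorem~\ref{thm:frame} on each frame, then concatenate the resulting matching executions of $\pi'$ into a single matching execution for the whole prefix $[1,j]$.

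First I would verify the base case: at the very beginning of the execution, both transcripts are empty, and this initial state is \good since it corresponds to the start of $\pi'$ with Alice about to transmit the first symbol. I would then identify the frames inside $[1,j]$. By Observation~\ref{obs:frame-complete}, any iteration with progress that occurs before some frame lies inside another frame, so the iterations with progress that do \emph{not} lie inside any frame can only appear after the last frame. Between two consecutive frames there cannot be any iterations with progress (same observation), so all in-between iterations leave both transcripts unchanged, which in particular preserves the \good property of $\T_a,\T_b$ across the gap. Thus I can apply Theorem~\ref{thm:frame} successively: for each frame, the transcripts at its start are \good (by induction), so the theorem produces a matching execution of $\pi'$ for that frame with at most $2 f_\ell$ edit-corruptions (where $f_\ell$ is the number of flips inside frame $\ell$) and guarantees that the transcripts at its end are again \good. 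Stitching these executions together yields a matching execution for all iterations covered by frames, and summing over frames bounds the total edit-corruptions by $2\sum_\ell f_\ell \le 2f$.

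The step I expect to be the main obstacle is handling the iterations after the last complete frame, in particular iteration $j$ itself. Since Alice is active at iteration $j$, she must have made progress in that iteration (either reaching the termination condition $r_a=N'/2$ or preparing for another iteration). If the sequence containing $j$ happens to be good, then $j$ is already absorbed into the last frame and we are done. If not, these trailing iterations form a partial would-be frame that does not close out at a good sequence, and I would handle it by a direct adaptation of the segment-level reduction used in Lemmas~\ref{lem:reduction:not-last}--\ref{lem:reduction:last}: split the trailing iterations into segments in the sense of Definition~\ref{def:segment} (this is still well-defined because Definition~\ref{def:segment} only requires an iteration in which Alice makes progress, which $j$ provides), apply the appropriate segment-level matching execution, and observe that the bookkeeping of flips versus edit-corruptions from the proofs of those lemmas still yields $c_\ell \le 2 f_\ell$ on this final piece (possibly with a $+1$ slack on the last segment, which can be absorbed because any such trailing segment must contain at least one corruption). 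Summing all contributions gives the claimed bound of $2f$ edit-corruptions.

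Finally, I would confirm that the concatenation constructed above is literally a single execution of $\pi'$ (not just a gluing of formal segments): because each intermediate state is \good, Alice is always the next speaker when the next segment begins, which matches the alternating structure of $\pi'$ and allows the noise patterns guaranteed by Theorem~\ref{thm:frame} (and the segment lemmas) to be concatenated without conflict. This yields the desired matching execution for $[1,j]$ with at most $2f$ edit-corruptions, completing the proof.
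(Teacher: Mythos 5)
Your decomposition into frames plus trailing iterations, and the use of Observation~\ref{obs:frame-complete} and Theorem~\ref{thm:frame} to handle the frames, is exactly what the paper does. The gap is in the treatment of the trailing iterations after the last complete frame.

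You propose to split the trailing iterations $[j'+1,j]$ directly into segments and ``apply the appropriate segment-level matching execution,'' invoking Lemmas~\ref{lem:reduction:not-last}--\ref{lem:reduction:last}. But those lemmas are stated and proved only for segments \emph{within a frame}: their bookkeeping rests on frame-specific structural facts (Lemma~\ref{lem:segment:props}, Lemma~\ref{lem:segment:flips}, Observation~\ref{obs:frame:not-last-sequence}), all of which presuppose that the sequence of segments terminates in a good sequence. The trailing piece explicitly does not---it is the situation where Alice quits mid-sequence, so $r_a\ne r_b\pmod 2$ at the end of iteration~$j$ and no frame is formed. You cannot simply ``directly adapt'' those lemmas without re-establishing their hypotheses, and your parenthetical about absorbing a $+1$ slack ``because any such trailing segment must contain at least one corruption'' does not close the gap: a corruption need not be a flip, and even if you upgrade it to a flip (which can be done using Lemma~\ref{lem:sequence:all}), a single extra flip in the last segment does not cancel the $+1$ in the bound $c\le 2f+1$ from Lemma~\ref{lem:reduction:last}.

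The paper's own resolution (Lemma~\ref{lem:redundant}) is to artificially append one noiseless iteration $j+1$ to the execution, for analysis purposes only. Because $r_a\ne r_b\pmod 2$ at the end of $j$, this added iteration closes the open sequence and yields a bona fide frame ending at $j+1$. The crucial payoff is that the fake iteration $j+1$ becomes the \emph{last} segment of this frame, and it corresponds to no real communication; every real trailing segment is therefore a \emph{non}-last segment, so Lemma~\ref{lem:reduction:not-last} applies with the strict bound $c_t < 2f_t$, which sums to at most $2f$ overall with no slack to absorb. You would need to either invoke Lemma~\ref{lem:redundant} explicitly or reproduce this fake-iteration construction; the ``direct adaptation'' as written does not go through.
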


In order to prove Lemma \ref{lem:full-execution}, we split an execution of $\Pi$ into frames and use Theorem \ref{thm:frame} to reduce each frame to an execution of $\pi'$. Nevertheless, we need to consider one final edge case; what happens when the protocol ends. Alice terminates whenever she has simulated enough symbols. However, Bob might be unaware of this event and might not be synchronized with Alice. Technically speaking, the last progress of Alice that causes a termination will signify the end of a segment (the last segment in the execution), but, if Bob is not synchronized, this might not be the end of the sequence, and thus, not the end of a frame. Even worse, once Alice quits, there will never be another insertion by Alice, thus the ``current'' frame will never end. 

To solve this issue, we use the lemmas in section \ref{apdx:segments} until the last frame that has a proper end (by Definition~\ref{def:frame}) and then allow the protocol to run one more iteration as if Alice did not terminate. This added iteration effectively causes the last sequence to be completed and allows us to use the above lemmas for the termination part as well. Of course, this added iteration is used only for analysis purpose.

\begin{lemma}
\label{lem:redundant}
    Consider an execution of $\Pi$ on inputs $x,y$. Let $j$ be the last iteration in which Alice is active (i.e., the iteration where Alice executes Line~\ref{line:alice:output}). 
    Let $j'\le j$ be the last iteration which is the end of a frame. Assume $j' < j$, and denote $i=j'+1$.
    
    Denote by $f$ the number of bit flips that occur during $[i, j]$. 
    Let $\T_a, \T_b$ be Alice's and Bob's respective variables at the beginning of iteration $i$, and denote by $\T_a', \T_b'$ the respective variables at the end of iteration~$j$. 

    If $\T_a, \T_b$ are \good, then there exists 
    a matching execution of $\pi'$
    for the consecutive iterations $[i, j]$, with at most $2f$ edit-corruptions. 
\end{lemma}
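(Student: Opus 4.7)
The proof strategy is to extend $\Pi$'s execution with a hypothetical noise-free iteration $j+1$ on which Alice is assumed to remain active, then apply Theorem~\ref{thm:frame} to the extended execution, and finally truncate the resulting $\pi'$-execution to remove the contribution of the hypothetical iteration. First, I would show that this extension produces a complete frame ending at iteration $j+1$: since $i = j'+1$ starts just after a frame-ending iteration, we have $r_a = r_b$ at the start of $i$, and since no frame can close within $[i, j]$ (by maximality of $j'$), any progress inside $[i, j]$ is due to an open bad sequence in which Bob has inserted without Alice catching up. The noise-free iteration $j+1$ then forces Alice to make progress: by the case analysis in Lemma~\ref{lem:sequence:manage}, a clean iteration starting from either $r_a = r_b$ or $r_b = r_a + 1$ results in Alice updating her transcript, thereby closing a good sequence and ending a frame at $j+1$.

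Next, I would apply Theorem~\ref{thm:frame} to this frame, say $[i^\star, j+1]$ in the extended execution, where $i^\star \ge i$. Iterations in $[i, i^\star - 1]$, if any, must be no-progress and hence outside any frame by Observation~\ref{obs:frame-complete} (otherwise a frame would end within $[i, j]$, contradicting the choice of $j'$). Since the hypothetical iteration $j+1$ contributes no flips, the total number of flips in the frame is at most $f$, so Theorem~\ref{thm:frame} yields a matching execution $E$ of $\pi'$ for the frame with at most $2f$ edit-corruptions; prepending trivial empty steps for the no-progress prefix $[i, i^\star - 1]$ gives a matching execution for $[i, j+1]$ with the same number of corruptions.

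Finally, I would truncate $E$ to obtain a matching execution for the original $[i, j]$. By the construction used in the proof of Lemma~\ref{lem:reduction:last} applied to the final segment of the extended frame, the clean iteration $j+1$ is responsible for the tail exchange in $E$ in which Alice's fresh send reaches Bob and Bob's response (either newly computed or his cached one, depending on the parity state at the end of $j$) reaches Alice without corruption. Removing this clean tail yields an execution ending at $(\T_a', \T_b')$---precisely because Alice's catch-up progress, the only change produced by iteration $j+1$, corresponds exactly to the rounds being removed---while the edit-corruption count remains bounded by $2f$ since the removed rounds are clean. The principal obstacle is this final truncation step: carefully pinpointing which rounds of $E$ are attributable solely to iteration $j+1$ and confirming that their deletion produces exactly $(\T_a', \T_b')$, which requires a short case analysis on the type of the final segment (per Observation~\ref{obs:segment:types}) and on whether iteration $j$ ended with $r_a = r_b$ (so $j+1$ is a fresh single-iteration good sequence) or with $r_b = r_a + 1$ (so $j+1$ closes a pre-existing open bad sequence).
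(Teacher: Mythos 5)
Your overall strategy — extend the execution with a noise-free iteration $j+1$, show this completes a frame ending at $j+1$, reduce to a $\pi'$-execution, then strip off the contribution of $j+1$ — is precisely the approach in the paper. The paper sidesteps your "principal obstacle" by never reducing the artificial segment at all: after extending, it partitions the completed frame into segments $q_1,\ldots,q_l$ via Lemma~\ref{lem:frame:segments-cover}, observes that $q_l=[j{+}1,j{+}1]$ (this is forced: Alice makes progress in $j$, so by the segment definition no segment can start at or before $j$ and end at $j+1$), and then applies Lemma~\ref{lem:reduction:not-last} only to $q_1,\ldots,q_{l-1}$, whose concatenation already ends at $(\T_a',\T_b')$ and accrues $\sum_{t<l} c_t < \sum_{t<l} 2f_t \le 2f$ edit-corruptions. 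So the "careful pinpointing" you anticipate is trivial once you note the final segment is always $[j{+}1,j{+}1]$; you never need the case split on segment type, nor the case where $r_a=r_b$ at the end of $j$ (the paper rules this out — if $j$ ended a sequence it would end a frame, contradicting $j'<j$). One factual slip in your sketch: the tail of the matching execution $E$ is \emph{not} a clean exchange where "Alice's fresh send reaches Bob and Bob's response reaches Alice without corruption." The segment $[j{+}1,j{+}1]$ is type (1), and its matching $\pi'$-execution is an out-of-sync corruption (Alice's symbol deleted, $\rho'$ injected back, $c=1$). Your count still comes out right since removing a corrupted tail only lowers the budget, but the stated justification ("the removed rounds are clean") is wrong; what actually makes the truncation valid is that the construction in Theorem~\ref{thm:frame} is segmentwise-concatenated, so cutting at the last segment boundary lands exactly on $(\T_a',\T_b')$.
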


\begin{proof}
    First, we show that iteration $j$ cannot be the last iteration of a sequence. The while loop in Algorithm \ref{alg:schemeALICE} ends at the last iteration in which Alice makes progress. Hence, Alice makes progress in iteration~$j$. Thus, if iteration $j$ is the last iteration of a sequence, then by frame definition iteration $j$ is the last iteration of a frame. Thus, $j'=j$, in contradiction. Consequently, $r_a \neq r_b \mod 2$ at the end of~$j$. 

    Since the machinery we have built so far assumes complete frames, 
    we are now going to allow the algorithm~$\Pi$ to run one more iteration (without any noise), for analysis purpose only. 
    At this added iteration~$j+1$, Alice sends Bob the $(|\pi'|+1)$-th symbol of $\pi'$, which we set to be an arbitrary symbol. Alice raises $r_a$ in the beginning of $j+1$, so $r_a = r_b \mod 2$ and Bob receives $r_a$ which does not fit to the expected in Line~\ref{line:bob:if}. Thus, Bob sends Alice the message he sent in iteration $j$, and does not change $r_b$. Alice receives $r_b$ which fits to the expected in Line~\ref{line:alice:if}, inserts the symbol she received from Bob, and we have completed a good sequence. 

    Thus, iteration $j+1$ is the last iteration of a frame, which starts at some $k \geq i$. By frame definition, there is no progress during iterations $[i, k-1]$ (if $k>i$). Hence, we have a complete frame, and we can use Lemma~\ref{lem:frame:segments-cover} to divide the frame $[k, j+1]$ into disjoint segments $q_1, \ldots, q_l$, which cover every iteration with progress in $[k, j+1]$. Note that $q_l = [j+1, j+1]$, since Alice makes progress in~$j$. 
    
    By Lemma~\ref{lem:reduction:not-last}, there exists a matching execution for every segment in $\{q_1, \ldots, q_{l-1}\}$; note, \emph{without~$q_l$}. 
    For $t\in[l-1]$, we denote the flips count in $q_{t}$ by~$f_{t}$, and use Lemma~\ref{lem:reduction:not-last} to obtain a matching execution for~$q_{t}$ with noise pattern that contains $c_{t} < 2f_{t}$ edit-corruptions. We may concatenate all these matching executions, since each segment starts at the final state of the previous segment. 
    The resulting execution is a matching execution for $[i, j]$, with less than $2f$ edit-corruptions.
\end{proof}

The proof of \Cref{lem:full-execution} is then immediate.

\begin{proof}[Proof of Lemma \ref{lem:full-execution}]
    Let $j'\le j$ be the last iteration which is the end of a frame. By \Cref{obs:frame-complete}, each iteration with progress before $j'$ belongs to a frame. By Theorem~\ref{thm:frame}, every frame with $f'$ flips has a matching execution of $\pi'$ with at most $2f'$ edit-corruptions. If $j' < j$, by Lemma~\ref{lem:redundant} there is a matching execution for $[j'+1, j]$. Let $f''$ represent the number of flips occurring during iterations $[j'+1, j]$; then, there are at most $2f''$ edit-corruptions during the matching execution. We use the technique of concatenating these executions of $\pi'$, where each series of iterations in $\Pi$ (either a frame or $[j'+1, j]$) starts at the final state of the previous one. We receive a matching execution for $[1, j]$, with at most $2f$ edit-corruptions.
\end{proof}

\begin{lemma}
\label{lem:success-by-hsv}
    Consider an execution of $\Pi$ over UPEF model where $f$ bit flips occur, and suppose that $f < N'/90$. Then, the output of $\Pi$ at the end of the execution equals to the correct output of $\pi$ with probability of at least $1-2^{-\Theta(N)}$.
\end{lemma}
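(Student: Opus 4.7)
The plan is to combine Lemma~\ref{lem:full-execution} with the noise-resilience guarantee of Theorem~\ref{thm:HSV18}. First I would invoke Lemma~\ref{lem:full-execution} on the given execution of~$\Pi$: letting $j$ be Alice's last active iteration, this yields a matching execution of~$\pi'$ for the iterations $[1,j]$ in the insertion--deletion model, suffering at most $2f$ edit-corruptions. In particular, Alice's final transcript $\T_a$ at the end of this matching execution of $\pi'$ is exactly the transcript on which $\Pi$ bases Alice's output (Line~\ref{line:alice:output}), and the analogous statement holds for Bob via the terminating-symbol argument of Appendix~\ref{sec:termination}.

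Next I would check that the number of edit-corruptions stays below the HSV threshold. Recall that $\pi'$ is the protocol obtained from $\pi$ via Theorem~\ref{thm:HSV18} with $\eps=1/1980$, so its noise tolerance is
\[
\delta = \frac{1}{44}-\frac{1}{1980} = \frac{45-1}{1980}=\frac{1}{45}.
\]
Under the assumption $f < N'/90$ we have $2f < N'/45 = \delta\cdot N'$, so the matching execution of $\pi'$ experiences a fraction of edit-corruptions strictly less than the allowed $\delta$. Consequently, Theorem~\ref{thm:HSV18} guarantees that $\pi'$ successfully simulates $\pi$ in this matching execution with probability at least $1 - 2^{-\Theta(N')}$.

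Finally, since $\T_a$ (and $\T_b$) at the end of the matching execution equal Alice's (and Bob's) transcripts in $\Pi$, the output extracted by $\Pi$ in Line~\ref{line:alice:output} equals the output of $\pi'$ in the matching execution. When the HSV success event holds, this output in turn equals the output of $\pi$ on inputs $(x,y)$. Since $N'=\Theta(N)$ by Theorem~\ref{thm:HSV18}, we obtain the claimed $1-2^{-\Theta(N)}$ success probability.

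The main thing to verify carefully is the first step, namely that the matching execution constructed in Lemma~\ref{lem:full-execution} really does let us transfer the HSV success guarantee back to~$\Pi$. This is fine because ``matching'' was defined precisely so that the initial and final transcripts on both sides agree; combined with the fact that $\Pi$'s output is a deterministic function of the final transcript, correctness of $\pi'$ under the insertion--deletion noise pattern translates directly into correctness of~$\Pi$. No further probabilistic computation is needed since the bound on $f$ is deterministic (the randomness is entirely absorbed by Theorem~\ref{thm:HSV18}'s own success probability).
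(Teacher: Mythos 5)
Your proof is correct and follows essentially the same route as the paper's: invoke Lemma~\ref{lem:full-execution} to get a matching execution of $\pi'$ with at most $2f < N'/45 = \delta N'$ edit-corruptions, observe via the definition of a matching execution that Alice's terminal transcript has length $N'$ (so the fraction of corruptions is below the HSV threshold $\delta=1/45$), and then apply Theorem~\ref{thm:HSV18} to conclude. The only cosmetic difference is that you explicitly flag the transfer of correctness back to $\Pi$'s output and Bob's side, which the paper leaves slightly more implicit.
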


\begin{proof}
    Recall that $\pi'$ is intended for the insertion-deletion model (Section~\ref{sec:id-model}), whose length is~$N'\in \Theta(N)$. By Theorem \ref{thm:HSV18}, $\pi'$ can withstand a fraction $\delta=1/45$ of insertions and deletions, and outputs the correct output of $\pi$ with probability $1-2^{-\Theta(N)}$. 
    
    Since $f < N'/90$, it follows from Lemma~\ref{lem:full-execution} that there is a matching execution of~$\pi'$ for the execution of~$\Pi$, in which at most $N'/45$ edit-corruptions occur. We argue that during this execution of $\pi'$ Alice uses the insertions-deletions channel $N'$~times. Alice terminates in $\Pi$ when her variable is some $\T_a$ whose length is~$N'$ (by Line \ref{line:alice:while}). In accordance with the definition of matching execution, Alice's variable at the end of the execution of~$\pi'$ is identical to~$\T_a$. 
    
    Thus, we can guarantee that the output of $\Pi$ in the execution equals to the correct output of $\pi$ with probability~$1-2^{-\Theta(N)}$.
\end{proof}

\begin{lemma}
\label{lem:flips-prob}
    Consider an execution of $\Pi$ over UPEF with probabilities that satisfy $\sum_{i=1}^{\infty}{p_i} < S$.
    Then, for any $\eps>0$,
    the number of flips~$f$ observed during $\Pi$ satisfies $\Pr(f > S(1+\eps)) < 2^{-\Omega_\eps(S)}$.
\end{lemma}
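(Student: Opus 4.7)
The plan is to recognize that, conditional on the adversary's choice of noise pattern $E$, the number of flips is a sum of independent Bernoulli variables, and then apply a standard multiplicative Chernoff bound.

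First I would fix any noise pattern $E\subset\mathbb{N}$ realized during the execution (the adversary is oblivious, so $E$ is a function of $\pi,x,y$ alone). For each $i\in E$, let $X_i$ be the indicator that the $i$th corruption is a flip; by the definition of the UPEF model the $X_i$ are mutually independent with $\Pr[X_i=1]=p_i$. Writing $f=\sum_{i\in E}X_i$, I get the deterministic bound
\[
\mu \;\triangleq\; \E[f] \;=\; \sum_{i\in E}p_i \;\le\; \sum_{i=1}^{\infty}p_i \;<\; S.
\]
The goal is then to show $\Pr[f>S(1+\eps)]\le 2^{-\Omega_\eps(S)}$.

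Next I would apply the multiplicative Chernoff bound in the form $\Pr[f\ge(1+\delta)\mu]\le\exp\!\bigl(-\mu\delta^2/(2+\delta)\bigr)$, setting $1+\delta = S(1+\eps)/\mu$, so $\delta\ge\eps$. The only mild subtlety is that we are comparing $f$ to a multiple of the upper bound $S$ rather than the true mean $\mu$, so the analysis splits into two regimes. If $\mu\ge S/2$, then $\delta\in[\eps,1+2\eps]$ and $\mu\delta^2/(2+\delta)\ge (S/2)\cdot\eps^2/(3+2\eps)=\Omega_\eps(S)$. If $\mu<S/2$, then $\delta\ge 1$ so $\delta/(2+\delta)\ge 1/3$, and $\mu\delta^2/(2+\delta)\ge \mu\delta/3 = (S(1+\eps)-\mu)/3\ge S(1/2+\eps)/3=\Omega(S)$. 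Either way the exponent is $\Omega_\eps(S)$, yielding the desired $2^{-\Omega_\eps(S)}$ tail bound for the fixed pattern $E$.

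Finally, since the bound holds for every realizable noise pattern $E$ and the randomness of $f$ given $E$ is what is being taken over, the unconditional bound follows immediately. The only obstacle I anticipate is the two-regime split just noted; this is a routine calculus-level observation rather than a genuine technical difficulty, so the proof should be short.
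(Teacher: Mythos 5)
Your proof is correct, and it takes a genuinely different route from the paper's. Both start from the same reduction: fix the noise pattern $E$ (oblivious adversary), write $f$ as a sum of independent Bernoulli indicators over $E$, and observe $\mu = \E[f] = \sum_{i\in E}p_i < S$. The divergence is in how the gap between $\mu$ and $S$ is handled. The paper \emph{pads} $f$ with a block of fresh dummy Bernoullis $y_j$ chosen so that $y = f + \sum_j y_j$ has mean exactly $S$ and stochastically dominates $f$; then a single multiplicative Chernoff application on $y$ with parameter $\eps$ gives $2^{-\Omega_\eps(S)}$ immediately, with no case analysis. You instead apply Chernoff directly to $f$ in the form $\Pr[f\ge(1+\delta)\mu]\le\exp(-\mu\delta^2/(2+\delta))$ with $1+\delta = S(1+\eps)/\mu$, and then split into the regimes $\mu\ge S/2$ (where $\delta$ is bounded and the exponent is $\Omega_\eps(S)$) and $\mu< S/2$ (where $\delta\ge 1$, so $\mu\delta^2/(2+\delta)\ge\mu\delta/3 = (S(1+\eps)-\mu)/3 = \Omega(S)$). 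Your version avoids introducing auxiliary random variables and a stochastic-domination step at the cost of a short two-case calculus argument; the paper's version avoids the case split at the cost of the padding construction. Both exponents are $\Omega_\eps(S)$, so the conclusions match.

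One tiny imprecision worth noting (present in the paper's write-up as well): strictly speaking, $f$ is the number of flips observed during the actual (finite, noise-dependent-length) execution, so $f \le \sum_{i\in E}X_i$ rather than an equality; but since the bound is one-sided, $\Pr[f>z]\le\Pr[\sum_{i\in E}X_i>z]$ holds and the argument is unaffected.
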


\begin{proof}
    Denote the noise pattern in the execution of $\Pi$ by~$E$. Then, $f = \sum_{i \in E}{f_i}$, with $f_i\sim \text{Ber}(p_i)$ an independent Bernoulli random variable that indicates whether the $i$-th transmission is flipped or erased. 
    To ease the analysis, we impose the constraint that $f^{\infty}\ge f$, where $f^{\infty}$ is defined as the number of flips the execution ``suffers'' assuming $E$ dictates a corruption at each and every iteration.

    By linearity of expectations,
    \[
    \E[f^\infty] = \sum_{i=1}^{\infty}{p_i} < S \text{.}
    \]
   
    Let $\eps > 0$ be fixed. 
    In order to bound $\Pr\big(f > S(1+\eps)\big)$, we set $R = \lceil{S - \sum_{i \in E}{p_i}}\rceil$ and define the independent random variables $y_j \sim \text{Ber}\left(\frac{1}{R}\left(S - \sum_{i \in E}{p_i}\right)\right)$, for all $j \in (R,2R]$. Then, we define the random variable $y = f + \sum_{i=R+1}^{2R}{y_i}$. By linearity of expectations, $\E[y] =\E[f] + R\cdot \E[y_i] = S$. Note that $y \geq f$ with probability 1, so by \cite{LP17} (Lemma 22.5) $y$ stochastically dominates $f$. Thus, for any $z \in \mathbb{R}$, $\Pr\left(y > z\right) \geq \Pr\left(f > z\right)$.

    Since $y$ is a summation of independent Bernoulli random variables, we can use Chernoff's bound in order to bound $\Pr\left(y > S(1+\eps)\right)$, which in turn bounds $\Pr\left(f > S(1+\eps)\right)$:
    \begin{align*}
    \Pr\left(f > S(1+\eps)\right) 	&= \Pr\left(\sum_{i \in E}{f_i} > S(1+\eps)\right) \\
                                         &\leq \Pr\left(\sum_{i \in E}{f_i} + \sum_{i=R+1}^{2R}{y_i} > S(1+\eps)\right) \\
                                         &= \text{Pr}\left(y > \E[y] \times (1+\eps)\right) \\
                                         &\leq \exp\left\{-\E[y] \times \eps^2 / 3\right\} = \exp\left\{-S \times \eps^2 / 3\right\}\\
                                         &= 2^{-\Theta_\eps(S)}
    \end{align*}
    The second transition follows from the stochastic domination of $y$ over $f$, and the fourth transition follows from Chernoff's inequality (Theorem 4.4(2) in \cite{MU17}), assuming $\eps\le 1$. If $\eps>1$, the probability is clearly bounded by $\Pr\left(y > 2\E[y]\right)$, which leads to the same asymptotic bound.
\end{proof}

Finally, we can prove the correctness part of Theorem \ref{thm:main}.

\begin{proof}[Proof of Theorem~\ref{thm:main} (correctness only)]
Given an alternating binary protocol~$\pi$ of length $|\pi|=N$, let $\Pi$ be the protocol obtained by executing Algorithms~\ref{alg:schemeALICE} and~\ref{alg:schemeBOB} on $\pi$.
If the number of bit flips $f$ during an execution of $\Pi$ satisfies $f < N'/90$, then by Lemma~\ref{lem:success-by-hsv} $\Pi$ outputs the correct output of $\pi$ with probability $1-2^{-\Theta(N)}$.

Recall that under the UPEF model, if the $i$-th transmission is corrupted, then it is flipped with probability~$p_i$ (or erased otherwise), where the probability $p_i$ is parameterized by the model.
As stated in Theorem~\ref{thm:main}, we set $p_i=\min\left\{\frac{CN}{i^2}, \frac{1}{2}\right\}$. We proceed to find the desired value of $C$; recall that $N' \ge N$ due to Theorem \ref{thm:HSV18}. Thus,

 \[
\sum_{i=1}^{\infty}{p_i} \le \sqrt{\frac{CN}{2}} + \sum_{i=\lceil{\sqrt{2CN}}\rceil}^{\infty}{\frac{CN}{i^2}} \le \left(\frac{1}{\sqrt{2}} + \frac{\pi^2}{6}\right)CN \le \left(\frac{1}{\sqrt{2}} + \frac{\pi^2}{6}\right)CN' \text{.}
\]

By setting $C = \frac{1}{90 \times 1.1 \times 3}$ we receive $\sum_{i=1}^{\infty}{p_i} < \frac{N'}{90 \times 1.1}$. Then, we can use Lemma~\ref{lem:flips-prob} with $\eps = 0.1$ to conclude that $\Pr\left(f > \frac{N'}{90}\right) < 2^{-\Omega(N)}$ (recall that $N' \in \Theta(N)$ by Theorem \ref{thm:HSV18}). Thus, by Lemma~\ref{lem:success-by-hsv}, the protocol $\Pi$ outputs the correct output of $\pi$ with probability of at least $1-2^{-\Omega(N)}$.
\end{proof}

\subsubsection{Communication Complexity Analysis}
\label{apdx:complexity}
We now proceed to complete the proof of Theorem~\ref{thm:main} by analyzing its communication complexity.
In this section we assume a successful execution of~$\Pi$, which happens, as proven above, with very high probability.
The initial step is to analyze the maximum delay that corrupted transmissions may cause to the simulation.
\begin{lemma}
\label{lem:no-progress:sequence}
Let $s=[i,j]$ be a sequence and let $[i, k]$ with $i \le k \le j$ be its prefix.
Assume that there are $t$ corruptions in $[i, k]$. Then, the number of iterations in $[i,k]$ with no progress is at most~$t$.
\end{lemma}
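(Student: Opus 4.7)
The plan is to reduce the claim to the per-iteration corruption lower bounds already established by Lemma~\ref{lem:sequence:all}. Concretely, I will show that every iteration in $[i,k]$ which has no progress contains at least one corruption \emph{local to that iteration}; since different iterations use disjoint corruptions, summing gives the bound $n \le t$ where $n$ is the number of no-progress iterations in $[i,k]$.

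First I would split into cases according to the structure of $s$. If $s$ is a single-iteration sequence (so $i=j=k$), then Lemma~\ref{lem:sequence:all}(1) applies: either case (a) both parties make progress (no no-progress iteration, so the statement is vacuous), or case (b) neither party makes progress and the iteration contains at least one corruption. In the latter case the single no-progress iteration uses at least $1$ corruption, so $n \le 1 \le t$.

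If $s$ has more than one iteration, I would look at each iteration $m \in [i,k]$ in turn. For $m=i$, Lemma~\ref{lem:sequence:all}(2)(a) says iteration $i$ ends with an insertion, hence has progress, and therefore is not a no-progress iteration at all. For $m=j$ (which is only relevant when $k=j$), Lemma~\ref{lem:sequence:all}(2)(c) guarantees that one party makes progress, so again $m$ is not a no-progress iteration. The remaining case is $i < m < j$, where Lemma~\ref{lem:sequence:all}(2)(b) directly states that there is at least one corruption in iteration $m$, regardless of whether $m$ has progress or not. In particular, every no-progress iteration of the prefix has at least one corruption located inside that same iteration.

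Finally, since corruptions in distinct iterations are distinct elements of the noise pattern, summing the per-iteration lower bound over the no-progress iterations in $[i,k]$ yields at least $n$ corruptions total within $[i,k]$. Combined with the hypothesis that there are exactly $t$ corruptions in $[i,k]$, this gives $n \le t$, as desired. There is no real obstacle here beyond being careful with the boundary iterations $m=i$ and $m=j$; both are handled by reading off the corresponding sub-case of Lemma~\ref{lem:sequence:all}.
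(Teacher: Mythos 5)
Your proposal is correct and uses the same key ingredient as the paper's proof, namely the case analysis of Lemma~\ref{lem:sequence:all}, to show that no-progress iterations can only be middle iterations of the sequence and that each middle iteration contains at least one corruption. Your accounting is slightly cleaner than the paper's: by associating each no-progress iteration directly with a corruption in that same iteration, you avoid the paper's separate treatment of the cases $k<j$ and $k=j$ (where the paper lower-bounds the \emph{total} corruption count over the whole prefix, including the corruption in iteration $i$, and compares it with a separate upper bound on the number of no-progress iterations). Both arguments are sound; yours is a minor simplification of the same approach.
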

\begin{proof}
First, assume that $i=j$, so $k=j$. In the case where $t=0$, it follows from Lemma~\ref{lem:sequence:all} that both parties make progress during the single iteration of $s$. If $t>0$, the lemma trivially holds.

Otherwise, $i<j$. As previously established in Lemma \ref{lem:sequence:all}, there is progress during both iterations $i$ and $j$. However, middle iterations may either have progress or not.
Furthermore, it follows from Lemma~\ref{lem:sequence:all}
that each middle iteration must contain at least one corruption. This is also true of iteration $i$, although iteration $j$ does not require this.

We proceed by dividing into cases.
If $k < j$, then each iteration during the interval $[i, k]$ has at least one corruption, and $t\geq k-i+1$. Since there are at most $k-i$ iterations with no-progress during $[i,k]$, the claim holds. 
Otherwise, $k=j$. Consequently, there is at least one corruption in each iteration during $[i,k=j]$, with the exception of $j$. Hence, $t \geq j-i$.
It can be demonstrated that there are at most $j-i-1$ iterations without progress during the interval $[i,k=j]$. This establishes the result for this case.
\end{proof}

The above lemma about a single sequence can now be extended to the entire execution, until the iteration where Alice quits. Note that Alice might terminate before completing the last sequence of the execution, but the lemma holds for any prefix of a sequence and thus also for the last (partial) sequence. In this case, similar to Lemma~\ref{lem:redundant}, we continue the execution of $\Pi$ by one iteration, for analysis purpose.
\begin{lemma}
\label{lem:no-progress:limit}
    Assume the noise pattern satisfy $|E|=T$ in some execution, and denote $j$ as the last iteration in the execution where Alice is active.
    Then, at most $T$ iterations in~$[1, j]$ don't have progress.
\end{lemma}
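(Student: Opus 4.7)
The proof will be a straightforward extension of Lemma~\ref{lem:no-progress:sequence} from a single sequence to the whole prefix $[1,j]$ of the execution. The plan is to partition $[1,j]$ into consecutive sequences (possibly with a final partial sequence), bound the number of no-progress iterations in each piece using Lemma~\ref{lem:no-progress:sequence}, and then sum.

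First I would use Observation~\ref{obs:sequence-cover} to argue that every iteration in $[1,j]$ is covered by a sequence. More precisely, starting from iteration $1$ (at which $r_a=r_b \pmod 2$), consecutive sequences $s^{(1)}=[1,j_1], s^{(2)}=[j_1+1,j_2], \ldots$ tile the execution. Since Alice is active in iteration $j$, we partition $[1,j]$ into finitely many complete sequences $s^{(1)},\ldots,s^{(m)}$ (entirely contained in $[1,j]$) followed by an optional prefix $[j_m+1,j]$ of the next sequence $s^{(m+1)}=[j_m+1,j^*]$ with $j \le j^*$. Denoting by $T_\ell$ the number of corruptions that fall in piece $\ell$, we have $\sum_\ell T_\ell \le |E| = T$ since the pieces are disjoint subsets of $[1,j]$.

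Next I would invoke Lemma~\ref{lem:no-progress:sequence} on each piece. For each complete sequence $s^{(\ell)}$, take $k=j_\ell$ (so the prefix coincides with the full sequence) and conclude that the number of no-progress iterations within $s^{(\ell)}$ is at most $T_\ell$. For the final partial sequence $[j_m+1,j]$, apply Lemma~\ref{lem:no-progress:sequence} with the prefix $[j_m+1,j]$ of $s^{(m+1)}$; this still gives the bound of $T_{m+1}$ no-progress iterations inside the prefix. Summing across all pieces yields at most $\sum_\ell T_\ell \le T$ no-progress iterations in $[1,j]$, which is the desired conclusion.

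The only subtle point, which I expect to be the main (small) obstacle, is justifying that the last partial sequence is handled correctly: one must observe that Lemma~\ref{lem:no-progress:sequence} is stated precisely for arbitrary prefixes $[i,k]$ of a sequence $[i,j]$, so no separate ``add a dummy iteration'' argument (as used in Lemma~\ref{lem:redundant}) is needed here. Once that is noted, the proof is just a disjoint-partition accounting argument.
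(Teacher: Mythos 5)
Your decomposition into sequences plus a prefix of a final sequence is the same strategy the paper uses, and the per-sequence accounting via Lemma~\ref{lem:no-progress:sequence} is right. However, your closing claim that ``no separate `add a dummy iteration' argument is needed here'' is where your argument has a genuine gap, and it is precisely the step the paper does take.

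The issue is that Lemma~\ref{lem:no-progress:sequence} is stated for a prefix $[i,k]$ of an object that is actually a sequence $s=[i,j]$, i.e.\ there exists an iteration $j \ge i$ at whose end $r_a = r_b \pmod 2$. You invoke the lemma on the ``final partial sequence'' $[j_m+1,j]$ by positing a sequence $s^{(m+1)}=[j_m+1,j^*]$ with $j \le j^*$, but that $j^*$ need not exist: Alice quits at iteration $j$, and if $r_a \neq r_b \pmod 2$ at the end of iteration $j$ (which is exactly the case when $j_m < j$), the execution simply stops mid-sequence, and there is no later iteration at which the parity re-synchronizes. Since the hypothesis of Lemma~\ref{lem:no-progress:sequence} fails, the lemma cannot be applied as stated. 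The paper resolves this by appending one noise-free iteration $j+1$ to the execution (for analysis only) and arguing, exactly as in Lemma~\ref{lem:redundant}, that this single extra iteration forces $r_a = r_b \pmod 2$ at its end, so $[j_m+1,j+1]$ is a bona fide sequence; only then is $[j_m+1,j]$ a legitimate prefix on which Lemma~\ref{lem:no-progress:sequence} applies. Your argument would go through verbatim once you include that step, but omitting it leaves the application of the lemma unjustified.
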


\begin{proof}
Let $j$ be the last iteration in~$\Pi$ where Alice is active, and let $j'\le j$ be the last iteration which is the end of a sequence. It follows from Observation \ref{obs:sequence-cover} that each iteration during the interval $[1,j']$ is included within a sequence. In particular, this includes the iterations with no progress. Let $s=[i, i']$ be a sequence ($i' \leq j'$) with $t_i$ corruptions. By Lemma~\ref{lem:no-progress:sequence}, at most $t_i$ iterations during $s$ have no progress. Let $t \le T$ be the number of corruptions during $[1, j']$; then, there are at most $t$ iterations with no progress during $[1, j']$.

Now, assume $j' < j$ (if $j'=j$ the proof is complete). Since $j$ is not the end of a sequence, $r_a \neq r_b \mod 2$ holds at the end of iteration~$j$. We aim to apply Lemma~\ref{lem:no-progress:sequence} to $[j', j]$. To do so, we continue the execution of $\Pi$ by an additional iteration, without any noise,  similar to the proof of Lemma~\ref{lem:redundant}. As shown there, $r_a = r_b \mod 2$ at the end of $j+1$, so $[j', j+1]$ is a sequence. We apply Lemma~\ref{lem:no-progress:sequence} to the prefix $[j', j]$, and conclude that during $[1, j]$ there are at most $T$ iterations with no progress.
\end{proof}

Next we analyze the maximal number of iterations \emph{with progress} throughout the entire execution.
\begin{lemma}
\label{lem:progress:limit}
    Assume $|\pi'|=N'$. Let~$j$ be the last iteration (in~$\Pi$) where Alice is active, and let~$f$ be the number of bit-flips during~$[1, j]$.
    Then, there are at most $N'+2f$ iterations with progress during $[1,j]$.
\end{lemma}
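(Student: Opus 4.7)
The plan is to classify iterations with progress in $[1,j]$ into three classes: $I_A$ counts iterations where only Alice progresses, $I_B$ counts iterations where only Bob progresses, and $S$ counts iterations where both parties progress. Each iteration contributing to $I_A$ or $S$ appends two symbols to $\T_a$, so the fact that Alice terminates with $|\T_a|=N'$ immediately yields $I_A + S = N'/2$. Consequently, the total number of progress iterations equals $N'/2 + I_B$, and the claim reduces to showing $I_B \leq N'/2 + 2f$.

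To bound $I_B$, I would use the segment decomposition of $[1,j]$ obtained by combining the frame decomposition of Theorem~\ref{thm:frame} with the trailing-portion analysis of Lemma~\ref{lem:redundant}. By Observation~\ref{obs:segment:types}, a Bob-insertion iteration can only occur inside a segment of type~(3) or type~(4), as an element of the set~$P$ attached to that segment. Examining the matching executions of~$\pi'$ constructed in the proofs of Lemmas~\ref{lem:reduction:not-last} and~\ref{lem:reduction:last}, a type-(3) segment produces at least $|P|-1$ edit-corruptions (one out-of-sync event between each pair of consecutive Bob sends), and a type-(4) segment produces at least $|P|$ edit-corruptions (one extra out-of-sync at~$k'$). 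Writing $c^{\text{seg}}$ for the edit-corruptions of a segment's matching execution, summing over all segments yields
\begin{equation*}
I_B \;=\; \sum_{\text{seg}} |P|^{\text{seg}} \;\leq\; \sum_{\text{seg}} c^{\text{seg}} \;+\; \tau_3 ,
\end{equation*}
where $\tau_3$ denotes the number of type-(3) segments.

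By Lemma~\ref{lem:full-execution}, the matching $\pi'$ execution has in total at most $2f$ edit-corruptions, so $\sum_{\text{seg}} c^{\text{seg}} \leq 2f$. Each type-(3) segment contains exactly one Alice-insertion iteration (at its final iteration~$k'$), and distinct segments are disjoint, whence $\tau_3 \leq I_A \leq N'/2$. Combining, $I_B \leq 2f + N'/2$, and therefore the total number of progress iterations in $[1,j]$ is at most $N'/2 + I_B \leq N' + 2f$, as desired.

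The main subtlety is that Lemma~\ref{lem:redundant} introduces a virtual iteration $j+1$ (used solely for analysis) in order to close an open trailing sequence. One must verify that this virtual iteration does not inflate the count of true progress iterations in $[1,j]$, which follows from the structure described in that proof: iteration $j+1$ lies in a type-(1) segment, contributing only an Alice-insertion at $j+1$ and hence $|P|=0$ to the $I_B$ sum. With this accounting, the per-segment inequality aggregates cleanly across all frames and the trailing portion, and the bound $\sum_{\text{seg}} c^{\text{seg}} \leq 2f$ is inherited directly from Lemma~\ref{lem:full-execution}.
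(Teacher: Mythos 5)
Your proof is correct, but it follows a genuinely different path from the paper's. You both start by pinning down exactly $N'/2$ iterations in which Alice progresses (via $r_a$ going from $0$ to $N'/2$), but the paper then bounds Bob's insertions with an elementary, \emph{sequence-level} classification: Bob can insert only at the first or last iteration of a multi-iteration sequence (by Lemma~\ref{lem:sequence:all}), and sequences are split according to which party inserts at the \emph{last} iteration --- if Alice, there is one Bob-insertion charged against the $\leq N'/2$ Alice-progress iterations; if Bob, Lemma~\ref{lem:sequence:all}(2)(c) forces at least one flip at that last iteration, so those Bob-insertions are charged against $2f$. You instead route through the \emph{segment/frame} machinery and the matching-execution reduction: each Bob-insertion sits in the $P$-set of a type-(3) or type-(4) segment, the explicit constructions in Lemmas~\ref{lem:reduction:not-last} and~\ref{lem:reduction:last} give $|P| \le c+1$ (type 3) and $|P| \le c$ (type 4), you sum $c$ over segments and bound it by $2f$ via (the proof of) Lemma~\ref{lem:full-execution}, and you absorb the $+1$ per type-(3) segment into the $\le N'/2$ budget of Alice-insertion iterations. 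Your approach is heavier --- it imports the entire reduction-to-$\pi'$ apparatus into what the paper treats as a self-contained combinatorial bound --- but it works, and it illustrates that the communication bound falls out almost mechanically once the $|P|$-versus-$c$ accounting is in place. Two small points worth making explicit if you wanted to flesh this out: (i) $\tau_3 \le I_A$ because each type-(3) segment contributes a distinct Alice-insertion iteration (its final iteration $k'$) and segments within and across frames are disjoint; and (ii) the inequality $\sum_{\text{seg}} c^{\text{seg}} \le 2f$ is not literally the statement of Lemma~\ref{lem:full-execution} but is what its proof actually establishes for the concatenated matching execution, so you are relying on the construction rather than just the existential conclusion.
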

\begin{proof}
    By Line~\ref{line:alice:while} of Algorithm~\ref{alg:schemeALICE}, Alice leaves when $r_a = N'/2$. Note that the value of $r_a$ cannot decrease during any iteration. Furthermore, Alice makes progress in some iteration $k$ if and only if $r_a$ is increased by 1 during $k$. Thus, there are at most $N'/2$ iterations in which Alice makes progress. 

    We proceed to count the iterations which end with an insertion of Bob. As demonstrated in Lemma \ref{lem:redundant}, if $j$ is not the last iteration of a sequence, we may continue the execution of $\Pi$ by an additional iteration $j+1$ without any noise, such that $j+1$ is the last iteration of a sequence. By Observation \ref{obs:sequence-cover}, all the iterations $[1,j]$ belong to a sequence. Then, it follows from Lemma~\ref{lem:sequence:all}(2)(a),(c) that Bob may insert only at the first or last iteration of a sequence with multiple iterations. Since during each of these sequences there are exactly two insertions, we may divide these sequences into two types: (1) where Alice inserts at the last iteration (so Bob inserts at the first one), and (2) where Bob inserts at the last iteration. 
    
    Sequence of type (1) contains one insertion of Bob, and the number of such sequences is bounded by the number of iterations where Alice makes progress, i.e., $N'/2$. Sequence of type (2) contains one or two insertions of Bob, but by Lemma~\ref{lem:sequence:all}(2)(c), at least one flip occurs during the last iteration of the sequence. Thus, the number of insertions made by Bob during sequences of type (2) is bounded by~$2f$. Therefore, there are at most $N'/2+2f$ insertions of Bob.

    To conclude, there are at most $N'/2+N'/2+2f = N'+2f$ iterations with progress during~$[1, j]$.
\end{proof}

We can finally complete the proof of the complexity and efficiency part of Theorem~\ref{thm:main}.

\begin{proof}[Proof of Theorem~\ref{thm:main} (efficiency and complexity only)]
Note that $\pi'$ is a randomized efficient protocol due to Theorem~\ref{thm:HSV18}; it is easy to verify that $\Pi$ is also efficient. 
Further, the length of $\pi'$ is $N'\in O(N)$ symbols, such that each symbol is of constant size. 

Consider an execution of Algorithms~\ref{alg:schemeALICE} and~\ref{alg:schemeBOB} assuming some noise pattern~$E$ with~$|E|=T$.
Split the iterations of the execution into ones with no-progress and ones with progress. Lemma~\ref{lem:no-progress:limit} bounds the former by~$T$, and Lemma~\ref{lem:progress:limit} bounds the latter by~$O(N+T)$, totaling in $O(N+T)$ iterations throughout the execution of~$\Pi$. By Definition~\ref{def:iteration} and the fact that $\pi'$ has a constant-size alphabet, each iteration of~$\Pi$ consists of $O(1)$ transmission rounds. Thus, $|\Pi| \in O(N+T)$.
\end{proof}

\section{A UF coding scheme based on the concatenated UPEF scheme}
\label{apdx:uf}
In this section we prove the following Theorem.
\begin{theorem}\label{thm:main:concat:realModel}
Given any two-party binary interactive protocol $\pi$ of length $N$, 
there exists an efficient randomized protocol $\Pi$ of length 
$O((N+T)\log(N+T))$
that simulates $\pi$ 
with probability~$1-2^{-\Omega(N)}$
over a binary channel in the presence of an arbitrary and a priori unknown number~$T$ of corruptions. 
The noise is assumed to be independent of the parties' randomness.
\end{theorem}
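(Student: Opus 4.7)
The plan is to derive $\Pi$ from the UPEF-optimal scheme $\Pi'$ of Theorem~\ref{thm:main} by replacing each individual bit transmission with an independently randomized AMD code~\cite{Cra08}. Concretely, let $\Pi'$ be the UPEF scheme obtained by applying Theorem~\ref{thm:main} to $\pi$; it uses flip-probabilities $p_i=\min\{CN/i^2,1/2\}$ and has length $O(N+T')$ when executed against an adversary inducing $T'$ UPEF corruptions. I would encode the $i$-th bit of $\Pi'$ as a fresh random AMD codeword of length $\ell_i = \Theta(\log(1/p_i))$, chosen so that for any adversarial modification of the codeword, the probability (over the sender's private randomness) that the modification decodes to a valid codeword is at most $p_i$; otherwise the decoder declares the block invalid. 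The receiver treats a valid decoding as the underlying bit and an invalid decoding as an erasure. Termination is handled by arranging the AMD alphabet so that the all-zeros block (i.e.\ the default UF silence) decodes to the distinguished UPEF silence symbol~$\square$.

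The correctness argument is a simulation. The claim I would prove is that executing $\Pi$ in the UF model against any oblivious adversary that flips at most $T$ bits induces, block-by-block, a valid UPEF execution of $\Pi'$ with at most $T$ corruption events and with flip probabilities~$\{p_i\}$. The two cases are: (a) if the $i$-th block is untouched, its bit is delivered correctly, matching an uncorrupted UPEF round; (b) if the $i$-th block is touched (adversarially flipping one or more of its $\ell_i$ bits), the AMD security guarantee yields a valid-decoding probability at most $p_i$ and otherwise an erasure, exactly matching UPEF corruption at round~$i$. Because every UF bit-flip affects at most one block, $T' \le T$. Independence of the AMD randomness across blocks, combined with the fact that the noise pattern is chosen without knowledge of that randomness, lets me apply Theorem~\ref{thm:main} as a black box, giving overall success probability $1 - 2^{-\Omega(N)}$.

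For the length bound, I would compute
\[
|\Pi| \;=\; \sum_{i=1}^{|\Pi'|}\ell_i \;=\; O\!\left(\sum_{i=1}^{\lceil\sqrt{2CN}\rceil} 1 \;+\; \sum_{i=\lceil\sqrt{2CN}\rceil+1}^{O(N+T)} \log\!\tfrac{i^2}{CN}\right) \;\subseteq\; O\bigl((N+T)\log(N+T)\bigr),
\]
using $\log(1/p_i)=O(1)$ in the first regime and $\log(1/p_i)=O(\log i)$ in the second. Efficiency is inherited from Theorem~\ref{thm:main} together with the fact that AMD encoding and decoding run in polynomial time.

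The main obstacle I expect is making the reduction from UF to UPEF tight and clean. The probability guarantee supplied by AMD is a guarantee \emph{per block}, over the private randomness used to encode that block, and is valid against any oblivious strategy that can pick which codeword bits to flip and how (as a function of public information, but not of the private randomness). I would need to verify that this per-block guarantee composes independently across the $O(N+T)$ blocks---hence the insistence on fresh, independent AMD randomness---and that the resulting noise on $\Pi'$ is exactly a UPEF noise pattern in the sense of Section~\ref{sec:upef-model:prelim}, namely oblivious in the choice of corrupted rounds but probabilistic in the corrupting effect. Once this reduction is established, both correctness and complexity follow directly from Theorem~\ref{thm:main} and the length calculation above.
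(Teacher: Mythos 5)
Your proposal follows essentially the same strategy as the paper: take the UPEF scheme of Theorem~\ref{thm:main}, wrap each of its bits in an independent AMD codeword of length $\Theta(\log(1/p_i))$, argue that UF corruptions become (probabilistically) erasures or flips on the inner scheme, and pay a logarithmic factor in length. The length calculation, split at $i\approx\sqrt{2CN}$, also matches the paper's.

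Two points need tightening. First, the induced per-block noise is not literally a UPEF corruption: on a touched block, Corollary~\ref{cor:AMD} guarantees a wrong decoding with probability \emph{at most} $p_i$, and otherwise the receiver sees $\perp$ \emph{or the correct bit}; it is not ``flip w.p.\ exactly $p_i$, else erase.'' So Theorem~\ref{thm:main} cannot be invoked strictly as a black box on the induced execution. The paper's Lemma~\ref{lem:encoding-reduction} instead reuses the inner lemmas directly---bounding the number of incorrectly decoded blocks via the technique of Lemma~\ref{lem:flips-prob} and then invoking Lemma~\ref{lem:success-by-hsv}---which is sound because correct deliveries and fewer flips only help; you should make this white-box step explicit rather than claim an exact reduction.

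Second, your termination mechanism is a genuine gap. Declaring ``the all-zeros block decodes to $\square$'' fails in the UF model in both directions: after Alice quits, the adversary can flip the channel's default zeros so Bob never decodes a $\square$; and during Alice's live transmissions, an oblivious adversary with unbounded budget gets arbitrarily many independent attempts to hit a spurious all-zeros block. For a single block of length $3k_i$ with roughly $3k_i/2$ uniformly random bits, the per-attempt success probability is about $p_i^{3/2}$, which is a constant for small $i$; a union bound over the $\Theta(\sqrt{N})$ blocks with $p_i=1/2$, let alone over an unbounded tail, does not give $2^{-\Omega(N)}$. The paper (Appendix~\ref{sec:termination}) instead has Bob watch for an \emph{adaptively lengthening} all-zeros run of $t_i=\lceil N+4\log i\rceil$ received bits spanning many blocks, so each attempt has probability $2^{-\Omega(N)}\cdot i^{-2}$ and the union bound over all $i$ converges to $2^{-\Omega(N)}$ for arbitrary $T$. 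Some device of at least this strength is needed; a one-block silence symbol does not suffice.
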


This objective is achieved by first applying Algorithms
\ref{alg:schemeALICE} and~\ref{alg:schemeBOB} on~$\pi$, and then compiling the result protocol $\Pi$ into a new protocol, $\Pi'$ (Definition~\ref{def:coded-uf}), by coding each transmission of~$\Pi$ via an Algebraic Manipulation Detection (AMD) code~\cite{Cra08}. Due to this encoding, any corruption (bit flips) that affects a given codeword, is either detected by the AMD code (and the outcome can be thus considered as an erasure), or decoded incorrectly (which results as bit flips to the underlying~$\Pi$). 
However, the latter happens with a small probability that we can determine by setting the parameters of the AMD code. 
We use this similarity to argue that the encoded UF protocol~$\Pi'$ succeeds, since it effectively simulates $\Pi$ over the UPEF model. 

We start with the following theorem, which states the properties of the AMD code we are using.
\begin{theorem}[\cite{Cra08}]
\label{thm:AMD}
Fix some probability~$p$ and denote $k = \lceil{\log{p^{-1}}}\rceil + 1$. There exists an efficient probabilistic encoding map $\mathcal{E}: \{0, 1\}^k \rightarrow \{0, 1\}^{3k}$, together with an efficient deterministic decoding function $\mathcal{D}: \{0, 1\}^{3k} \rightarrow \{0, 1\}^k \cup \{\perp\}$, such that for any $m \in \{0, 1\}^k$, $\mathcal{D}(\mathcal{E}(m)) = m$. In addition, for any $m \in \{0, 1\}^k$ and $\Delta \in \{0, 1\}^{3k}$, $\Pr\big(\mathcal{D}(\mathcal{E}(m) + \Delta) \notin \{m, \perp\}\big) \leq p$.
\end{theorem}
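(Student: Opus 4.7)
The plan is to exhibit an explicit AMD code construction based on arithmetic in the finite field $\mathbb{F}_{2^k}$, following the classical template in which a short quadratic tag in a uniformly random nonce is appended to the message. The proof then reduces to analyzing the decoder's acceptance equation under an arbitrary additive error; this collapses (after some algebraic simplification) to a linear condition on the nonce, which is satisfied only with probability $1/q$ for $q = 2^k$.

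Concretely, identify $\{0,1\}^k$ with $\mathbb{F}_q$, $q = 2^k$, via a fixed polynomial-basis representation so that field addition and multiplication are computable in $\mathrm{poly}(k)$ time. For $m \in \mathbb{F}_q$, set
\begin{equation}
\mathcal{E}(m) \;=\; (m,\, r,\, mr + r^2),
\end{equation}
where $r \in \mathbb{F}_q$ is drawn uniformly at random. Given a word $(m', r', s') \in \mathbb{F}_q^3$, the decoder $\mathcal{D}$ checks whether $s' = m'r' + r'^2$; if so it outputs $m'$, otherwise it outputs $\perp$. The total codeword length is $3k$ bits, and both $\mathcal{E}$ and $\mathcal{D}$ are efficient. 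Correctness, $\mathcal{D}(\mathcal{E}(m)) = m$, is immediate.

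For the detection property, fix $m \in \mathbb{F}_q$ and $\Delta = (\Delta_m, \Delta_r, \Delta_s) \in \mathbb{F}_q^3$. The received word $(m+\Delta_m,\, r+\Delta_r,\, mr+r^2+\Delta_s)$ is accepted exactly when
\begin{equation}
mr + r^2 + \Delta_s \;=\; (m+\Delta_m)(r+\Delta_r) + (r+\Delta_r)^2.
\end{equation}
Expanding in characteristic two (so that $2r\Delta_r = 0$) and canceling the common $mr + r^2$ reduces this to $r\,\Delta_m \;=\; \Delta_s + m\Delta_r + \Delta_m\Delta_r + \Delta_r^2$. I then split on $\Delta_m$: if $\Delta_m = 0$, acceptance yields the output $m + \Delta_m = m$, so no failure event $\mathcal{D}(\mathcal{E}(m)+\Delta)\notin\{m,\perp\}$ can occur; if $\Delta_m \neq 0$, the equation pins $r$ uniquely, so over the uniformly random $r$ the check passes with probability exactly $1/q$. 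Since $k = \lceil\log p^{-1}\rceil + 1$ yields $q = 2^k \geq 2p^{-1}$, the failure probability is at most $p$, as required.

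The main point requiring care is the algebraic reduction of the acceptance equation to a \emph{linear} equation in the nonce $r$, together with the case split that disposes of ``harmless'' errors with $\Delta_m = 0$ without needing any randomness. Everything else (defining the code, verifying completeness, implementing field arithmetic efficiently, and matching the parameter $k$ to $p$) is routine once the algebraic structure is in place.
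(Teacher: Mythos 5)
Your proof is correct. Note that the paper itself does not prove Theorem~\ref{thm:AMD} at all --- it imports the statement from~\cite{Cra08} as a black box, and only later (in the termination discussion) alludes to the construction having the systematic form $\mathcal{E}'(s)=(s,x,f(s,x))$ for a polynomial tag $f$. Your argument supplies a self-contained instance of exactly that template, but with a slightly different tag than the standard one: the generic construction in~\cite{Cra08} for $d$ message blocks uses $f(x,s)=x^{d+2}+\sum_i s_i x^i$ (degree $d+2$ chosen coprime to the characteristic), which for one block gives $(m,x,x^3+mx)$ and failure probability $(d+1)/q=2/q$, whereas you use the quadratic tag $mr+r^2$ and observe that, although $r\mapsto r^2$ is Frobenius-linear in characteristic two (so $(r+\Delta_r)^2-r^2$ is independent of $r$, which is why the general family avoids exponents divisible by the characteristic), the single-block acceptance equation still retains the term $\Delta_m r$; your case split on $\Delta_m$ then gives failure probability exactly $1/q$, and $\Delta_m=0$ is harmless because the decoded message is unchanged. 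Both routes meet the stated bound since $k=\lceil\log p^{-1}\rceil+1$ gives $2/q\le p$, so the ``$+1$'' slack covers either constant. The only points worth making fully explicit in your write-up are (i) that under the polynomial-basis identification, bitwise XOR on $\{0,1\}^{3k}$ coincides with addition in $\mathbb{F}_q^3$, so the adversary's $\Delta$ really is an additive field offset, and (ii) a word on obtaining an irreducible degree-$k$ polynomial over $\mathbb{F}_2$ in $\mathrm{poly}(k)$ time for the efficiency claim; both are standard.
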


Since we transmit bits in $\Pi$, we state the following corollary for coding single bits rather than bit strings.

\begin{corollary}\label{cor:AMD}
    Fix some probability~$p$ and denote $k = \lceil{\log{p^{-1}}}\rceil + 1$. 
    There exists an efficient probabilistic encoding map $\mathcal{E}: \{0, 1\} \rightarrow \{0, 1\}^{3k}$ and an efficient deterministic decoding function $\mathcal{D}: \{0, 1\}^{3k} \rightarrow \{0, 1, \perp\}$, such that $\mathcal{D}(\mathcal{E}(b)) = b$ for any $b \in \{0, 1\}$. In addition, for any $b \in \{0, 1\}$ and $\Delta \in \{0, 1\}^{3k}$, $\Pr\big(\mathcal{D}(\mathcal{E}(b) + \Delta) \notin \{b, \perp\}\big) \leq p$.
\end{corollary}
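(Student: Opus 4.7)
The plan is to obtain the single-bit corollary as a direct reduction to Theorem~\ref{thm:AMD}, by embedding the bit as a $k$-bit message and composing the decoder with a simple consistency check. Specifically, I would fix the embedding $\phi : \{0,1\} \to \{0,1\}^k$ defined by $\phi(b) = 0^{k-1} \circ b$, and let $\mathcal{E}_{\text{AMD}}, \mathcal{D}_{\text{AMD}}$ denote the encoder/decoder from Theorem~\ref{thm:AMD} with the same parameter $p$.

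I would then define the new encoder as $\mathcal{E}(b) \triangleq \mathcal{E}_{\text{AMD}}(\phi(b))$ and the new decoder as follows: given $w \in \{0,1\}^{3k}$, compute $m' = \mathcal{D}_{\text{AMD}}(w)$; if $m' = \bot$, output $\bot$; if $m' = 0^{k-1} \circ b'$ for some $b' \in \{0,1\}$, output $b'$; otherwise, output $\bot$. Efficiency of $\mathcal{E},\mathcal{D}$ follows immediately from the efficiency asserted in Theorem~\ref{thm:AMD}, since both the embedding and the consistency check are linear-time.

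For noiseless correctness, $\mathcal{D}(\mathcal{E}(b)) = \mathcal{D}_{\text{AMD}}(\mathcal{E}_{\text{AMD}}(\phi(b)))$, which equals $\phi(b) = 0^{k-1} \circ b$ by Theorem~\ref{thm:AMD}, so the decoder passes the consistency test and outputs~$b$. For the resilience bound, fix any $b$ and $\Delta$, and let $m = \phi(b)$. The bad event ``$\mathcal{D}(\mathcal{E}(b)+\Delta) \notin \{b,\bot\}$'' can occur only when the underlying AMD decoder returns $\phi(1-b) = 0^{k-1}\circ(1-b)$, because any other outcome is either $m$ itself (which yields $b$), or $\bot$, or a string outside $\phi(\{0,1\})$ (which the consistency check maps to $\bot$). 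In particular, the bad event is contained in the event $\mathcal{D}_{\text{AMD}}(\mathcal{E}_{\text{AMD}}(m)+\Delta) \notin \{m,\bot\}$, whose probability is at most $p$ by Theorem~\ref{thm:AMD}.

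There is essentially no obstacle here; the only thing worth being careful about is not to confuse the parameter $k$ (which is already the same in both statements) with the message length, and to verify that the consistency check truly disposes of the unwanted decodings. The rest is bookkeeping.
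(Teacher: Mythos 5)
Your argument is a correct proof of Corollary~\ref{cor:AMD} as stated, but it diverges from the paper's construction in a way that matters elsewhere in the paper. The paper pads the bit $b$ with a \emph{uniformly random} string $x'\in\{0,1\}^{k-1}$ (not the fixed string $0^{k-1}$) and decodes simply by reading off the $k$-th bit of $\mathcal{D}'(s)$ whenever $\mathcal{D}'(s)\ne\perp$, with no consistency check. Both routes verify the corollary: in yours, any decode not of the form $0^{k-1}\circ b'$ is discarded, while in the paper's, any decode $m'\ne m$ with $m'\ne\perp$ is already counted by the AMD bound, so projecting onto the $k$-th bit suffices. The substantive difference is downstream: in the termination analysis (Appendix~\ref{sec:termination}), the paper leans on the fact that for the \emph{random-prefix} encoding, every $k'$-bit prefix of the codeword $\mathcal{E}(b)$ contains at least $k'/2$ uniformly distributed bits (the AMD codeword has the form $(s,x,f(s,x))$, so the first $k$ bits are $s=x'\circ b$ and the random $x'$ is essential). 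This is what makes a long run of received zeros exponentially unlikely when Alice is still active, and hence makes Bob's termination rule safe against the oblivious adversary. Your fixed embedding $\phi(b)=0^{k-1}\circ b$ destroys exactly this property: the first $k-1$ bits of the codeword are deterministically zero, so a prefix of length $\le k-1$ contains no random bits, and the $2^{-t_i/2}$ bound in the termination lemma no longer holds. So your proposal is fine as a standalone proof of the corollary, but it could not be substituted for the paper's construction without breaking Section~\ref{sec:termination}; the randomized padding is deliberate, not wasteful.
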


\begin{proof}
    Given $b\in\{0,1\}$, 
    select uniformly random value $x\in \{0,1\}^{k-1}$, and
    apply the encoding $\mathcal{E}',\mathcal{D}'$ defined by Theorem~\ref{thm:AMD} on~$x\circ b$.
    Formally, set $\mathcal{E}(b) = \mathcal{E}'(x\circ b)$ and for any $s \in \{0, 1\}^{3k}$, 
    set $$\mathcal{D}(s)=\begin{cases} \mathcal{D}'(s)_k & \mathcal{D}'(s)\ne \perp \\ \perp &\text{otherwise}\end{cases}$$
    where $\mathcal{D}'(s)_k$ is the $k$-th bit of~$\mathcal{D}'(s)$.
    It is clear that for any $b \in \{0, 1\}$, $D(\mathcal{E}(b)) = b$ and for any $\Delta \in \{0, 1\}^{3k}$, $\Pr\big(\mathcal{D}(\mathcal{E}(b) + \Delta) \notin \{b, \perp\}\big) \leq p$.
\end{proof}

We proceed to define an algorithm for encoding $\Pi$ into $\Pi'$, using the AMD code.

\begin{definition}[Coded UF]
\label{def:coded-uf}
Let $\Pi$ be a protocol defined over the channel with input alphabet $\{0, 1\}$ and output alphabet~$\{0, 1, \perp\}$, and fix some series of probabilities $\{p_i\}_{i=1}^\infty$. We define $\Pi'$ to be the \emph{UF-coded} protocol (of~$\Pi$),  over the channel with alphabet~$\{0, 1\}$.

For each round $i$ in $\Pi$, let $\mathcal{E}_i:\{0,1\}\to\{0,1\}^{3k_i}$ and $\mathcal{D}_i:\{0,1\}^{3k_i}\to \{0,1,\perp\}$ be the respective encoding and decoding functions given by Corollary~\ref{cor:AMD} when applied to probability~$p_i$.
The protocol~$\Pi'$ follows the computation of $\Pi$ in the following sense. For round $i$ of $\Pi$, let $b_i$ be the message sent in that round. Then, $\Pi'$ simulates that round by $3k_i$ consecutive rounds, where the sender communicates the string $\mathcal{E}_i(b_i)$ over the channel bit-by-bit. The receiver obtains the $3k_i$ bit string~$b'$, computes $\mathcal{D}_i(b')\in\{0,1,\perp\}$, and feeds the decoded symbol to~$\Pi$. That continues until $\Pi$ terminates and gives output.
\end{definition}

With the above definition, we can take any protocol $\Pi$ and encode it into a protocol $\Pi'$. 
We show that this encoding, when applied on $\Pi$ from Theorem~\ref{thm:main}, yields a protocol $\Pi'$ which is correct with high probability.

\begin{lemma}
\label{lem:encoding-reduction}
    Let $\pi$ be some protocol over alphabet $\{0, 1\}$ whose length is $N$, and denote $\Pi$ as defined in Theorem~\ref{thm:main} and $N'$ as defined in~$\Pi$. Further, let $\{p_i\}_{i=1}^\infty$ be some probabilities series, satisfying $\sum_{i=1}^{\infty}{p_i} < \frac{N'}{90 \times 1.1}$. Denote by $\Pi'$ the UF-encoding (by \Cref{def:coded-uf}) of $\Pi$ along with the series $\{p_i\}_{i=1}^\infty$, and let $E_{\text{UF}}$ be a noise pattern in the UF model with some $T > 0$. Then, the probability that the output of $\Pi'$ over $E_{\text{UF}}$ is the correct output of $\Pi$ is at least $1 - 2^{-\Omega(N)}$.
\end{lemma}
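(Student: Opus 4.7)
The plan is to couple the UF execution of $\Pi'$ under $E_{\text{UF}}$ with an induced execution of $\Pi$, and reduce the analysis to bounding the effective flip count observed by $\Pi$. For each round $i$ of $\Pi$, let $\Delta_i \in \{0,1\}^{3k_i}$ denote the restriction of $E_{\text{UF}}$ to the $3k_i$ UF-level transmissions that carry the AMD encoding of round $i$, and let $b_i$ be the bit transmitted in that round. When $\Delta_i = 0$, the symbol decoded by $\mathcal{D}_i$ is exactly $b_i$; otherwise, it equals $\mathcal{D}_i(\mathcal{E}_i(b_i) + \Delta_i)$, which by Corollary~\ref{cor:AMD} lies outside $\{b_i,\perp\}$ with probability at most $p_i$ over the fresh randomness of $\mathcal{E}_i$. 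Since Corollary~\ref{cor:AMD} applies to \emph{every} fixed pair $(b_i,\Delta_i)$, the indicator $F_i$ that round $i$ of $\Pi$ is decoded to a wrong non-erasure symbol satisfies $\Pr(F_i = 1 \mid \mathcal{H}_{i-1}) \leq p_i$, where $\mathcal{H}_{i-1}$ is the full history through the end of round $i-1$ of $\Pi$.

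Next, I would bound the total effective flip count $F = \sum_i F_i$ observed in the induced execution of $\Pi$. The conditional bound $\Pr(F_i = 1 \mid \mathcal{H}_{i-1}) \leq p_i$ permits a standard coupling that stochastically dominates $F$ by $\sum_i F'_i$ with $F'_i \sim \mathrm{Ber}(p_i)$ independent. Combining with the hypothesis $\sum_i p_i < N'/(90 \cdot 1.1)$ and Chernoff's inequality---essentially replaying the argument of Lemma~\ref{lem:flips-prob}---yields $\Pr(F \geq N'/90) \leq 2^{-\Omega(N)}$, recalling $N' \in \Theta(N)$.

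On the high-probability event $F < N'/90$, I would invoke Lemma~\ref{lem:full-execution}, which produces a matching execution of $\pi'$ with at most $2F < N'/45$ edit-corruptions; Theorem~\ref{thm:HSV18} then ensures that $\pi'$ (and hence $\Pi'$) outputs the correct value of $\pi$ with probability $1 - 2^{-\Theta(N)}$. A union bound combines these two high-probability events to complete the proof. The main obstacle is a subtle mismatch between our induced channel---where a UF-corrupted AMD codeword can still decode to the correct bit---and the strict UPEF model of Section~\ref{sec:UPEF-schemes-concat}, in which every corrupted round is required to be either flipped or erased. I would sidestep this by invoking Lemma~\ref{lem:full-execution} directly rather than Lemma~\ref{lem:success-by-hsv}, since the former depends only on the flip count in the execution of $\Pi$ and is agnostic to the distributional structure of erasures and correct deliveries.
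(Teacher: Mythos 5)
Your proposal is correct and takes essentially the same route as the paper. The paper likewise partitions the UF transmission stream into $3k_i$-bit AMD codewords, observes (via Corollary~\ref{cor:AMD}) that a corrupted codeword decodes to a wrong non-$\perp$ symbol with probability at most $p_i$ over the encoder's fresh randomness, compares the resulting execution of $\Pi$ to one over a $\{0,1\}\to\{0,1,\perp\}$ channel, and bounds the wrong-decoding count $f$ by replaying Lemma~\ref{lem:flips-prob} with $\eps=0.1$ before appealing (through ``a similar manner to Lemma~\ref{lem:success-by-hsv}'') to Lemma~\ref{lem:full-execution} and Theorem~\ref{thm:HSV18}. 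Two small points where you are a bit more explicit than the paper, and rightly so: (i) the $F_i$'s are not literally independent Bernoullis as in the strict UPEF setting, so the clean statement is $\Pr(F_i=1\mid\mathcal H_{i-1})\le p_i$ and then stochastic domination by an independent $\mathrm{Ber}(p_i)$ product, which is exactly the coupling needed to re-run the Chernoff bound of Lemma~\ref{lem:flips-prob}; and (ii) the induced channel may deliver a corrupted codeword \emph{correctly}, so the execution is not literally UPEF, but the final step only needs the flip-count bound that feeds Lemma~\ref{lem:full-execution}, which is agnostic to how the non-flipped rounds were delivered. Both refinements are implicit in the paper's phrasing ("applying the same analysis," "in a similar manner") but spelling them out, as you do, tightens the argument without changing its substance.
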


\begin{proof}
    Consider some execution of $\Pi'$ with noise pattern $E_{\text{UF}}$. Split the bits accepted on the channel during the execution into words, such that the length of the $i$-th word is $3k_i$. Note that each word represents an encoding of a bit, which may be corrupted by $E_{\text{UF}}$. 
    
    In the event that the encoding of the $i$-th bit is corrupted, the probability that the receiver incorrectly decodes the respective codeword is $q_i$. This probability is bounded by $p_i$, as demonstrated in Corollary \ref{cor:AMD}. We mark by $f$ the count of such cases. 
    
    During the execution of $\Pi'$, the parties use $\Pi$ for the computation of the messages they send. Thus, we may compare the execution of $\Pi'$ to an execution of $\Pi$ over a channel with input alphabet $\{0, 1\}$ and output alphabet $\{0, 1, \perp\}$. The $i$-th bit in that execution of $\Pi$ suffers a flip when the $i$-th bit is decoded incorrectly in the execution of $\Pi'$. Otherwise, the $i$-th bit is either communicated intact or erased, depending on the relevant decoding in $\Pi'$. In this context, $f$ represents the total number of flips. Therefore, it is possible to bound the number of flips that occur during the execution of $\Pi$ by applying the same analysis used in Lemma \ref{lem:flips-prob}.
    By taking 
    $\eps = 0.1$, it follows from $\sum_{i=1}^{\infty}{q_i} \le \sum_{i=1}^{\infty}{p_i} < \frac{N'}{90 \times 1.1}$ that $\Pr(f > N'/90) < 2^{-\Omega(N)}$ (recall that $N' \in \Theta(N)$). 
    As the parties' outputs in the execution of $\Pi'$ are defined as the output of the execution of $\Pi$, it can be shown in a similar manner to Lemma~\ref{lem:success-by-hsv} that the probability that $\Pi'$ outputs the correct output of $\Pi$ is $1 - 2^{-\Omega(N)}$.
\end{proof}

We are ready to complete the proof of Theorem \ref{thm:main:concat:realModel}.
\begin{proof}[Proof of theorem~\ref{thm:main:concat:realModel}]
    Given a protocol $\pi$, apply Theorem~\ref{thm:main} and receive the protocol $\Pi$. Then, set $\{p_i\}_{i=1}^{\infty}$ as defined in Theorem~\ref{thm:main}, i.e., $p_i=\min\left\{\frac{CN}{i^2}, \frac{1}{2}\right\}$, when $C$ is a constant defined in the proof of Theorem \ref{thm:main}. Denote $\Pi'$ as the UF-encoding of $\Pi$, with probabilities $\{p_i\}_{i\in \mathbb{N}}$. We argue that $\Pi'$ is an efficient protocol, which simulates $\pi$ over UF channel with probability of at least $1-2^{-\Omega(N)}$. Assuming there are $T$ corruptions in some execution of $\Pi'$, we argue that $|\Pi'| \in O\left((N+T)\log{(N+T)}\right)$.

    Fix a noise pattern $E_{\text{UF}}$ over UF. As shown in the proof of Theorem~\ref{thm:main}, $\sum_{i=1}^{\infty}{p_i} < \frac{N'}{90 \times 1.1}$, where $N'$ is defined in Algorithms~\ref{alg:schemeALICE} and~\ref{alg:schemeBOB}. Then, by Lemma~\ref{lem:encoding-reduction}, the probability that the output of $\Pi'$ is the correct output of $\Pi$ is at least $1 - 2^{-\Omega(N)}$. Since the correct outputs of $\Pi$ and $\pi$ are identical, $\Pi'$ simulates $\pi$ with probability of at least $1-2^{-\Omega(N)}$ and the correctness part of the proof is complete.

    We proceed to the computation of $|\Pi'|$. By Theorem~\ref{thm:main}, $|\Pi| \in O(N+T)$. Due to the definition of $\Pi'$, its communication complexity is the total length of the encodings of the messages in $\Pi$. Thus,
    
    \allowdisplaybreaks[2]
    \begin{align*}
    |\Pi'| 	&= \sum_{i=1}^{|\Pi|}{3(\lceil{\log{p_i^{-1}}}\rceil+1)} 
               \le  6|\Pi| + 3\sum_{i=1}^{|\Pi|}{\log{p_i^{-1}}} \\
            &\le 6|\Pi| + 3\sum_{i=1}^{\lfloor{\sqrt{2CN'}}\rfloor}{1} + 3\sum_{i=\lceil{\sqrt{2CN'}}\rceil}^{|\Pi|}{\log\left(\frac{i^2}{CN'}\right)} \\
            &\leq 6|\Pi| + 3\sqrt{2CN'} + 3\sum_{i=\lceil{\sqrt{2CN'}}\rceil}^{|\Pi|}{\log\left(\frac{|\Pi|^2}{CN'}\right)} \\
            &\leq 6|\Pi| + 3\sqrt{2CN'} + 3\left(|\Pi| - \sqrt{2CN'}\right)\log\left(\frac{|\Pi|^2}{CN'}\right)
    \end{align*}

    Substituting $|\Pi| \in O(N+T)$ and $N' \in O(N)$ (due to Theorem~\ref{thm:HSV18}), 
    \begin{align*}
    |\Pi'| \in O\left(N+T+\sqrt{N} + (N+T-\sqrt{N})\log(N+T)\right) \subseteq O((N+T)\log(N+T))
    \end{align*}

    The efficiency of $\Pi'$ is trivial and follows from the efficiency of $\Pi$ and of the code from Corollary~\ref{cor:AMD}.
\end{proof}

\subsection{Termination}
\label{sec:termination}
The algorithms \ref{alg:schemeALICE} and~\ref{alg:schemeBOB} employ the following special method for termination that leverages the termination mechanism of the UPEF model. Alice terminates whenever she has completed $N'/2$ iterations of challenge-response with Bob. Subsequently, she has conducted a simulation of the entire transcript, and is confident that Bob has correctly simulated these $N'/2$ iterations, based on his correct response in the last iteration.
Bob, however, does not know whether or not Alice has received this last reply. Therefore, he cannot know when it is safe for him to quit. Recall that the UPEF model assumes that once Alice terminates, the channel transmits a special ``silence'' symbol, namely,~`$\square$'. As Bob receives this special symbol, he knows that Alice has quit, and terminates as well. 

In the UF model, this assumption is no longer tenable. Instead, we are guaranteed that once Alice quits, the channel continues to transmit a default symbol, say a~`$0$`. This symbol might still be corrupted by the channel. Termination in this setting is non-trivial and sometimes even impossible, see discussions in~\cite{DHMSY18,GI19}. However, since the noise is oblivious, we can devise methods for termination, similar to the one in~\cite{DHMSY18}. In a nutshell, Alice terminates as before, and Bob terminates whenever he sees ``enough'' zeros - a \emph{termination string}.

There are two caveats to this approach. One of them is the situation where Bob inserts to his transcript symbols received after Alice has terminated, and outputs a wrong output. However, this is not possible, since when Alice terminates Bob has simulated enough symbols to get the correct output; since Bob commits to an output during the matching execution of $\pi'$ (see \cite{HSV18}), the symbols received afterwards do not affect Bob's output.

The second hazard is that noise may corrupt some of Alice's transmissions, to make Bob quit before Alice quits. This could occur, for instance, if Alice's transmissions were fixed and the noise could transform a lengthy sequence of them into the 0 string. Nevertheless, we demonstrate that given the randomized transmissions of Alice and the oblivious nature of the noise, the probability of this event occurring decreases with the length of the simulated protocol. 

We note that if the length of the termination string depends only on~$N$ (but not on~$T$), then we can set $T$ to be sufficiently large so that the overall probability to force Bob to prematurely terminate, tends to one. Consequently, we set the length of the termination string in an adaptive way, namely to $N+4\log i$, given that the current simulated round in~$\Pi$ is~$i$. In such a manner, the length of the termination string increases as a function of the noise so far, which is some estimation of~$T$. Indeed, if there is no noise, the execution progresses fast and ends by round~$N'$, while if $T$ corruptions are experienced during the execution, it ends by round~$O(N+T)$, by Theorem~\ref{thm:main}.

Before we analyze this type of termination, let us portray the setting.
We assume the protocol~$\Pi$ and the series $\{p_i\}_{i\in \mathbb{N}}$ defined in Theorem \ref{thm:main}. 
Recall the values $\{k_i\}_i$, defined in Theorem \ref{thm:AMD} as $k_i = \lceil \log p_i^{-1}\rceil+1$. We set the protocol~$\Pi'$ to be the one described in Definition~\ref{def:coded-uf}, given the above $\Pi$ and $\{p_i\}_{i\in \mathbb{N}}$. 

Similar to the proof of Lemma \ref{lem:encoding-reduction}, 
we are going to analyze $\Pi'$ through rounds of $\Pi$, recalling that 
$\Pi'$ simulates $\Pi$ round-by-round. 
From now on, we use the term \emph{round} only in context of~$\Pi$, and the term \emph{bit} only for message (of single bit) communicated in~$\Pi'$.

\begin{definition}[Bob's Termination Procedure]
\label{def:bob-terminate}
    By definition \ref{def:coded-uf}, $\Pi'$ simulates $\Pi$ such that Alice encodes the $i$-th round into $3k_i$ bits, and sends to Bob. For each round $i$ in $\Pi$ where Alice sends a message, Bob observes the $t_i = \lceil{N+4\log{i}}\rceil$ bits he receives from Alice, starting at the encoding of round $i$. Bob terminates at the first time where these received bits are all zeros. 
\end{definition}

Before we prove the correctness of this termination procedure, we state the following beneficial property pertaining to the code of \Cref{cor:AMD}.

A possible construction for the AMD code in \cite{Cra08} (which is used in \Cref{cor:AMD}) can be seen as the randomized function $\mathcal{E}'(s)= (s,x,f(s,x))$, where $s\in\{0,1\}^k$ is the input, $x\in\{0,1\}^k$ is a uniform value selected by $\mathcal{E}$, and $f(\cdot, \cdot)$ is some polynomial of $s$ and $x$. Then, let $b\in\{0,1\}$ be some input and let $\mathcal{E}$ be the encoding map given by Corollary \ref{cor:AMD} for some $p$ and $k$ (note that $k \geq 2$ since $p<1$). Recall that $\mathcal{E}$ feeds $\mathcal{E'}$ with input $s = x' \circ b$, such that $x'$ is selected uniformly out of $\{0,1\}^{k-1}$. Consequently, for every prefix of $\mathcal{E}(b)$ whose length is $k' \le 3k$, at least $k'/2$ bits of the prefix are distributed uniformly.

\begin{lemma}
    Let $j$ be the round of $\Pi$ in which Alice terminates. Denote as $\Eterm$ the event where Bob terminates before Alice.
    Then, $\Pr(\Eterm) \in 2^{-\Omega(N)}$.
\end{lemma}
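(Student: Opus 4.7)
The plan is to exploit the fresh randomness in Alice's AMD encodings: for any fixed round $i<j$, the $t_i$ bits starting at the encoding of round~$i$ contain enough uniformly distributed bits (independent of the oblivious noise) that they are overwhelmingly unlikely to be received as the all-zeros string. A union bound over all candidate termination rounds $i$ will then give the claim.

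Fix a round $i<j$ at which Alice transmits. The window of $t_i$ consecutive Alice-to-Bob bits beginning at the encoding of round~$i$ spans a sequence of AMD codewords $\mathcal{E}_{l_1}(b_{l_1}),\mathcal{E}_{l_2}(b_{l_2}),\ldots$, possibly with a partial codeword at its right endpoint. By the prefix property highlighted just before the lemma, every codeword (and every prefix of a codeword) contributes at least half of its bits as uniformly distributed. Moreover, each codeword is produced using a fresh, independent random seed, drawn independently of every other codeword and of the oblivious noise pattern. Hence, among the $t_i$ bits in this window, at least $t_i/2$ are jointly uniform and independent of the noise. Since an oblivious flip of a uniform bit is still uniform, the received values at these $t_i/2$ positions are also jointly uniform, so the probability that they are all zero (a necessary condition for Bob to terminate at round~$i$) is at most $2^{-t_i/2}$.

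Finally, with $t_i=\lceil N+4\log i\rceil$ we have $2^{-t_i/2}\le 2^{-N/2}\,i^{-2}$, so a union bound over all $i\ge 1$ yields
\[
\Pr(\Eterm)\;\le\;\sum_{i=1}^{\infty}2^{-t_i/2}\;\le\;2^{-N/2}\sum_{i=1}^{\infty}i^{-2}\;=\;O(2^{-N/2})\;=\;2^{-\Omega(N)}.
\]
The step I expect to require the most care is tracking the joint independence of the uniformly distributed bits across codeword boundaries. This should reduce to observing that the randomness of each AMD encoding is freshly sampled, and that the obliviousness of the noise precludes any dependence between the noise pattern and this randomness; a partial codeword at the right end of the window is handled by the prefix property itself, which guarantees the half-uniform fraction for every prefix length. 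A second, minor point is that Bob only gets to check termination at rounds $i$ for which he has actually accumulated $t_i$ Alice-to-Bob bits, which can only decrease $\Pr(\Eterm)$, so the union bound above remains valid.
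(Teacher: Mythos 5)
Your proof is correct and follows essentially the same approach as the paper's: both identify the uniform-bit fraction in each AMD encoding (and prefix) via the prefix property stated just before the lemma, note that obliviousness of the noise preserves the uniformity of the received bits, bound the all-zeros probability for the window by $2^{-t_i/2}$, and close with the same union bound $\sum_i 2^{-t_i/2}\le 2^{-N/2}\sum_i i^{-2}=2^{-\Omega(N)}$. The only cosmetic difference is that you phrase the per-window bound as joint uniformity of $t_i/2$ collected bits across codeword boundaries, whereas the paper multiplies per-codeword bounds $2^{-3k_\rho/2}$; these are the same calculation.
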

\begin{proof}
    Bob terminates before round $j$ if and only if following two conditions are met. First, there exists some $i \le j$, such that Alice is the sender in round~$i$ of~$\Pi$. Second, all the $t_i$ bits received by Bob starting at the encoding of round~$i$ are zeros, and were sent by Alice (and not by the adversary). 
    Thus, we bound $\Pr(\Eterm)$ in the following manner. First, we compute the probability that Bob receives a 0-string at length $t_i$ starting at the encoding of round~$i$, given that Alice has not terminated yet. We then employ a union bound over all $i\le j$ to bound $\Pr(\Eterm)$.

    Let $i \le j$ be a round where Alice sends a message in~$\Pi$. 
    Let $\hat r_i$ be the $t_i$ bit-long string received by Bob starting at the beginning of the encoding that corresponds to round $i$ in~$\Pi$.
    Alice's transmissions, which result in the string $\hat r_i$,  consist of all the encodings of messages in~$\Pi$ sent by Alice between round $i$ and some round $i'$ (such that the encoding of $i'$ ends after sending $t_i$ bits). By Definition \ref{def:coded-uf}, the length of the encoded message transmitted in a round $\rho\in[i,i']$ is $3k_{\rho}$. 
    By the property we stated after \Cref{def:bob-terminate}, at least $3k_{\rho}/2$ bits of an encoding of length $3k_{\rho}$ distribute uniformly. 
    Since the noise is oblivious, the probability that instead of an encoding, Bob receives a 0-string of length $3k_{\rho}$ (including the case where there is no noise) is bounded by $2^{-3k_{\rho}/2}$.
    Note that $\hat r_i$ may end before in a middle bit of an encoding. However, by the same property, the probability of this suffix of $\hat r_i$, assuming its length is some $k'$, to be a 0-string is at most $2^{-k'/2}$.
    
    It follows then,
    that the probability that $\hat r_i$ is a 0-string is at most~$2^{-t_i/2} \le 2^{-(N/2+2\log(i))}$.
    Therefore,
    \[
    \Pr(\Eterm) \le \sum_{i=1}^{j}{2^{-(N/2+2\log(i))}} = 2^{-N/2}\sum_{i=1}^{j}{\frac{1}{i^2}} <  2^{-N/2+1} \in  2^{-\Omega(N)}\text{.}
    \]
\end{proof}

We proceed to find the communication complexity of our simulation, given this termination procedure.
Assume that Alice terminates at round~$j$ and Bob at round $j'>j$. Hence, there exists a round~$l$ in~$\Pi$, in which Alice is the sender and all the $t_l$ bits received by Bob starting from the encoding of~$l$ are zeros.
The worst case (in terms of communication complexity) is that starting from the encoding of round~$j+1$, the adversary corrupts one bit in each $t_i=\lceil{N+4\log{i}}\rceil$ consecutive bits sent to Bob (\emph{termination string} starting at~$i$).
Since the adversary's budget is limited to~$T$, Bob terminates after $T+1$ such termination strings.
Let us now bound their total length.

Recall that $p_{i} = \min\left\{\frac{CN'}{{i}^2}, \frac{1}{2}\right\}$. 
Since $j>N'$ and $C<1$, it follows that $p_{i} = \frac{CN'}{{i}^2}$ and $k_{i} \geq \log(i^2/CN') \geq \log{i}$ for all $i\in [j+1,j']$.
Therefore, a termination string of length $t_i$ contains at most $L=O((N+4\log i)/(3\log i)))=O(N/\log j)$ encodings. Since there are at most $T+1$ termination strings, then there are at most $O(NT/\log{j})$ rounds between $j$ and $j'$ where Alice is the sender. It follows from the structure of $\Pi$ that after every $|\Sigma'|+1$ rounds where Alice is the sender (recall that $\Sigma'$ is the alphabet of $\pi'$, the protocol defined by Theorem \ref{thm:HSV18}) there are exactly $|\Sigma'|+1$ rounds where Bob is the sender. Thus, there are at most $O(NT/\log{j})$ rounds between $j$ and $j'$. Recall that $j\in O(N+T)$ and $j \in \Omega(N)$, hence $j' \le j+O(TL)= O(N+T)+O(NT/\log {N})$.

At the $i$-th round, an encoding of length $3k_i$ is sent over the channel. Note that $k_i \le 2\log{i}+2$ for all $i \in [j+1,j']$. Thus, the number of bits sent between the termination of Alice and the termination of Bob is
\[
\sum_{i=j}^{j'=j+O(TL)}{3k_i} \le \sum_{i=j}^{j'=j+O(TL)}{3(2\log{i}+2)} \leq O(TL)\cdot 3(2\log{j'}+2)\text{.}
\]
It follows that the additional complexity due to termination is,
\begin{align*}
O(TL)\cdot 3(2\log{j'}+2) 
&\le O\left(\frac{NT}{\log{N}} \cdot \log \left(N+T +\frac{NT}{\log{N}}\right)\right)\text{.}
\end{align*}

As one can see, this approach tremendously increases the communication complexity. It can be amended by relaxing the termination condition so that Bob terminates when he sees ``many'' zeros, say, more than 90\% during the last $t_i$ transmissions. By Chernoff, the probability to see that many zeros given that Alice has not terminated and given any fixed noise, is exponentially small. See, e.g., \cite{DHMSY18}.

\end{document}